\newcommand{\typeof}{1} %
\newcommand{\condinc}[2]{\ifthenelse{\equal{\typeof}{0}}{#1}{#2}}
\newcommand{\longshortv}[2]{\ifthenelse{\equal{\typeof}{0}}{#1}{#2}}
\theoremstyle{plain}
\newtheorem{theorem}{Theorem}
\newtheorem{proposition}[theorem]{Proposition}
\newtheorem{lemma}[theorem]{Lemma}
\newtheorem{corollary}[theorem]{Corollary}
\theoremstyle{definition}
\newtheorem{definition}[theorem]{Definition}
\newtheorem{fact}[theorem]{Fact}
\newtheorem{example}[theorem]{Example}
\newtheorem{remark}[theorem]{Remark}
\newenvironment{varitemize}
{
\begin{list}{\labelitemi}
{\setlength{\itemsep}{0pt}
 \setlength{\topsep}{0pt}
 \setlength{\parsep}{0pt}
 \setlength{\partopsep}{0pt}
 \setlength{\leftmargin}{15pt}
 \setlength{\rightmargin}{0pt}
 \setlength{\itemindent}{0pt}
 \setlength{\labelsep}{5pt}
 \setlength{\labelwidth}{10pt}
}}
{
 \end{list} 
}
\newcounter{number}
\newenvironment{varenumerate}
{\begin{list}{\arabic{number}.}
  {
   \usecounter{number}
   \setlength{\labelwidth}{4.0mm}
   \setlength{\labelsep}{2.0mm}
   \setlength{\itemindent}{0.0mm}
   \setlength{\itemsep}{0.0mm}
   \setlength{\topsep}{0.0mm}
   \setlength{\parskip}{0.0mm}
   \setlength{\parsep}{0.0mm}
   \setlength{\partopsep}{0.0mm}
  }
}
{\end{list}}
\newcommand{\tens}{\otimes}
\newcommand{\lin}{\multimap}
\renewcommand{\b}{{}^{\bot}}
\newcommand{\PSIAM}{{\rm\textsf{MSIAM}}}
\newcommand{\SIAM}{{\rm\textsf{SIAM}}}
\newcommand{\ML}{\ensuremath{\mathsf{ML}}}
\newcommand{\PCF}{\ensuremath{\mathsf{PCF}^{\mathsf{LL}}}}
\newcommand{\PCFwo}{\ensuremath{\mathsf{PCF}}}
\newcommand{\MELL}{\textsf{MELL}}
\newcommand{\SMELLY}{\textsf{SMEYLL}}
\newcommand{\one}{\mathsf{1}}
\newcommand{\botlk}{\mathsf{bot}}
\newcommand{\onelk}{\mathsf{one}}
\newcommand{\dlk}{\mathsf{?d}}
\newcommand{\up}{\uparrow}
\newcommand{\down}{\downarrow}
\newcommand{\stable}{\leftrightarrow}
\newcommand{\dr}{\mathrm{dir}}
\newcommand{\atomone}{\alpha}
\newcommand{\formone}{A}
\newcommand{\formtwo}{B}
\newcommand{\bnf}{\mathrel{::=}}
\newcommand{\midd}{\mid}
\newcommand{\posfone}{P}
\newcommand{\negfone}{N}
\newcommand{\netone}{R}
\newcommand{\rednet}{\rightsquigarrow}
\newcommand{\ie}{\emph{i.e.}}
\newcommand{\eg}{\emph{e.g.}}
\newcommand{\ONES}{\mathtt{ONES}}
\newcommand{\DEREL}{\mathtt{DER}}
\newcommand{\DER}{\mathtt{DER}}
\newcommand{\START}{\mathtt{START}}
\newcommand{\PDOORS}{\mathtt{STABLE}}
\newcommand{\id}{\mathtt{Copies}_{\st}}
\newcommand{\POSALL}{\mathtt{POS}} 
\newcommand{\POSI}{\mathtt{INIT}}
\newcommand{\POSF}{\mathtt{FIN}}
\newcommand{\BB}{b}
\newcommand{\bbox}{$\bot$-box}
\newcommand{\st}{\mathbf{T}}
\newcommand{\sttwo}{\mathbf{U}}
\newcommand{\pp}{\mathbf{p}}
\renewcommand{\ss}{\mathbf{s}}
\renewcommand{\b}{{}^{\bot}}
\newcommand{\encode}[2]{\lceil #1, #2 \rceil}
\newcommand{\stk}{\mathord{s}}
\newcommand{\bstk}{\mathord{t}}
\newcommand{\edg}{\mathord{e}}
\newcommand{\emp}{\epsilon}
\newcommand{\M}{\mathcal{M}}
\newcommand{\machine}[1]{\M_{#1}}
\newcommand{\redsiam}{\rightarrow}
\newcommand{\stopsiam}{\nrightarrow}
\newcommand{\sts}{\mathcal{S}}
\newcommand{\trsf}{\mathrm{trsf}}
\newcommand{\orig}{\mathrm{orig}}
\newcommand{\partto}{\rightharpoonup}
\newcommand{\Had}{\mathsf{H}}
\newcommand{\CNOT}{\mathsf{CNOT}}
\newcommand{\tensor}{\otimes}
\newcommand{\bang}{\mathop{!}\nolimits} 
\newcommand{\typetwo}{B}
\newcommand{\typethree}{C}
\newcommand{\bor}{\,|\,}
\newcommand{\PCFpair}[1]{{\langle{#1}\rangle}}
\newcommand{\PCFtt}{{\tt t\!\!t}}
\newcommand{\PCFff}{{\tt f\!\!f}}
\newcommand{\PCFif}[3]{{{\tt if}\,{#1}\,{\tt then}\,{#2}\,{\tt
      else}\,{#3}}}
\newcommand{\PCFletrec}[4]{{{\tt letrec}\,{#1}\,{#2}={#3}\,{\tt in}\,{#4}}}
\newcommand{\PCFarrow}{\multimap}
\newcommand{\PCFprod}{\otimes}
\newcommand{\PCFbang}{{!}}
\newcommand{\PCFletp}[3]{{{\tt let}\,{\langle}{#1}{\rangle}={#2}\,{\tt
      in}\,{#3}}}
\newcommand{\PCFnew}{{\tt new}}
\newcommand{\PCFalpha}{\alpha}
\newcommand{\PCFam}{\textbf{M}}
\newcommand{\PCFentail}{\vdash}
\newcommand{\PCFcbv}{\to}
\newcommand{\PCFreddcbvx}{\mathrel{{\redd}^*}}
\newcommand{\PCFsemcbv}[1]{{#1}^\dagger}
\newcommand{\net}[1]{\PCFsemcbv{#1}}
\newcommand{\PCFAM}{$\mathsf{PCF}_{\mathsf{AM}}$}
\newcommand{\setone}{A}
\newcommand{\elone}{a}
\newcommand{\dstone}{\mu}
\newcommand{\dsttwo}{\rho}
\newcommand{\dists}[1]{\mathit{DST}(#1)}
\newcommand{\supp}[1]{\mathit{SUPP}(#1)}
\newcommand{\zerodst}{\mathbf{0}}
\newcommand{\parsone}{\mathcal{A}}
\newcommand{\redone}{\rightarrow}
\newcommand{\red}{\redone}
\newcommand{\reach}{\mathrel{\looparrowright}}
\newcommand{\redd}{\mathrel{\rightrightarrows}}
\newcommand{\oreach}{\mathrel{{\reach^\circ}}}
\newcommand{\NF}[1]{\mathcal{T}(#1)}
\newcommand{\term}[1]{#1^\circ}
\newcommand{\cont}[1]{\bar #1}
\newcommand{\inputs}[1]{{\tt Inputs}(#1)}
\newcommand{\syncnode}[1]{{\tt SyncNode}(#1)}
\newcommand{\ind}[1]{{\tt ind}_{#1}}
\newcommand{\mapsyncname}[1]{{\tt mkname}_{#1}}
\newcommand{\mem}[1]{\mathbf{m}_{#1}} 
\newcommand{\R}{\mathbf{R}}
\newcommand{\test}{\mathrm{test}}
\newcommand{\upd}{\mathrm{update}}
\newcommand{\mm}{\mathbf{m}}
\newcommand{\prognetsset}{\mathcal{N}}
\newcommand{\ket}[1]{|{#1}\rangle}
\newcommand{\stT}{T}
\newcommand{\stU}{U}
\newcommand{\stTT}{\mathbf{T}}
\newcommand{\stUU}{\mathbf{U}}
\newcommand{\rstT}{\mathcal{T}}
\newcommand{\rstU}{\mathcal{U}}
\newcommand{\stII}{\mathbf{I}}
\newcommand{\stsI}[1]{\sts_{\stII_{#1}}}
\newcommand{\Nat}{\mathbb{N}}
\newcommand{\Finbij}{\rm Perm}
\newcommand{\perm}{\Finbij}
\newcommand{\Msupp}{{\rm supp}}
\newcommand{\Mtest}{{\rm test}}
\newcommand{\Mupdate}{{\rm update}}
\newcommand{\Marity}{{\rm arity}}
\newcommand{\Mswap}{\cdot}
\newcommand{\Mtrue}{{\rm true}}
\newcommand{\Mfalse}{{\rm false}}
\newcommand{\memories}{\mathrm{Mem}} 
\newcommand{\Imem}{\mathcal{I}}
\newcommand{\Pmem}{\mathcal{P}}
\newcommand{\Qmem}{\mathcal{Q}}
\begin{document}
\toappear{}

\setlength{\pdfpageheight}{\paperheight}
\setlength{\pdfpagewidth}{\paperwidth}

\conferenceinfo{CONF 'yy}{Month d--d, 20yy, City, ST, Country}
\copyrightyear{20yy}
\copyrightdata{978-1-nnnn-nnnn-n/yy/mm}
\copyrightdoi{nnnnnnn.nnnnnnn}


\titlebanner{}        
\preprintfooter{short description of paper}   

\title{The Geometry of Parallelism}
\subtitle{Classical, Probabilistic, and Quantum Effects\condinc{}{\\
  \textit{Extended version}}}

\authorinfo{Ugo Dal Lago}
           {Universit\`a di Bologna, Italy \&\\ INRIA, France}
           {ugo.dallago@unibo.it}
\authorinfo{Claudia Faggian}
           {CNRS \& Univ. Paris Diderot, France}
           {faggian@irif.fr}
\authorinfo{Beno\^{i}t Valiron}
           {LRI, CentraleSup\'{e}lec, Univ. Paris-Saclay, France}
           {benoit.valiron@lri.fr}
\authorinfo{Akira Yoshimizu}
           {The University of Tokyo, Japan}
           {yoshimizu@is.s.u-tokyo.ac.jp}

\maketitle

\begin{abstract}
  We introduce a Geometry of Interaction model for higher-order
  quantum computation, and prove its \emph{adequacy} for \emph{a fully
    fledged quantum programming language} in which entanglement,
  duplication, and recursion are all available.   This model is an instance of a new framework  which captures not only quantum but also
  classical and \emph{probabilistic} computation. Its main feature is the
  ability to model \emph{commutative effects} in a \emph{parallel}
  setting. Our model comes with a
  multi-token machine, a proof net system, and a \PCFwo-style
  language. Being based on a multi-token machine equipped with a
  memory, it has a concrete nature which makes it well suited for
  building low-level operational descriptions of higher-order
  languages.
\end{abstract}

\category{F.3.2}{Semantics of Programming Languages}{Algebraic
  approaches to semantics}


\keywords
Geometry of Interaction, memory structure, quantum, probabilistic, PCF

\section{Introduction}
In classical computation, information is deterministic, discrete and
freely duplicable.  Already from the early days~\cite{ShannonShapiro},
however, determinism has been relaxed by allowing state evolution to
be probabilistic. The classical model has then been further challenged
by quantum computation~\cite{NielsenChuang}, a computation paradigm
which is based on the laws of quantum mechanics.

Probabilistic and quantum computation are both justified by the very
efficient algorithms they give rise to: think about Miller-Rabin
primality test~\cite{miller-primality,rabin-primality}, Shor's
factorization~\cite{shor} but also the more recent algorithms for
quantum chemistry~\cite{gse} or for solving linear systems of
equations~\cite{qls}.
 Finding out a way to conveniently
\emph{express} those algorithms without any reference to the
underlying hardware, is then of paramount importance. 

This has stimulated research on programming languages for
probabilistic~\cite{Koz81,Plo82} and quantum
computation (see \cite{GaySurvey} for a survey), and recently on higher-order functional
languages~\cite{SelingerValiron, quipper, PaganiSV14}.  The latter has
been epitomized by variations and extensions of the
$\lambda$-calculus. In order to allow compositional reasoning, it is
important to give a denotational semantics to those languages; maybe
surprisingly, a large body of works in this direction is closely
connected to denotational and interactive models of Linear Logic~\cite{girard87}, in
the style of Game Semantics~\cite{AbramskyJM00,HylandO00} and the
Geometry of Interaction~\cite{Girard89}.

The case of quantum computation is emblematic. The first adequate
denotational model for a quantum programming language \emph{\`a la} \PCFwo, only
two years old~\cite{PaganiSV14}, marries a categorical construction
for the exponentials of linear
logic~\cite{lafont-phd,mellies-panoramas} to a suitable extension of
the standard model of quantum computation: the category of completely
positive maps~\cite{selinger04}. The development of an \emph{interactive}
semantics has proved to be highly nontrivial, with results which are
impressive but not yet completely satisfactory. In particular, the
underlying language either does not properly reflect
entanglement~\cite{hasuo11,Delbecque,DelbecquePanangaden}, a key
feature of quantum computation, or its expressive power is too weak,
lacking recursion and duplication
\cite{DalLagoZorziLINEARITY,LICS2014}. The main reason for this
difficulty lies in the inherent non-locality of
entanglement~\cite{NielsenChuang}.

In this paper we show that Girard's Geometry of Interaction (GoI)
indeed offers the right tools to deal with a fully fledged quantum
programming language in which duplication and full recursion are
available, when we equip GoI with an external quantum memory, a standard
technique for operational models of quantum
$\lambda$-calculi~\cite{SelingerValiron}.

We go further: the approach we develop is not specific to quantum
computation, and our quantum model is introduced as an instance of a new framework
which models \emph{choice effects} in a parametric way, via a {\em
  memory structure}.  The memory structure comes with three
operations: (1) \emph{allocation} of fresh addresses in the memory,
(2)\emph{ low-level actions} on the memory, and (3) \emph{choice}
based on the value of the memory at a given address. The notion of
memory structure is flexible, the only requirement being
commutativity of the operations. In Sec.~\ref{sec:exemples} we show
that different kinds of choice effects can be systematically treated:
classical, probabilistic and quantum memory are all instances of this
general notion. Therefore, the memory makes the model suitable to
interpret classical, probabilistic and quantum functional programs. In particular, in 
the case of quantum memory, a low-level action is an application of
unitary gate to the  memory, while the choice performs a
quantum measure.

The GoI model we give has a very concrete nature, as it consists of a
class of token machines~\cite{DanosRegnier, Mackie95}.  Their
distinctive feature is to be {\em parallel} and
\emph{multi-token}~\cite{LICS2014,lics2015} rather than
\emph{single-token} as in classic token machines~\cite{DanosRegnier}.
Being multi-token means that different computational threads can
interact with each other and synchronize (think of this as a
multi-player game, where players are able to collaborate and exchange
information). The presence of multiple tokens allows to appropriately
reflect non-locality in a quantum setting, but also to generally  deal with
\emph{parallelism} and \emph{choice effects} in a satisfactory way. We
discuss why this is the case when we concretely present the machine
(Sec. \ref{PSIAM}).

Finally, to deal with the combination of parallelism, probabilistic
side-effects and non-termination, we  develop a general notion of
PARS, {\em probabilistic abstract rewrite system}. The results we
establish on PARS are the key ingredient in the Adequacy proofs, but are  also of independent interest. The
issues at sake are non-trivial and we discuss them in the dedicated
Sec.~\ref{sec:par-cong-term}.
\paragraph*{Contributions.}
We present a Geometry of Interaction (GoI) model for higher-order quantum
computation, which is adequate for a quantum programming language in
which entanglement, duplication, and recursion are all available. Our
model comes with a multi-token machine, a proof net system, and a
\PCFwo-style language.
%
More specifically, this paper's contributions can be summarized
as follows:
\begin{varitemize}
\item 
  we equip GoI with the ability to capture
  \emph{choice effects} using a parametric notion of memory
  structure (Sec.~\ref{sect:alg-mem});
\item 
  we show that the notion of memory structure is able to capture
  classical, probabilistic and quantum effects (Sec.~\ref{sec:exemples});
\item 
  we introduce a construction which is \emph{parametric} on the
    memory, and produces a class of multi-token machines
  (Sec.~\ref{PSIAM}), proof net systems (Sec.~\ref{sec:prog-net}) and
  \PCFwo-style languages (Sec.~\ref{sect:PCF}). We prove that
  (regardless of the specific memory) the multi-token machine is an
  adequate model of nets reduction (Th.~\ref{adequacy psiam}), and the nets an adequate model of
  \PCFwo\ term rewriting (Th.~\ref{th:adeqcbv});
\item 
  we develop a general notion of parallel abstract rewrite system,
  which allows us to deal with the combination of parallelism and
  probabilistic choice in an infinitary setting
  (Sec.~\ref{pars}).
\end{varitemize}
Being based on a multi-token machine associated to a memory, our model
has a concrete nature which makes it well suited to build low-level
operational descriptions of higher-order programming languages. In the remainder
of this section, we give an informal overview of various aspects of
our framework, and motivate with some examples the significance of our
contribution. 

\condinc{
An extended version of this paper with proofs and more
details is available online \cite{LVarxiv}.
}{This report is an extended version of \cite{popl17}.}
%
%

\subsection{Geometry of Interaction and Quantum Computation}
Geometry of Interaction is interesting as semantics for programming
languages \cite{phdmackie,Mackie95,GhicaSS11} because it is a
high-level semantics which at the same time is close to low-level
implementation and has a clear operational flavor.  Computation is
interpreted as a flow of information circulating around a network,
which essentially is a representation of the underlying program.
Computational steps are broken into low-level actions of one or more
tokens which are the agents carrying the information around.  A long
standing open question is whether fully fledged higher-order quantum
computation can be modeled operationally via the Geometry of Interaction.

\subsubsection{Quantum Computation}
\label{sec:quantum-computation}
As comprehensive references can be found in the
literature~\cite{NielsenChuang}, we only cover the very few concepts
that will be needed in this paper. Quantum computation deals with {\em
  quantum bits} rather than bits.
The state of a quantum system can be represented with a density matrix
to account for its probabilistic nature. However for our purpose we
shall use in this paper the usual, more operational, non-probabilistic
representation.
Single quantum bits (or {\em qubits}) will thus be represented by a
ray in a two-dimensional complex Hilbert space, that is, an
equivalence class of non-zero vectors up to (complex) scalar
multiplication. Information is attached to a qubit by choosing an
orthonormal basis $(\ket{0},\ket{1})$: a qubit is a superposition of
two classical bits (modulo scalar multiplication). If the state of
several bits is represented with the product of the states of single
bits, the state of a multi-qubit system is represented with the {\em
  tensor product} of single-qubit states. In particular, the state of
an $n$-qubit system is a superposition of the state of an $n$-bit
system.
We consider superpositions to be normalized.

Two kinds of operations can be performed on qubits. First, one can
perform reversible, \emph{unitary gates}: they are unitary maps in the
corresponding Hilbert space. A more exotic operation is the
\emph{measurement}, which is the only way to retrieve a classical bit out of a
quantum bit. This operation is probabilistic: the probabilities depend
on the state of the system. Moreover, it modifies the state of the
memory. Concretely, if the original memory state is
$\alpha_0\ket0\otimes\phi_0 + \alpha_1\ket1\otimes\phi_1$ (with
$\phi_0$ and $\phi_1$ normalized), measuring the first qubit would
answer $x$ with probability $|\alpha_x|^2$, and the memory is turned
into $\ket{x}\otimes\phi_x$. Note how the measurement not only
modifies the measured qubit, but also collapses the global state of the
memory.

The effects of measurements are counterintuitive especially in {\em
  entangled} system: consider the 2-qubit system
$\frac{\sqrt2}2(\ket{00} + \ket{11})$. This system is entangled,
meaning that it cannot be written as $\phi\otimes\psi$ with $1$-qubit
states $\phi$ and $\psi$. One can get such a system from the state
$\ket{00}$ by applying first an {\em Hadamard gate} $\Had$ on the
second qubit, sending $\ket{0}$ to $\frac{\sqrt2}2(\ket0+\ket1)$ and
$\ket1$ to $\frac{\sqrt2}2(\ket0-\ket1)$, therefore getting the state
$\frac{\sqrt2}2(\ket{00} + \ket{01})$, and then a $\CNOT$ ({\em
  controlled-not}) gate, sending $\ket{xy}$ to
$\ket{x\oplus{}y}\otimes\ket{y}$. Measuring the first qubit will
collapse the entire system to $\ket{00}$ or $\ket{11}$, with equal
probability $\frac12$.

\begin{remark}\label{rem:nonlocal}
  Notwithstanding the global collapse induced by the measurement, the
  operations on physically disjoint quantum states are
  commutative. Let $A$ and $B$ be two quantum states. Let $U$ act on
  $A$ and $V$ act on $B$ (whether they are unitaries, measurements, or
  combinations thereof). Consider now $A\otimes B$: applying $U$ on
  $A$ then $V$ on $B$ is equivalent to first applying $V$ on $B$ and
  then $A$ on $U$. In other words, the order of actions on physically
  separated quantum systems is irrelevant. We use this property in
  Sec.~\ref{sec:quantum-memory}.
\end{remark}

\subsubsection{Previous Attempts and Missing Features}
\label{sec:previous-attempts}\label{related_quantum}
A first proposal of Geometry of Interaction for quantum computation is
\cite{hasuo11}. Based on a purely categorical
construction~\cite{Abramsky}, it features duplication but {not}
general entanglement: entangled qubits cannot be separately acted
upon. As the authors recognize, a limit of their approach is that
their GoI is single-token, and they already suggest that using several
tokens could be the solution.

\begin{example}\label{ex:entangled}
As an example, if $S=\frac{\sqrt2}2(\ket{00} + \ket{11})$, the term
\begin{equation} \label{eq:entangled}
  {\tt let~} x\otimes y = S{\tt~in~}(U x) \otimes (V y)
\end{equation}
cannot be represented in \cite{hasuo11}, because it is not
possible to send entangled qubits to separate parts of the
program.
\end{example}
A more recent proposal \cite{LICS2014}, which introduces an
operational semantics based on {\em multi-tokens}, can handle general
entanglement.  However, it does neither handle duplication nor
recursion. More than that, the approach relies on termination to
establish its results, which therefore do not extend to an infinitary
setting: it is not enough to ``simply add'' duplication and fix points.

\begin{example}\label{ex:recursion}
In \cite{LICS2014} it is not possible to simulate the program that
tosses a coin (by performing a measurement), returns a fresh qubit on
head and repeats on tail. In mock-up \ML, this program becomes
\[
  \PCFletrec{f}{x}{(\PCFif{x}{\PCFnew}{f\,(\Had\,\PCFnew}))}{(f\,(\Had\,\PCFnew))}
\]
where {\tt new} creates a fresh qubit in state $\ket0$ and where the
{\tt if} test performs a measurement on the qubit $x$. Note how the
measure of $\Had\,{\PCFnew}$ amounts to tossing a fair coin:
$\Had\,{\PCFnew}$ produces $\frac{\sqrt2}2(\ket0{+}\ket1)$.  Measuring
gives $\ket0$ and $\ket1$ with probability $\frac12$.
\end{example}
Example~\ref{ex:recursion} will be our leading example all along the
paper. Furthermore, we shall come back to both examples in
Sec.~\ref{sec:discussion}.

\subsection{Parallel Choices: Confluence and Termination}
\label{sec:par-cong-term}\label{sec:intro_conf}
When dealing with both probabilistic choice and infinitary
reduction, parallelism makes the study of confluence and convergence
highly non-trivial.
The issue of confluence arises as soon as choices and duplication are
both available, and non-termination adds to the challenges. Indeed, it
is easy to see how tossing a coin and duplicating the result does not
yield the same probabilistic result as tossing twice the coin.
To play  with this, let us take for  example  the following  term of the probabilistic
$\lambda$-calculus  \cite{EhrhardTassonPagani}:~ 
$
M=(\lambda x.x\;\mathtt{xor}\;x)((\mathtt{tt}\oplus\mathtt{ff})\oplus\Omega)
$
where $\mathtt{tt}$ and $\mathtt{ff}$ are boolean constants, $\Omega$
is a divergent term, $\oplus$ is the choice operator (here, tossing a
fair coin), and $\mathtt{xor}$ is the boolean operator computing the
exclusive or. Depending on which of the two redexes we fire first, $M$
will evaluate to either the distribution
$\{\mathtt{ff}^{\frac{1}{2}}\}$ or to the distribution
$\{\mathtt{tt}^{\frac{1}{8}},\mathtt{ff}^{\frac{1}{8}}\}$.  In
ordinary, deterministic \PCFwo, any program of boolean type may or may
not terminate, depending on the reduction strategy, but its normal
form, if it exists, is unique.  This is not the case for our 
probabilistic term $M$: depending on the choice of the redex, it
evaluates to two distributions which are simply not comparable.

In the case of probabilistic $\lambda$-calculi, the way-out to this  is to fix
a reduction strategy; the issue however is not only in the syntax, it
appears---at a more fundamental  level---also in the model. This is the case for
\cite{EhrhardTassonPagani}, where the model itself does not support
parallel probabilistic choice.  Similarly, in the development of a
Game Semantics or Geometry of Interaction model for probabilistic
$\lambda$-calculi, the standard approach has been to use a polarized
setting, so to impose a strict form of
sequentiality~\cite{danosharmer,HoshinoMH14}.
If instead we  choose to have parallelism in  \emph{the model}, confluence is
not granted and even the \emph{definition} of convergence is
non-trivial.

In this paper we propose   a probabilistic  model that is
{\em infinitary and parallel but confluent}; to achieve this, in Sec.~\ref{pars} we develop  some 
results which are general to any probabilistic abstract rewrite system, and which to our knowledge are  novel.
 More specifically, we
provide sufficient conditions for  an infinitary probabilistic system to be confluent and to
satisfy a property which is a probabilistic analogous of the familiar
``weak normalization implies strong normalization''. 
We then show that the parametric
models which  we introduce (both the proof nets and the multi-token machine) satisfy this property; this  is indeed what  ultimately  grants  the 
adequacy results.

\subsection{Overview of the Framework, and Its Advantages}\label{overview}
A quantum program has on one hand features which are specific to quantum computing, and on the other hand standard constructs.  This is indeed 
the case for many paradigmatic languages; analyzing the features
separately is often very useful.  Our framework clearly separates
(both in the language and in its operational model) the constructs
which are common to all programming languages (e.g. recursion)
and the features which are specific to some of them (e.g. measurement or  probabilistic choice). The former is captured by a {\em fixed operational core}, the
latter is encapsulated within a {\em memory structure}. This  
 approach   has two
distinctive advantages:
\begin{varitemize}
\item 
  \emph{Facilitate Comparison between Different Languages}: clearly
  separating in the semantics the standard features from the ``notions
  of computation'' which is the specificity of the language, allows for
  an easier comparison between different languages.
\item 
  \emph{Simplify the Design of a Language with Its Operational Model}:
  it is enough to focus on the memory structure which encapsulates the
  desired effects. Once such a memory structure is given, the construction
  provides an adequate Geometry of Interaction model for a \PCFwo-like
  language equipped with that memory.
\end{varitemize}
%
Memory structures are defined in Sec.~\ref{sect:alg-mem}, while the 
operational core is based on Linear Logic: a
\emph{linearly-typed} \PCFwo-like language, 
\emph{Geometry of Interaction}, and its syntactical counterpart,
\emph{proof nets}. Proof nets are a graph-based formal system that
provides a powerful tool to analyze the execution of terms as a
rewriting process which is mostly parallel, local, and asynchronous.

More in detail, our framework consists  of:
\begin{varenumerate}
\item
  a notion of\emph{ memory structure}, whose operations are suitable
  to capture a range of choice effects;
\item
  an operational core, which is articulated in the \emph{three {base}
    rewrite systems} (a proof net system, a GoI multi-token machine,
  and a \PCFwo-style language);
\item
  a construction which is \emph{parametric on the memory}, and lifts
  each {base} rewrite system into a more expressive operational
  system. We respectively call these systems: program nets, \PSIAM\ machines and
  \PCFAM{} abstract machines.
\end{varenumerate}
Finally, the three forms of
systems  are \emph{all related by adequacy results}.
As long as the memory operations satisfy commutativity, the
construction produces an adequate GoI model for the corresponding \PCFwo\
language.  More precisely, we prove--- again {\em parametrically on the
memory}---that the \PSIAM\ is an adequate model of program net
reduction (Th.~\ref{adequacy psiam}), and program nets are
expressive enough to adequately represent the behavior of the \PCFwo{}
language (Th.~\ref{th:adeqcbv}).

\subsection{Related Work}
The low-level layer of our framework can be seen as a generalization
and a variation of systems which are in the literature. The nets and
multi-token machine we use are a variation of \cite{lics2015}, the
linearly typed \PCFwo\ language is the one in~\cite{PaganiSV14} (minus
lists and coproducts).  What we add in this paper are the right tools
to deal with challenges like probabilistic parallel reduction and
entanglement. Neither quantum nor probabilistic choice can be treated
in \cite{lics2015}, because of the issues we clarified in Sec.
\ref{sec:intro_conf}.  The specificity of our proposal is
really its ability to deal with \emph{choice} together with
\emph{parallelism}.

We already discussed previous attempts to give a GoI model of quantum
computation, and their limits, in Sec. \ref{related_quantum}
above. Let us quickly go through other models of quantum
computation. Our parametric memory is presented equationally:
equational presentations of quantum memory are common in the
literature~\cite{staton,oxford}. Other models of quantum memories are
instead based on Hilbert spaces and completely positive maps, as
in~\cite{PaganiSV14,selinger04}. In both of these approaches, the
model captures with precision the structure and behavior of the
memory. Instead, in our setting, we only consider the interaction
between the memory and the underlying computation by a set of
equations on the state of the memory at a given address, the
allocation of fresh addresses, and the low-level actions.

Finally, taking a more general perspective, our proposal is by no
means the first one to study effects in an interactive setting.
Dynamic semantics such as GoI and Game Semantics are gaining interest
and attention as semantics for programming languages because of their
operational flavor.  \cite{Ghica07,hasuomuroya,HoshinoMH14} all deal
with effects in GoI.  A common point to all these works is to be
single-token.  While our approach at the moment only deals with choice
effects, we indeed deal with \emph{parallelism}, a challenging feature
which was still missing.

\section{PARS: Probabilistic Abstract Reduction Systems}\label{sect:confluence}\label{pars}\label{sect:prelim}

Parallelism allows critical pairs; as we observed in
Sec. \ref{sec:intro_conf}, firing different redexes will produce different
distributions and can lead to possibly very different results. Our parallel
model however enjoys a property similar to the diamond property
of abstract rewrite systems. Such a property entails a number of
important consequences for confluence and normalization,
and these results in fact are general to any probabilistic abstract
reduction system. 
In particular, we define what we mean by strong and weak normalization
in a probabilistic setting, and we prove that \emph{a suitable
  adaptation} of the diamond property guarantees confluence and a form
of \emph{uniqueness of normal forms}, not unlike what happens in the
deterministic case.  Th.~\ref{th:proba_term} is the main result of
the section.

In a probabilistic context, spelling out the diamond property requires
some care. We will introduce a strongly controlled notion of reduction
on distributions ($\redd$). The need for this control has the same
roots as in the deterministic case: please recall that strong
normalization follows from weak normalization by the diamond property
%
(~$b \leftarrow a \rightarrow c \Rightarrow b=c \lor \exists d (
b \rightarrow d \leftarrow c ) $~) but \emph{not} from
subcommutativity (~$ b \leftarrow a \rightarrow c \Rightarrow \exists
d ( b \rightarrow^= d \leftarrow^= c $)~) which appears very similar,
but ``leaves space'' for an infinite branch.

\subsection{Distributions and PARS}\label{sect:distrpars}
We start by  setting the  basic definitions.
Given a set $\setone$, we note $\dists{\setone}$ for the set
of \emph{probability distributions} on $\setone$: any
$\dstone\in\dists{\setone}$ is a function from $\setone$ to $[0,1]$
such that $\sum_{\elone\in\setone}\dstone(\elone)\leq 1$. A distribution $\dstone$ is \emph{proper} if
$\sum_{\elone\in\setone}\dstone(\elone)=1$. 
\condinc{}{The distribution assigning
$0$ to each element of a set $\setone$ is indicated with
$\zerodst$. } We indicate with $\supp{\dstone}$
the support of a distribution $\dstone$, \ie\ the subset of
$\setone$ whose image under $\mu$ is not $0$.
On  $\dists{\setone}$, we  define the relation $\subseteq$  point-wise:  $\dstone \subseteq \dsttwo$ if
$\dstone (a) \leq \dsttwo (a)$ for each $a\in A$.

A \emph{probabilistic abstract reduction system (PARS)} is a pair
$\parsone=(\setone,\redone)$ consisting of a set $\setone$ and a
relation ${\redone}\subseteq{\setone\times\dists{\setone}}$ (rewrite
relation, or reduction relation) such that for each $(a,\dstone)\in{\redone}$,
$\supp{\dstone}$ is finite.  We write $a \redone \mu$ for
$(a,\mu)\in{\redone}$. An element $a\in \setone$ is \emph{terminal} or
in \emph{normal form} (w.r.t. $\redone$) if there is no $\mu$ with
$a\redone \mu$, which we write $a \not \redone$.

We can partition any distribution $\mu$ into a distribution
$\mu^\circ$ on terminal elements, and a distribution $\bar\mu$ on
elements for which there exists a reduction, as follows:
$$
\mu^\circ(a)=
  \left\{
    \begin{array}{ll}
       \mu(a)&\mbox{if $a\not\redone$,}\\  
       0&\mbox{otherwise;}
    \end{array}
  \right.
\qquad\qquad
\bar\mu(a)=\mu(a)-\mu^\circ(a).
$$
The \emph{degree of termination} of 
$\dstone$, written $\NF{\dstone}$, is 
$\sum_{a\in\setone}\dstone^\circ(a)$.


We write $\reach$ for the \emph{reflexive and transitive closure} of
$\redone$, namely the smallest subset of
$\setone\times\dists{\setone}$ closed under the following rules:
\condinc{
if
$a\redone\mu$ then $a\reach \mu$; we always have $a \reach \{a^1\}$;
whenever $a\reach \mu + \{b^p\}$, $b\reach \rho$ and
$b\notin \supp{\mu}$ we have $a\reach \mu + p\cdot \rho$. }
{$$
\infer{a\reach\mu}{a\redone \mu}\qquad 
\infer{a \reach \{a^1\}}{}  \qquad 
\infer{a\reach \mu + p\cdot \rho}{a\reach \mu + \{b^p\} & b\reach \rho & b\notin \supp{\mu}}
$$}
 We read
$a\reach \mu$ as \emph{``$a$ reaches $\mu$''}.


\noindent\textbf{The Relation $\pmb{\redd}$.}
In order to extend to PARS classical results on termination for
rewriting systems, we  define the  binary relation
$\redd$, which lifts the notion of one
step reduction to distributions: we require that \emph{all} non-terminal elements are indeed
reduced. The relation
$ {\redd} \subseteq \dists\setone\times\dists{\setone}$ is defined as
$$          
\infer
    {\mu \redd \term{\mu} + \sum_{ a\in\supp{\cont{\dstone}}}\dstone(a)\cdot\dsttwo_a}
    {\mu =  \term{\mu} +  \cont{\mu}  &   \{a \redone \dsttwo_a\}_{ a\in\supp{\cont{\dstone}}} }\,.
    $$
    Please note that in the derivation above, we require
    $a \redone \dsttwo_a$ \emph{for each} $a\in\supp{\cont{\dstone}}$.
    Observe also that $\term{\mu} \redd \term{\mu}$
    since $\supp{\cont{{\term{\mu}}}} = \emptyset$.

    We write $\mu \redd^n \rho$ if $\mu$ reduces to $\rho$ in $n\geq 0$
steps; we write $\mu \redd^* \rho$ if there is any \emph{finite}
sequence of reductions from $\mu$ to $\rho$.


With a slight abuse of notation, in the rest of the paper we sometime
write  $\{a\}$   for $\{a^1\}$,  or simply  $a$ when clear from the context. 
As an example, we  write $a \redd \mu$ for $\{a^1\} \redd \mu$.
Moreover, the distribution $\{a_1^{p_1},\ldots,a_n^{p_n}\}$ will be
often indicated as $\sum p_i\cdot\{a_i\}$ thus facilitating algebraic
manipulations.

\subsection{Normalization and Confluence}
In this subsection, we look at  normalization and confluence
in the probabilistic setting, which we introduced  in Sec. \ref{sect:distrpars}.
We need to distinguish between \emph{weak} and \emph{strong} normalization.
The former refers to the \emph{possibility} to reach normal forms following
any reduction order, while the latter (also known as \emph{termination}, see~\cite{ARS}) refers to the \emph{necessity}
of reaching normal forms. In both cases, the concept is inherently quantitative.
\begin{definition}[Weak and Strong Normalization]\label{def:normalization}
Let $p\in [0,1]$ and let $\mu\in\dists{\setone}$ Then:
\begin{varitemize}
\item 
   $\mu$ \emph{weakly $p$-normalizes} (or weakly normalizes with probability at least $p$) if there exists $\rho$ such
   that $\mu \redd^* \rho$ and $\NF{\rho}\geq p$.
\item 
  $\mu$ \emph{strongly $p$-normalizes}
   (or strongly normalizes with probability at least $p$) 
    if there exists $n$ such that
$\mu \redd^n \rho$ implies $\NF{\rho}\geq p$, for all $\rho$. 
\end{varitemize}
The relation $\redone$ is said \emph{uniform} if for each $p$, and
each $\mu \in \dists{\setone}$, weak $p$-normalization implies strong
$p$-normalization. 

\condinc{}{Following ~\cite{ARS}, we will also use the term  \emph{$p$-termination} for  \emph{strong}  $p$-normalization, and  refer to weak $p$-normalization as simply $p$-normalization.}
\end{definition}
Even the mere notion of convergent computation must be made
quantitative here:
\begin{definition}[Convergence]\label{def:limit} 
The distribution $\mu\in\dists{\setone}$  \emph{converges with probability $p$}, written $\mu \Downarrow_p$, if
 $p = {\sup_{\mu \redd^* \rho}\NF{\rho}}.$
\end{definition}
Observe that for every $\mu$ there is a unique probability $p$ such that
$\mu\Downarrow_p$.  Please also observe how Definition \ref{def:limit} is taken
over all $\rho$ such that $\mu \redd^* \rho$, thus being forced to
take into account all possible reduction orders. If $\red$ is uniform,
however, we can reach the limit along \emph{any} reduction order:
\begin{proposition}\label{prop:equiv}
Assume $\red$ is uniform. Then for every $\mu$ such that
$\mu \Downarrow_p$ and for every sequence of distributions ${(\rho_n)_n}$
such that $\mu= \rho_0$ and $\rho_{n}\redd \rho_{n+1}$ for every $n$,
it holds that $p=\lim_{n\to \infty} \NF{\rho_n}$.
\qed
\end{proposition}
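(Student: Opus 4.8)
The plan is to show that both $\limsup_n \NF{\rho_n}$ and $\liminf_n \NF{\rho_n}$ coincide with $p$. The inequality $\limsup_n \NF{\rho_n} \le p$ is immediate: each $\rho_n$ satisfies $\mu \redd^* \rho_n$ (since $\mu = \rho_0 \redd \rho_1 \redd \cdots \redd \rho_n$), so by Definition~\ref{def:limit} we have $\NF{\rho_n} \le \sup_{\mu \redd^* \sigma} \NF{\sigma} = p$ for every $n$; hence the limit, if it exists, is at most $p$, and certainly $\limsup_n \NF{\rho_n} \le p$. It remains to prove $\liminf_n \NF{\rho_n} \ge p$, together with the existence of the limit; the latter follows once we know that the sequence $(\NF{\rho_n})_n$ is monotone nondecreasing, which I would record first as a small lemma: if $\sigma \redd \sigma'$ then $\NF{\sigma} \le \NF{\sigma'}$, because the terminal mass $\term{\sigma}$ is carried unchanged into $\sigma'$ (by the defining rule for $\redd$, $\sigma' = \term{\sigma} + \sum_{a\in\supp{\cont\sigma}} \sigma(a)\cdot \rho_a$, and each $\rho_a$ can only add further mass, never remove the already-terminal part). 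So $(\NF{\rho_n})_n$ is nondecreasing and bounded by $p$, hence convergent to some $q \le p$.

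For the reverse inequality $q \ge p$, fix an arbitrary $\varepsilon > 0$. By Definition~\ref{def:limit} there is a distribution $\sigma$ with $\mu \redd^* \sigma$ and $\NF{\sigma} \ge p - \varepsilon$. Since $\mu \redd^* \sigma$ is a finite reduction sequence witnessing that $\mu$ weakly $(p-\varepsilon)$-normalizes, uniformity of $\red$ gives that $\mu$ \emph{strongly} $(p-\varepsilon)$-normalizes: there exists $n_0$ such that for \emph{all} $\tau$ with $\mu \redd^{n_0} \tau$ we have $\NF{\tau} \ge p - \varepsilon$. In particular $\NF{\rho_{n_0}} \ge p - \varepsilon$, and by monotonicity $q = \lim_n \NF{\rho_n} \ge \NF{\rho_{n_0}} \ge p - \varepsilon$. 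As $\varepsilon$ was arbitrary, $q \ge p$, so $q = p$ and $\lim_{n\to\infty} \NF{\rho_n} = p$, as required.

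The only delicate point — and the step I expect to be the real obstacle — is the invocation of uniformity in the form just used: strong $p$-normalization as defined (Definition~\ref{def:normalization}) quantifies over \emph{all} $\rho$ with $\mu \redd^n \rho$, so I must be sure that any particular chain $\rho_0 \redd \rho_1 \redd \cdots$ is an instance of "some $\rho$ reached in $n_0$ steps", which it is by taking $\rho = \rho_{n_0}$; the subtlety is merely that $\redd$ need not be deterministic, so different chains reach different $\rho$'s, but the strong-normalization bound holds uniformly across all of them, which is exactly what we need. I would also double-check the boundary/degenerate cases: if some $\rho_n$ is already terminal-saturated (all of $\supp{\rho_n}$ in normal form) then $\rho_n \redd \rho_n$ and the sequence is eventually constant, consistent with the argument; and $p$ is well-defined and unique by the remark immediately preceding the proposition, so no ambiguity arises there.
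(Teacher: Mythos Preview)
Your argument is correct. The paper itself omits the proof entirely (the proposition is stated with a bare \qed), so there is no approach to compare against; your route via monotonicity of $\NF{\rho_n}$ along $\redd$, the upper bound from the definition of $\Downarrow_p$, and the lower bound from uniformity applied to the weak $(p-\varepsilon)$-normalization witness is exactly the natural proof one would expect.
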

\condinc{}{\begin{remark}
  Observe that because of Prop. \ref{prop:equiv}, ${\sup_{\mu \redd^*
      \xi}\NF{\rho}}={\sup_{\mu \reach \rho}\NF{\rho}}$.
\end{remark}}
A PARS is said to be confluent iff $\redd$ is a confluent relation
in the usual sense:
\begin{definition}[Confluence]
The PARS $(\setone,\redone)$ is said to be \emph{confluent} if whenever
$\tau \redd^* \mu$ and $\tau \redd^* \xi$, there exists $\rho$ such
that $\mu\redd^* \rho$ and $\xi\redd^* \rho$.
\end{definition}

\subsection{The Diamond Property in a Probabilistic Scenario}
In this section we study a property which guarantees confluence and
uniformity.

\begin{definition}[Diamond Property for PARS]\label{base}
   A PARS $(A,\redone)$  \emph{satisfies the diamond
  property} if the following holds.
   Assume  $\mu\redd \nu$
  and $\mu \redd \xi$. Then~~(1)
  $\term{\nu} = \term{\xi} $
  and ~~(2) $\exists \rho$ with  $\nu\redd\rho$ and $\xi\redd\rho$.
\end{definition}

As an immediate consequence:
\begin{corollary}[Confluence]
If $(A,\red)$  satisfies  the diamond property, then $(A,\red)$ is
confluent. \qed
\end{corollary}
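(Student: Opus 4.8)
The plan is to derive confluence of $\redd$ directly from the diamond property by the standard tiling argument, taking care that the probabilistic bookkeeping in the definition of $\redd$ does not interfere. First I would observe that the diamond property as stated gives exactly a one-step-to-one-step local confluence with a common \emph{one-step} successor: if $\mu \redd \nu$ and $\mu \redd \xi$ then there is $\rho$ with $\nu \redd \rho$ and $\xi \redd \rho$. This is the genuine diamond (not mere subcommutativity), and part (1) of Definition~\ref{base}, $\term{\nu} = \term{\xi}$, ensures the two branches have already agreed on the terminal mass, so the common successor $\rho$ is consistent. The goal is to upgrade this to: whenever $\tau \redd^* \mu$ and $\tau \redd^* \xi$, there is $\rho$ with $\mu \redd^* \rho$ and $\xi \redd^* \rho$.

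The key step is the usual induction on the lengths of the two reduction sequences. I would first prove the ``strip lemma'': if $\tau \redd \mu$ (one step) and $\tau \redd^n \xi$ ($n$ steps), then there exists $\rho$ with $\mu \redd^n \rho$ and $\xi \redd \rho$ (one step). This follows by induction on $n$, at each stage closing a single diamond supplied by Definition~\ref{base} and feeding the new vertex into the induction hypothesis; the rectangle of diamonds closes because every $\redd$-step from a distribution is \emph{forced} (all non-terminal elements must be reduced), so there is no ambiguity about what ``the next step'' does beyond the choice of the $\dsttwo_a$'s, and the diamond property accommodates exactly that choice. Then, given the strip lemma, the full confluence statement follows by a second induction, this time on the length of the \emph{first} sequence $\tau \redd^* \mu$: apply the strip lemma to push each step of the first sequence across the whole second sequence, assembling an $m \times n$ grid of diamonds whose far corner is the desired $\rho$.

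The main obstacle I anticipate is purely notational rather than conceptual: the relation $\redd$ acts on $\dists{\setone}$, not on $\setone$, and unfolding a single $\redd$-step already hides a family $\{a \redone \dsttwo_a\}_{a \in \supp{\cont{\mu}}}$, so one must be careful that the $\rho$ produced when closing a diamond is itself a legitimate distribution and that iterating the construction keeps the total mass bounded by $1$ and the supports finite. Because $\redd$ is mass-preserving on the non-terminal part (the terminal part only accumulates) and each $a \redone \dsttwo_a$ has finite support, these invariants are maintained automatically, but it is worth stating them explicitly. Since the diamond property has been assumed as a hypothesis, no further structural analysis of the PARS is needed, and the corollary is immediate once the strip lemma is in place — hence the one-line ``\qed'' in the statement is justified.
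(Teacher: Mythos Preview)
Your argument is correct, and the strip-lemma/tiling route is the standard one; the paper simply writes ``As an immediate consequence'' and gives no proof, so in that sense you are supplying exactly the omitted details.

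One remark worth making: you are working harder than necessary. Part~(2) of Definition~\ref{base} is \emph{literally} the classical diamond property for the binary relation $\redd$ on the set $\dists{\setone}$, with no probabilistic content whatsoever. Once you view $(\dists{\setone},\redd)$ as an ordinary ARS, the classical result ``diamond $\Rightarrow$ confluence'' applies verbatim, and none of the bookkeeping you worry about (mass preservation, finite supports, the structure of a $\redd$-step as a family $\{a\redone\rho_a\}$) enters the argument at all. Part~(1) of Definition~\ref{base} is likewise irrelevant for confluence; it is only needed later for uniqueness of normal forms (Theorem~\ref{th:proba_term}). So the one-line \qed{} in the paper is justified not because the tiling is trivial, but because nothing about distributions plays any role once part~(2) is granted.
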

Finally, then, the diamond property ensures that weak $p$-normaliza\-tion
implies strong $p$-normalization, precisely like for usual abstract rewrite systems:
\begin{theorem}[Normalization  and Uniqueness of Normal Forms]\label{th:proba_term}
  Assume  $(A,\red)$  satisfies the diamond
  property. Then:
\begin{varenumerate}
\item
  \textbf{Uniqueness of normal forms.}  $ \mu \redd^k \rho$ and $\mu
  \redd^k \tau$ for some $k \in \Nat$ implies $\rho^\circ = \tau^\circ$.
\item
  \textbf{Uniformity}. If $\mu$ is weakly $p$-normalizing (for some
  $p\in [0,1]$), then $\mu$ strongly $p$-normalizes, \ie, $\red$ is
  uniform.
\end{varenumerate}
\end{theorem}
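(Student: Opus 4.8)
The plan is to obtain both parts of the theorem from a single tiling construction that uses \emph{both} clauses of the diamond property (Definition~\ref{base}): clause~(2) to close squares, and clause~(1) to track terminal parts across the tiling.

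I would first establish part~1, the uniqueness of normal forms reached after a fixed number $k$ of $\redd$-steps. Given two witnessing reduction sequences $\mu = \rho_0 \redd \cdots \redd \rho_k = \rho$ and $\mu = \tau_0 \redd \cdots \redd \tau_k = \tau$, I build a $(k{+}1)\times(k{+}1)$ array $(\sigma_{i,j})_{0\le i,j\le k}$ with $\sigma_{i,0} = \rho_i$, $\sigma_{0,j} = \tau_j$, and $\sigma_{i,j}\redd\sigma_{i+1,j}$, $\sigma_{i,j}\redd\sigma_{i,j+1}$ everywhere. It is filled in by induction on $i+j$: the two boundaries are the given sequences, and for $i,j\ge 1$ one applies clause~(2) of the diamond property to the two steps $\sigma_{i-1,j-1}\redd\sigma_{i-1,j}$ and $\sigma_{i-1,j-1}\redd\sigma_{i,j-1}$ (both already available) to obtain $\sigma_{i,j}$ closing the square. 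The extra ingredient is clause~(1) applied to that \emph{same} pair of steps: it forces $\term{\sigma_{i-1,j}} = \term{\sigma_{i,j-1}}$. Hence $\term{\sigma_{i,j}}$ is constant along each anti-diagonal $i+j=\mathrm{const}$ (within the relevant index range), and telescoping along $i+j=k$ yields
\[
\term{\rho} \;=\; \term{\sigma_{k,0}} \;=\; \term{\sigma_{k-1,1}} \;=\; \cdots \;=\; \term{\sigma_{0,k}} \;=\; \term{\tau},
\]
which is part~1. (The same array with unequal sides $m,n$ re-proves the Confluence corollary, with $\sigma_{m,n}$ as a common reduct.)

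Part~2 then follows at once. If $\mu$ weakly $p$-normalizes, choose $k$ and $\rho$ with $\mu\redd^k\rho$ and $\NF{\rho}\ge p$, and set $n := k$. For \emph{any} $\tau$ with $\mu\redd^n\tau$, part~1 gives $\term{\tau} = \term{\rho}$, hence $\NF{\tau} = \NF{\rho}\ge p$; so $\mu$ strongly $p$-normalizes, i.e.\ $\red$ is uniform.

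The main obstacle is entirely inside part~1: the array must be a genuine uniform lattice — exactly $k$ single $\redd$-steps along every row and column — so that ``terminal part after $k$ steps'' is path-independent. This is exactly where the full diamond property is needed and mere subcommutativity ($b\leftarrow a\rightarrow c \Rightarrow \exists d\,(b\rightarrow^= d\leftarrow^= c)$) fails: $\redd^=$-edges would distort the grid, the anti-diagonal invariant would be lost, and the derivation of part~2 would collapse — precisely the phenomenon anticipated in Section~\ref{sec:intro_conf}. The remaining work is routine bookkeeping: checking that the indices appearing in the induction and in the telescoping stay in $[0,k]$, that all array edges are genuine single steps, and that the degenerate case $k=0$ (where $\rho=\tau=\mu$) is covered.
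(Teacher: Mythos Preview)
Your proposal is correct and follows essentially the same tiling argument as the paper: both use clause~(2) of the diamond property to close squares and clause~(1) to propagate equality of terminal parts, with part~2 derived directly from part~1. The only difference is presentational---you build the full $(k{+}1)\times(k{+}1)$ grid and read off the result along the anti-diagonal, whereas the paper proceeds by induction on $k$, constructing a single strip $\sigma_0=\tau_1\redd\cdots\redd\sigma_{k-1}$ adjacent to the $\rho$-sequence and then invoking the inductive hypothesis on the two length-$(k{-}1)$ sequences from $\tau_1$; these are two standard organizations of the same idea.
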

\begin{proof}
First note that (2) follows from (1). In order to prove (1), we use
  an adaptation of the familiar ``tiling''
argument. It is not exactly the standard proof because reaching some
normal forms in a distribution is not the end of a sequence of
reductions.  Assume $ \mu=\rho_0 \redd \rho_1 \redd ... \redd \rho_k
$, and $ \mu \redd \tau_1 \redd ... \redd\tau_k $.
We prove $\term{\rho_k} = \term {\tau_k} $ by induction on $k$.
If $k=1$, the claim is true by Definition~\ref{base}~(1).
If  $k>1$ we tile (w.r.t. $\redd$), as depicted below:\\[.5ex]
\begin{minipage}{.36\columnwidth}
{ \includegraphics[scale=0.7]{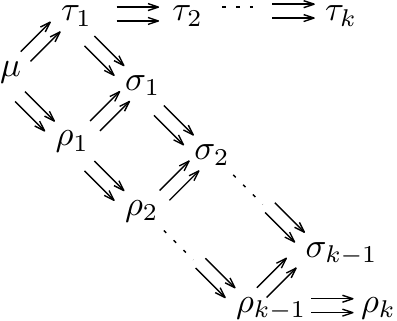}  }
\end{minipage}
 \begin{minipage}{.63\columnwidth}
we build the sequence $\sigma_0=\tau_1$
$\redd \sigma_1 ...\redd \sigma_{k-1}$ (see the Fig. on the side)
where each $\sigma_{i+1}$ ($i\geq 0$) is obtained via Definition~\ref{base}~(2), from $\rho_i \redd \rho_{i+1}$ and $\rho_i \redd \sigma_{i} $, by closing the diamond. 
By  Definition~\ref{base}~(1) $\term{\rho_k}= \term{\sigma_{k-1}}$. 
Now we observe that $\tau_1\redd^{k-1} \tau_k$ and  $\tau_1\redd^{k-1}\sigma_{k-1}$. 
Therefore we have (by induction) 
$ \term{\sigma_{k-1}} =  \term {\tau_k} $, from which we conclude $\term{\rho_k} = \term {\tau_k} $.\qedhere
\end{minipage}  
\end{proof}
 
\section{Memory Structures}\label{sect:alg-mem}

In this section we introduce the notion of memory
structure. Commutativity of the memory operations is ensured by a set
of equations.  To deal with the notion of fresh addresses, and avoid unnecessary bureaucracy,  it is convenient to rely on
nominal sets.  The basic definitions are recalled below (for
details, see, \eg,~\cite{pitts-book}).

\subsection{Nominal Sets}
If $G$ is a group, then a \emph{$G$-set} $(M,\Mswap)$ is a set $M$ equipped
with an action of $G$ on $M$, \ie\ a binary operation
$(\Mswap) : G \times M \longrightarrow M$ which respects the group
operation.
Let $I$ be a countably infinite set; let $M$ be a set equipped with an
action of the group $\Finbij(I)$ of \textit{finitary permutations} of
$I$.
A \emph{support} for $m\in M$ is a subset
$A\subseteq I$ such that for all $\sigma\in\Finbij(I)$,
$\forall i\in A, \sigma i = i$ implies $\sigma\Mswap m = m$. 
A \emph{nominal set} is a $\Finbij(I)$-set all of whose elements have
finite support. In this case, if $m\in M$, we write $\Msupp(m)$ for
the smallest support of $m$.
The complementary notion of support is freshness: $i\in I$ is
\emph{fresh} for $m\in M$ if $i\not\in \Msupp(m)$.
We write $(i~j)$ for the transposition which swaps  $i$ and $j$.

We will make use of the following characterization of support in terms
of transpositions:
$A\subseteq I$ supports $m\in M$ if and only if for every $i , j \in I
- A$ it holds that $(i~ j)\cdot m = m $.  As a consequence, for all
$i,j\in I$, if they are fresh for $m\in M$ then $(i~ j)\cdot m = m$.
\subsection{Memory Structures}\label{sec:mem-struct}
A \emph{memory structure} $\memories =(\memories, I, \Mswap, \mathcal L)$
consists of an infinite, countable set $I$ whose elements
$i,j,k,\dots$ we call \emph{addresses}, a nominal set $(\memories,\Mswap)$
each of whose elements we call \emph{memory states}, or more shortly,
\emph{memories}, and a finite set $\mathcal L$ of \emph{operations}.

We write  $I^*$ for  the set of all tuples made from
  elements of $I$. A tuple is denoted with $(i_1, \ldots, i_n)$, or with
  $\vec i$.
To a memory structure are associated the following maps.
\begin{varitemize}
\item
  $\Mtest : I \times \memories \to \dists{Bool \times \memories}$
  ~~(Observe that the set $\memories$ might be updated by the operation $\Mtest$: for this reason,  it
  also  appears in the codomain -- See Remark~\ref{rem:modif}),
\item $\Mupdate : I^* \times \mathcal L \times \memories \rightharpoonup \memories$ (partial map),
\item $\Marity : \mathcal L \to \Nat$,
\end{varitemize}
and the following three  properties.

\smallskip
\noindent
{\em (1)} The maps $\Mtest$ and $\Mupdate$ respect the
  group action:
  \begin{varitemize}
  \item $\sigma\Mswap(\Mtest(i,m)) = \Mtest(\sigma(i), \sigma\Mswap
    m)$,
  \item $\sigma\Mswap(\Mupdate(\vec{i},x,m)) =
    \Mupdate(\sigma(\vec{i}),x,\sigma\Mswap m)$,
  \end{varitemize}
  where the action of $\perm(I)$ is  extended in the natural way to 
  distributions and pairing with booleans.

\smallskip
\noindent{\em(2)} The action of a given operation on the memory is only
  defined for the correct arity. More precisely, 
  $
    \Mupdate((i_1\ldots i_n),x,m) 
  $
  is defined if and only if
  the $i_k$'s are pairwise disjoint
  and
  $\Marity(x) = n$.

\smallskip
\noindent{\em(3)} Disjoint tests and updates commute: assume
  that $i\neq j$, that $j$ does not meet $\vec{k}$, and that
  $\vec{k}$ and $\vec{k'}$ are disjoint. First,
  updates on $\vec k$ and $\vec k'$ commute:
  $\Mupdate(\vec k, x, \Mupdate(\vec k', x', m)) =
  \Mupdate(\vec k', x', \Mupdate(\vec k, x, m))$.
  Then, tests on $i$ commute with tests on $j$ and
  tests of $j$ commute with updates on $\vec k$. We pictorially
  represent these equations in Fig.~\ref{fig:commut}. The drawings
  are meant to be read from top to bottom and represent the successive
  memories along action. Probabilistic behavior is represented with
  two exiting arrows, annotated with their respective probability of
  occurrence, and the boolean resulting from the test
  operation. Intermediate memories are unnamed and represented with
  ``$\cdot$''.  \condinc{}{We write the formal equations in
    Appendix~\ref{app:commut-mem}.}

\begin{figure}
  \centering
  {\includegraphics[width=\columnwidth]{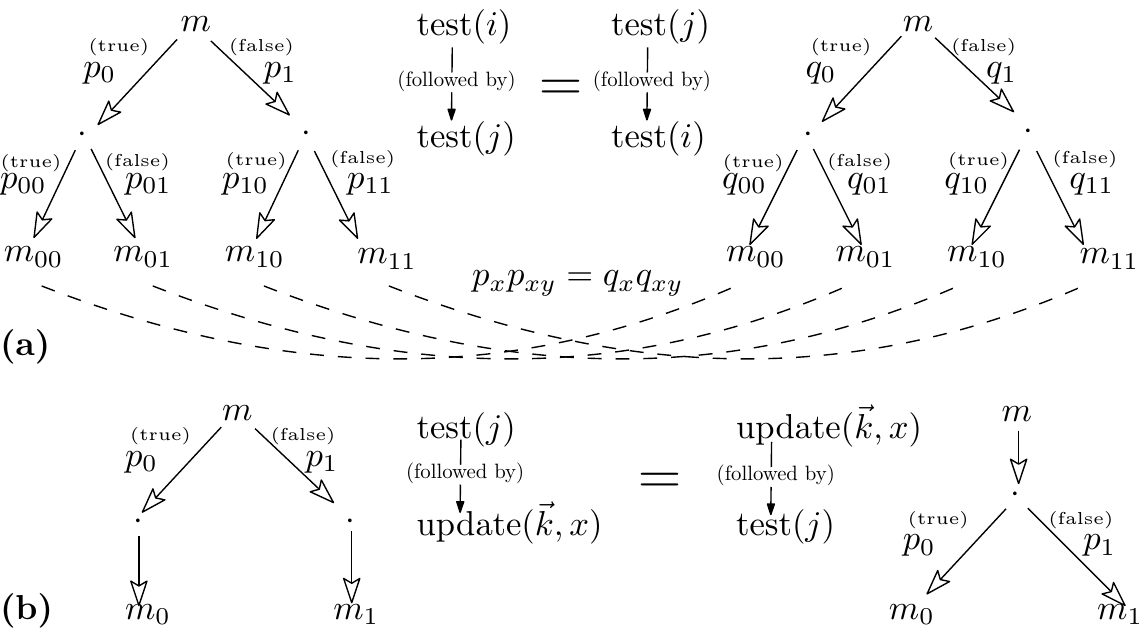}}
  \caption{Commutation of Tests and Updates.}
  \label{fig:commut}
\end{figure}
\subsection{Instances of Memory Structures}\label{sec:exemples}
The structure of memory is flexible and can accommodate several choice
effects. Let us give some relevant examples.  Typically, here $I$ is
$\Nat$, but any countable set would do the job.
\subsubsection{Deterministic, Integer Registers}\label{sec:determ-integ-regist}
The simplest instance of memory structure is the deterministic one,
corresponding to the case of classical \PCFwo\ (this subsumes, in
particular, the case studied in \cite{lics2015}).  Memories are simply
functions $\mm$ from $I$ to $\Nat$, of value 0 apart for a finite
subset of $I$. The test on address $i$ is deterministic, and tests
whether $\mm(i)$ is zero or not.  Operations in $\mathcal L$ may
include the unary predecessor and successor, and for example the
binary max operator.

\begin{example}\label{ex:int-reg}
  A typical representation of this deterministic memory  is a
  sequence of integers:
  indexes correspond to addresses and
  coefficients to values. A completely free memory is for example the
  sequence $\mm_0 = (0,0,0,\ldots)$. If $S$ corresponds to the
  successor and $P$ to the predecessor, here is what happens to the
  memory for some operations. The memory $\mm_1 :=
  \Mupdate(0,S,\mm_0)$ is $(1,0,0,0,\ldots)$, the memory 
  $\mm_2 :=\Mupdate(1,S,\mm_1)$ is $(1,1,0,0,\ldots)$, and
  the memory $\mm_3:=\Mupdate(0,P,\mm_2)$ is $(0,1,0,0,\ldots)$.
  Finally, $\Mtest(1,\mm_3) = ({\rm false},(0,1,0,0,\ldots))$.
    Note that we do not need to keep track of an infinite sequence: a
  dynamic, finite list of values would be enough. We'll come back to
  this in Sec.~\ref{sec:quantum-memory}.
\end{example}

\begin{remark}\label{rem:int-reg}
  The equations on memory structures enforce the fact that all fresh
  addresses (\ie, not on the support of the nominal set)
  have equal values.  Note that
  however the conditions do not impose any particular ``default'' value.
  These equations also state that, in the deterministic case,
  a test action on $i$ can only modify the memory at address
  $i$. Otherwise, it could for example break the commutativity of update and
  test (unless $\mathcal L$ contains trivial operations, only).
\end{remark}

\subsubsection{Probabilistic, Boolean Registers}
\label{sec:prob-bool}

When the test operator is allowed to have a genuinely probabilistic behavior, the
memory model supports the representation of probabilistic boolean
registers. In this case, a memory $\mm$ is a function from $I$
to the real interval $[0,1]$, whose values represent probabilities of
observing ``true''. The test on address $i$ could return
\[
\mm(i)\{({\rm true},\mm\{i\mapsto 1\})\}
+
(1-\mm(i))\{({\rm false},\mm\{i\mapsto 0\})\}
\]
Operations in $\mathcal L$ may for example include a unary ``coin flipping''
operation setting the value associated to $i$ to some fixed
probability.

\begin{example}\label{ex:prob-reg}
  If as in Example~\ref{ex:int-reg} we represent the memory as a
  sequence, a memory filled with the value ``false'' would be
  $\mm_0=(0,0,0,\ldots)$. Assume  $c$ is the unary operation placing a fair
  coin at the corresponding address;  if $\mm_1$ is
  $\Mupdate(0,c,\mm_0)$,  we have $\mm_1=(\frac12,0,0,0,\ldots)$.  Then
  $\Mtest(0,\mm_1)$ is the distribution
  $\frac12({\rm false},(0,0,0,0,\ldots))+ 
  \frac12({\rm true},(1,0,0,0,\ldots))$.
\end{example}

\subsubsection{Quantum Registers}
\label{sec:quantum-memory}

A standard model for quantum computation is the QRAM model: quantum
data is stored in a memory seen as a list of (quantum)
registers, each one holding a qubit which can be acted upon. The model
supports  three main operations: creation
of a new register, measurement of a register, and application of
unitary gates on one or more registers, depending on the arity of the
gate under scrutiny. This model has been used extensively
in the context of quantum
lambda-calculi~\cite{SelingerValiron,DalLagoZorziLINEARITY,PaganiSV14}, with 
minor variations. The main choice to be made is whether
measurement is destructive (\ie, if one uses garbage collection) or
not (\ie, the register is not reclaimed).

\paragraph{A Canonical Presentation of Quantum Memory.}
To fix things, we shall concentrate on the 
presentation given in~\cite{PaganiSV14}. We briefly recall it. 
Given  $n$ qubits,  a memory is a
normalized vector in $(\mathbb{C}^2)^{\tensor n}$ (equivalent to a
ray). A linking function maps the position of each qubit in the list
to some pointer name. The creation of a new
qubit turns the memory $\phi\in(\mathbb{C}^2)^{\tensor n}$ into
$\phi\tensor\ket{0}\in(\mathbb{C}^2)^{\tensor (n+1)}$. The
measurement is destructive: if $\phi = \alpha_0{}q_0 + \alpha_1{}q_1$,
where each $q_b$ (with $b=0,1$) is normalized of the form
$\sum_i\phi_{b,i}\tensor\ket{b}\tensor\psi_{b,i}$, then measuring
$\phi$ returns $\sum_i\phi_{b,i}\tensor\psi_{b,i}$ with probability
$|\alpha_b|^2$. Finally, the application of a $k$-ary unitary
gate $U$ on $\phi\in(\mathbb{C}^2)^{\tensor n}$ simply applies the
 unitary matrix corresponding to $U$ on the vector $\phi$. The language comes
with a chosen set $\mathcal{U}$ of such gates.

\paragraph{Quantum Memory as a Nominal Set.}
The quantum memory  
  can be equivalently
presented using a memory structure: in the following we shall refer to
it as $\Qmem$.
The idea is  to  use nominal set to make precise the hand-waved
``pointer name'', and 
 formalize the idea of having a finite core of "in use"
qubits, together with an infinite pool of fresh qubits.
Let $\mathcal F_0$  be the set
of \mbox{(set-)maps} from $I$ (the infinite, countable set of
Sec.~\ref{sec:mem-struct}) to $\{0,1\}$ that have value $0$
everywhere except for a finite subset of $I$.
We have a {\em memory structure} as follows.

$I$ is the domain of the set-maps in
$\mathcal F_0$.
The {\em nominal set} $(\memories,\cdot)$ is defined with
$\memories = \mathcal H_0$, \ie\ the Hilbert space built from finite
(complex) linear combinations over $\mathcal F_0$, while the {\em group
  action} $(\cdot)$ corresponds to permutation of addresses:
$\sigma\cdot\mm$ is simply the function-composition of $\sigma$
with the elements of $\mathcal{F}_0$ in superposition in $\mm$.
The support of a particular
memory $\mm$ is finite: it consists of the set of addresses that are
not mapped to $0$ by some (set)-function in the superposition.
The {\em set of operations} $\mathcal L$ is the chosen set
$\mathcal{U}$ of unitary gates.
The {\em arity} is  the arity of the corresponding gate.

Finally, the {\em update} and {\em test} operations correspond
respectively to the application of an unitary gate, and to a
measurement followed by a (classical) boolean test on the result.  We
omit the formalization of these operations in the nominal set setting;
instead we show how this presentation in terms of nominal sets is
equivalent to the previous more canonical one. 

\paragraph{Equivalence of the Two Presentations.} Let $\mm\in\Qmem$.
We can always consider a finite subset of $I$, say
$I_0 = \{i_0\ldots i_n\}$ for some integer $n$ such that all other
addresses are fresh. As fresh values are $0$ in $\mm$, then $\mm$ is a
superposition of sequences that are equal to 0 on $I\setminus I_0$. Then  $\mm$ can
 be represented as ``$\phi\otimes\ket{000\ldots}$'' for some
(finite) vector $\phi$. We can omit the last $\ket{0000\ldots}$ and
only work with the vector $\phi$: we are back to the canonical
presentation of quantum memory. Update and test can then be defined on
the nominal set presentation through this equivalence.

\paragraph{Equations.}
Memory structures come with equations, which are indeed  
satisfied by quantum memories. Referring to
Sec.~\ref{sec:mem-struct} : (1) is simply renaming of qubits, (2)
is a property of applying a unitary, and (3) holds because of the equations corresponding
to the tensor of two unitaries or the tensor of a unitary and a
measurement (see Remark~\ref{rem:nonlocal}).

%

\begin{remark}\label{rem:modif} The quantum case makes clear why 
  $\memories$ appears in the codomain of test:  in general the measurement
  of a register collapses the global state of the memory (see Sec.~\ref{sec:quantum-computation}). The modified
  memory therefore has to be returned together with the result.
\end{remark}

\subsection{Overview of the Forthcoming Sections}
We use memory structures to encapsulate effects in
three different settings. In Sec.~\ref{sec:prog-net}, we enrich proof nets
with a memory, in Sec.~\ref{sec:msiam}, we enrich token machines with a
memory, while in Sec.~\ref{sec:pcfam}, we equip \PCFwo\ terms with a memory.
The construction is uniform for all the three systems, to which we refer as {\em operational systems}, as opposite to the\emph{ base rewrite systems} on top of which we build (see Sec.~\ref{overview}). 

\section{Program Nets and Their Dynamics}\label{sec:prog-net}
\newcommand{\Mlink}{{\rm link}}

In this section, we introduce \emph{program nets}. The base rewrite
system on which they are built is a variation\footnote{In this paper,
  reduction of the \bbox\ is not deterministic; there is
  otherwise no major difference with \cite{lics2015}.} of
\SMELLY\ nets, as introduced in \cite{lics2015}.  \SMELLY\ nets are
\MELL\ (Multiplicative Exponential Linear Logic) proof nets extended
with \emph{fixpoints} (Y-boxes) which model recursion, \emph{additive boxes} (\bbox es) which capture the if-then-else construct, and
a family of \emph{sync nodes}, introducing explicit synchronization
points in the net.

The \emph{novelty} of this section is the operational system which we
introduce in Sec.~\ref{sec:program-nets}, by means of our
parametric construction: given a memory structure and \SMELLY\ nets,
we define program nets and their reduction. We prove that program nets
are a PARS which satisfies the diamond property, and therefore
confluence and uniqueness of normal forms both hold. Program nets also
satisfy cut elimination, \ie\ deadlock-freeness of  nets
rewriting.

\subsection{Formulas}
The language of \emph{formulas} is the same as  for
\MELL. In this paper, we restrict our attention to the constant-only fragment, \ie:
$$
\formone \bnf \one\midd
\bot\midd \formone\otimes\formone\midd\formone\parr\formone\midd\,!\formone\midd\, ?\formone.
$$ 
The constants $1,\bot$ are the \emph{units}.
As usual, linear negation $(\cdot)\b$ is extended into an involution on all
formulas: $A\b\b\equiv A$, $1\b\equiv \bot $,
$(\formone\otimes\formtwo)\b\equiv \formone\b\parr \formtwo\b $,
$(!\formone)\b \equiv{} ?\formone\b$.  Linear implication is a defined
connective: $\formone\lin\formtwo\equiv\formone\b\parr\formtwo$. 
\emph{Positive formulas} $P$ and \emph{negative formulas}  $N$ are
 respectively defined as:
$\posfone \bnf \one \midd \posfone\otimes\posfone$,
and $\negfone \bnf \bot \midd \negfone\parr\negfone$.

\subsection{\SMELLY\ Nets}\label{sec:nets}
  A \SMELLY\ \emph{net}  is  a pre-net (\ie\ a well-typed graph) which fulfills a
 \emph{correctness criterion}.

\paragraph{Pre-Nets.} 
A \emph{pre-net}   is a labeled \emph{directed} graph $R$ built
over the alphabet of nodes represented in Fig.~\ref{probaNets}.

\noindent\textit{{Edges.}} Every edge in $R$ is labeled with a formula;
the label of an edge is called its \emph{type}.
We call those edges represented below (resp. above) a node symbol
\emph{conclusions} (resp. \emph{premises}) of the node.
We will often say that a node ``has a conclusion (premise) $\formone$''
as shortcut for ``has a conclusion (premise) of type $\formone$''.
When we need more precision, we explicitly distinguish an edge and its type
and we use variables such as $e,f$ for the edges.
Each edge is a conclusion of exactly one node and is a premise
of at most one node.  Edges which are not premises of any node are called the \emph{conclusions of the net}.
          
\noindent\textit{{Nodes.}}
 The sort of each node induces constraints on the number and the labels
of its premises and conclusions.
The constraints are graphically shown in Fig.~\ref{probaNets}.
A \emph{sync node} has
$n \in \mathbb{N}$ premises of types $\posfone_1, \posfone_2, \cdots, \posfone_n$ respectively
and $n$ conclusions of the same types $\posfone_1, \posfone_2, \cdots, \posfone_n$ as the premises,
where each $\posfone_i$ is a positive type.
A sync node with $n$ premises and conclusions is drawn as $n$ many
black squares connected by a line as in the figure. The total number
of $\one$'s in the $P_i$'s is called the {\em arity} of the sync node.

We call \emph{boxes} the nodes $\bot$, $! $, and $Y $. The leftmost
conclusion of a box is said to be \emph{principal}, while the other ones
are \emph{auxiliary}. The node $\bot$ has conclusion $\{\bot,\Gamma\}$
with $\Gamma \not=\emptyset$.  The exponential boxes $!$ and $Y$ have
conclusions $\{!A,?\Gamma\}$ ($\Gamma$ possibly empty).
To each $!$-box (resp. $Y$-box) is associated a content, \ie\ a pre-net of
conclusions $\{A,?\Gamma\}$ (resp. $\{A, ?A\b, ?\Gamma\}$).  To each {\bbox} are associated
\emph{a left and a right content}: each content is a pair
$(\botlk,S)$, where $\botlk$ is a new node that has no premise and one
conclusion $\bot$, and $S$ is a pre-net of conclusions $\Gamma$.   We represent
a box $b$ and its content(s) as in Fig.~\ref{probaBoxes}. The
nodes and edges in the content are said to be \emph{inside} $b$. As is
standard, we often call a crossing of the box border a \emph{door},
which we treat as a node. We then speak of premises and
conclusion of the principal (resp. auxiliary) door.  \condinc{}{ Observe that in
the case of \bbox, the principal door has a left and a right premise.}

\noindent\textit{Depth.} A node occurs \emph{at depth 0} or \emph{on the surface} in the
 pre-net $R$ if it is a node of $R$. It occurs \emph{at depth $n+1$} in $R$ if it occurs at depth $n$ in a pre-net associated to a box of $R$.  
\begin{figure}
\centering
  \centering
  {\includegraphics[width=\columnwidth]{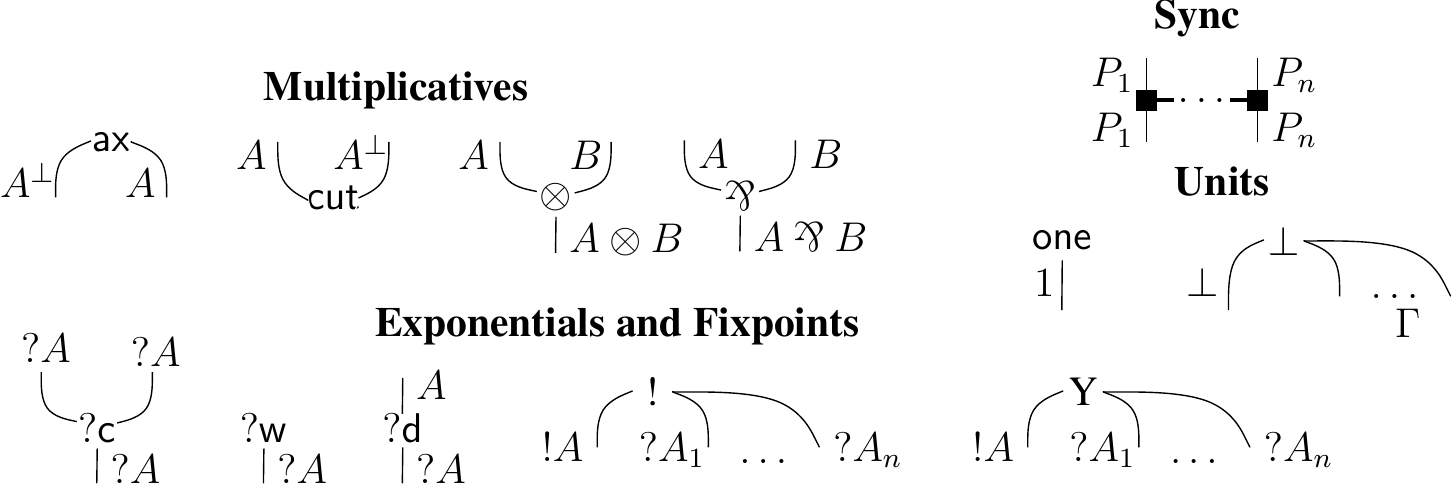}}
  \caption{Nodes.}
  \label{probaNets}
\end{figure}  

\begin{figure}
{\includegraphics[width=\columnwidth]{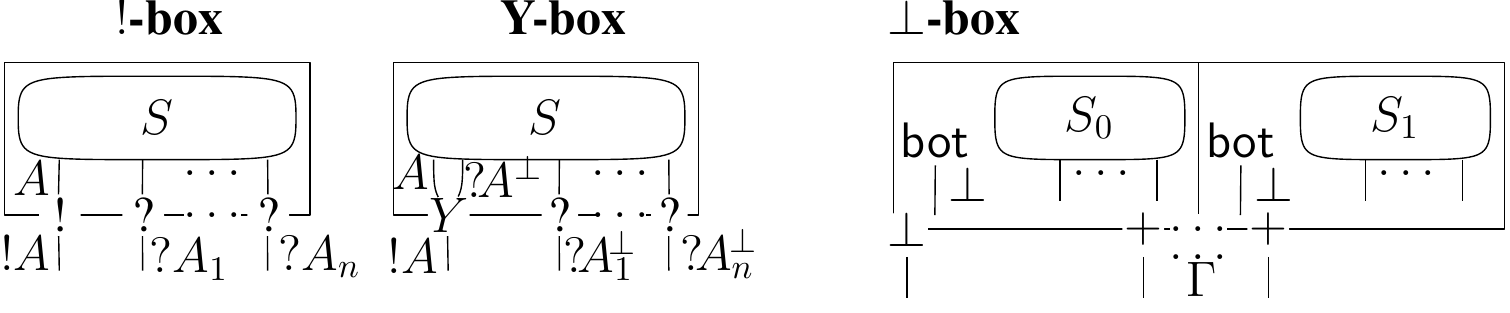}}
    \caption{Boxes.}
    \label{probaBoxes}
\end{figure}

\begin{figure}
  \centering
  {\includegraphics[width=\columnwidth]{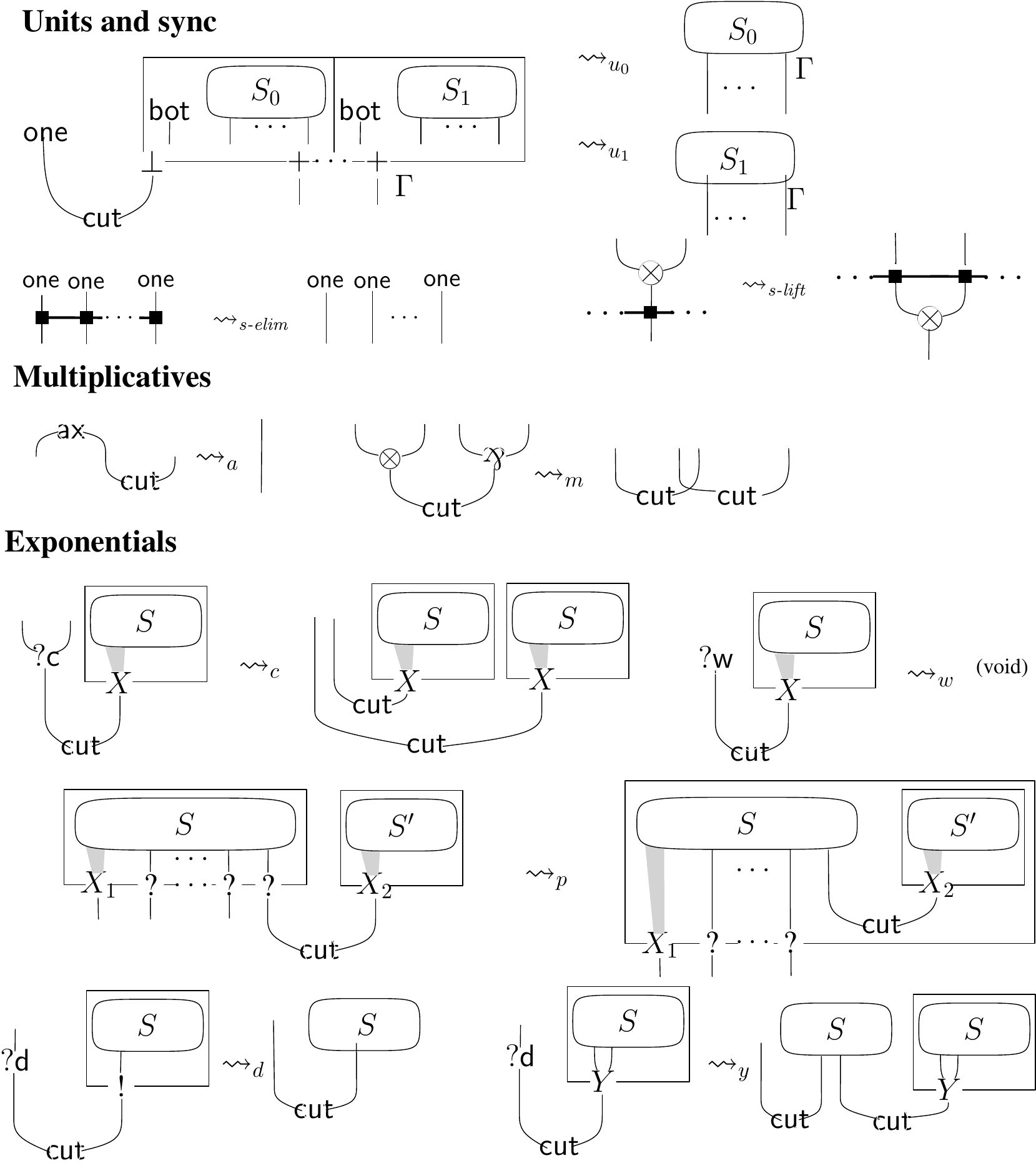}}
     \caption{Nets Rewriting Rules.}
     \label{probaRed}
\end{figure}
\paragraph{Nets.}
A \emph{net} is given by a pre-net $R$ which satisfies the
correctness criterion of~\cite{lics2015}, together with a {\em total} map
$\mapsyncname{R}:\syncnode{R}\to \mathcal L$,
where $\mathcal L$ is a finite set of \emph{names}
and $\syncnode{R}$ is the set of sync nodes appearing in $R$ (including those inside boxes);
 the map $\mapsyncname{R}$ is simply naming the sync nodes.
From now on, we   write  $R$ for the triple $(R, \mathcal{L},\mapsyncname{R})$.
     
\condinc{}{Correctness is defined by means of switching paths.
A \emph{switching path} in the pre-net $R$
is an undirected path on the graph $R$ (i.e.\ $R$ is regarded as an undirected graph) which uses at most one of the two \emph{premisses} for each $\parr$ and $?c$ node,
and at most one of conclusions for each sync node.
A pre-net   is \emph{correct} if 
  none of its switching paths is cyclic, and
  the content of each of its box is itself correct. }

\paragraph{Reduction Rules.}
Fig.~\ref{probaRed} describes the rewriting  rules on nets.
Note that the  redex in the top row   has two possible reduction rules, $u_0$ and $u_1$. Note also the $y$ reduction, which captures the recursive behavior of the  $Y$-box as a fixpoint (we illustrate this in the example below.)
   The metavariables  $X,X_1,X_2$ of Fig.~\ref{probaRed} range over $\{!,Y\}$
  and are used to uniformly specify reduction rules involving
  exponential boxes (\ie, $X$'s can be either $!$ or $Y$).
  The reduction of net has two constraints: (1.)
  \emph{surface reduction},
  \ie\ a reduction step applies only when the 
  redex is at depth 0,
  and  (2.) \emph{exponential steps are closed}, \ie\ they only take place
  when the $!A$ premise of the cut is the principal conclusion of a  box with no auxiliary conclusion. We come back on the former in Sec.~\ref{why surface}.
  
\condinc{}{ As expected, the net reduction preserves correctness.}
  

\paragraph{Example.}
The ``skeleton'' of the program in  Example~\ref{ex:recursion} could  be encoded\condinc{}{\footnote{%
Precisely speaking, the nets shown in Fig.~\ref{fig:recursion-net-1}
are those obtained by the translation given in
Fig.~\ref{fig:PCFtoNET1} and~\ref{fig:PCFtoNET2} (in the Appendix),
with a bit of simplification for clarity of discussion and due to lack
of space.%
  }} as in the LHS of Fig.~\ref{fig:recursion-net-1}.  The recursive
  function {\tt f} is represented with a Y-box, and has type
  $!(1\multimap1) = {!}(\bot\parr1)$. The test is encoded with a
  \bbox: in one case we forget the function {\tt f} by using
  $\mathsf{?w}$ and simply return a $\onelk$ node, and in the other
  case we apply a Hadamard gate, which  is represented with a (unary)
  sync node. To the ``{\tt in}''
  part of the let-rec corresponds the dereliction node $\mathsf{?d}$,
  triggering reduction. With the rules presented in the previous
  paragraph, the net rewrites according to
  Fig.~\ref{fig:recursion-net-1}, where the Y-box has been unwound
  once. From there, we could reduce the $\onelk$ node with the sync
  node $\Had$, but of course doing so we would not handle the quantum memory.
  In order to associate to reductions an action on the memory, we need a bit more, namely
  the notion of a program net which is introduced in
  Sec.~\ref{sec:program-nets}.

\begin{figure}
  \centering
  \includegraphics[page=2,width=\columnwidth]{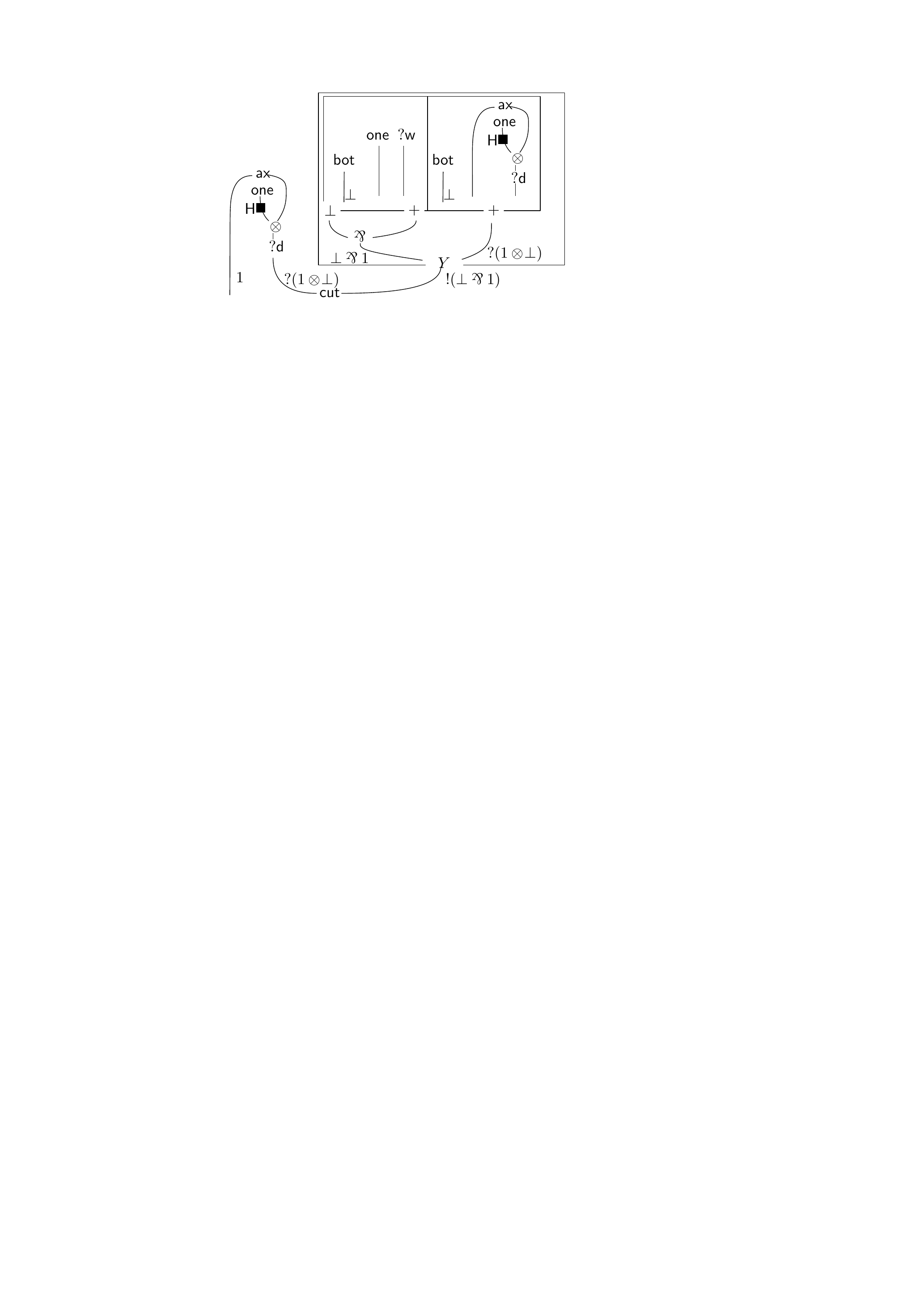}
  \caption{Encoding of Example~\ref{ex:recursion} and Net Rewrite.}
  \label{fig:recursion-net-1}
\end{figure}

\subsection{Program Nets }\label{sec:program-nets}

Let $\inputs{R}$ be the set of all occurrences of $\one$'s which are
conclusions of $\onelk$ nodes \emph{at the surface}, and of all the
occurrences of $\bot$'s which appear in the conclusions of $R$.

\begin{definition}[Program Nets]\label{def:prognets}
  Given a memory structure
  $\memories = (\memories,I,\mathcal L)$,
  a {\em raw program net} on $\memories$
  is a tuple $(R,\ind{R},\mm)$ such that
  \begin{varitemize}
  \item
    $R$ is a \SMELLY\ net (with $\mapsyncname{R}:\syncnode{R}\to
    \mathcal L$),
  \item
    $\ind{R}:\inputs{R}\partto I$ is an injective \emph{partial} map
    that is however total on the occurrences of $\bot$,
  \item
    $\mm\in \memories$.
  \end{varitemize}
   We require that the arity of each sync node $s$ matches the arity
   of $\mapsyncname{R}(s)$.  Please observe that in the second item in
   Definition~\ref{def:prognets}, the occurrences of $\one's$
   belonging to $\inputs{R}$ are not necessarily in the domain of
   $\ind{R}$; if they are, we say that the corresponding $\onelk$ node
   is \emph{active}.  \emph{Program nets} are the equivalence class of
   raw program nets over permutation of the indexes.  Formally, let
   $\sigma(R, \ind{R}, \mm) = (R, \sigma\cdot\ind{R}, \sigma \cdot
   \mm)$, for $\sigma\in\Finbij(I)$.  The equivalence class $\R =
      [(R,\ind{R}, \mm)]$ is $\{ \sigma(R, \ind{R}, \mm) ~|~
      \sigma\in\Finbij(I) \} $.
 %
   We use the symbol $\sim$ for the equivalence relation on raw
   program nets.  $\mathcal{N}$ indicates the set of program nets.
\end{definition}

\paragraph{Reduction Rules.}
We define a relation ${\rednet} \subseteq \prognetsset \times
\dists\prognetsset$, making program nets into a PARS.  We first define
the relation $\rednet $ over raw program nets.  Fig.~\ref{fig:unitRed}
summarizes the reductions in a graphical way; the function $\ind{R}$
is represented by the dotted lines.
\begin{figure}
  \centering
    \includegraphics[width=\columnwidth]{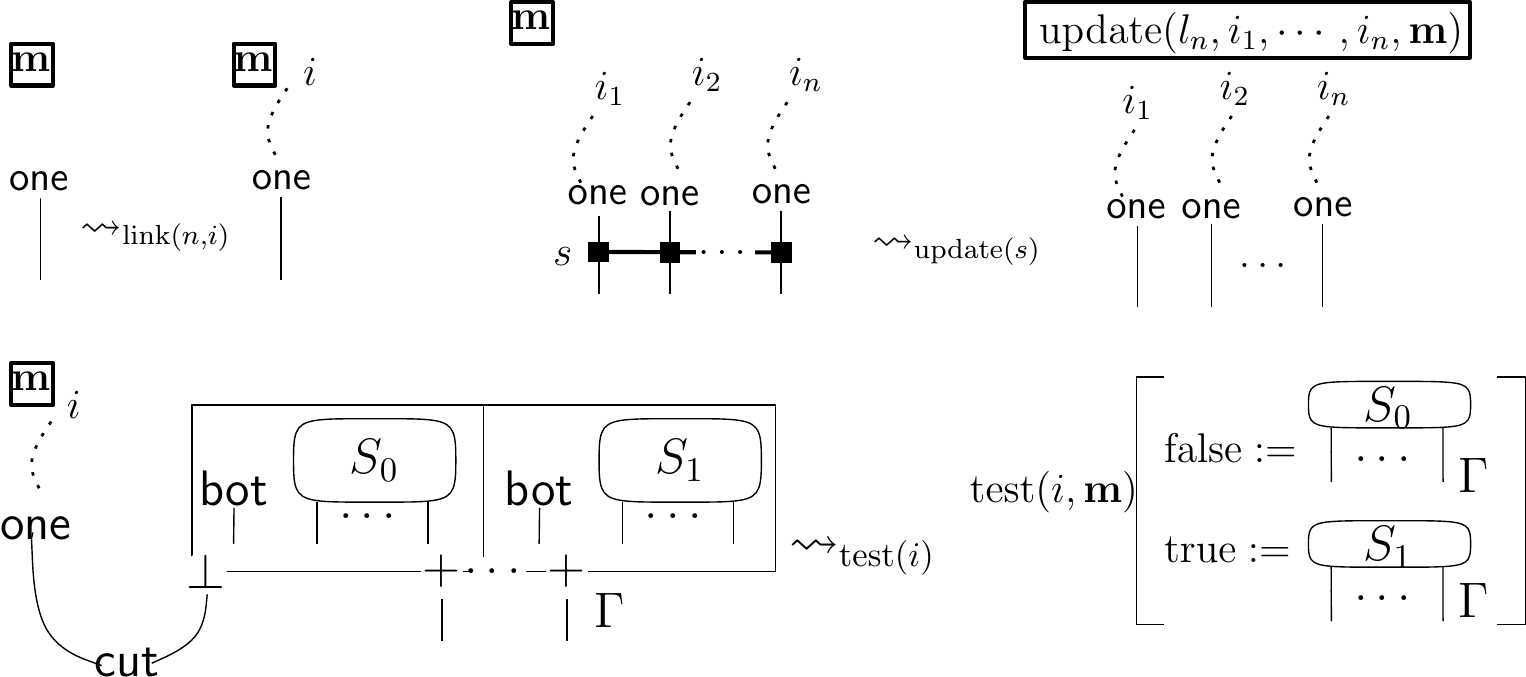}
  \caption{Program Net Reduction Involving Memories.}\label{fig:unitRed}
\end{figure}
\begin{varenumerate}
\item\label{item:linkRed}
  \textit{Link}. If $n$ is a $\onelk$ node of conclusion $x$, with
  $\ind{R}(x)$ undefined, then
    $(R,\ind{R},\mm)
    \rednet_{\mathrm{link(n,i)}} \{(R, \ind{R}\cup\{x\mapsto i\},\mm)^1\} $ where    $i\in I$ is 
        fresh both in $\ind{R}$ and $\mm$.
\item\label{item:syncRed}
   \textit{Update}.  If $R\rednet_s R'$, and $s$ is the sync node in
   the redex, then $$(R, \ind{R}, \mm) \rednet_{\mathrm{update(s)}}
   \{(R', \ind{R}, \upd(l,\vec i,\mm)^1\}$$ where $l$ is the label of
   $s$, and $\vec i$ are the addresses of its premises.
 \item\label{item:botRed}
   \textit{Test}.  If $R\rednet_{u_0} R_0$ and $R\rednet_{u_1}R_1 $,
   and $i$ is the address of the premise $\one$ of the cut, then $(R,
   \ind{R}, \mm) \rednet_{\mathrm{test(i)}}$
   $\test(i,\mm)[\Mfalse{:}{=}(R_0, \ind{R_0}, \mm),$
     $\Mtrue{:}{=}(R_1,$ $ \ind{R_1}, \mm)]$, where $\ind{R_0}$
   (resp.\ $\ind{R_1}$) is the restriction of $\ind{R}$ to
   $\inputs{R_0}$ (resp.\ $\inputs{R_1}$).
 \item\label{item:MELLRed}
   Otherwise, if $R\rednet_x R' $ with $x\not\in \{ s, u_0, u_1\}$,
   then we have $$(R, \ind{R}, \mm)\rednet_x \{(R', \ind{R},
   \mm)^1\}$$ (observe that none of these rules modify the domain of
   $\ind{R}$).
\end{varenumerate}
  The relation $\rednet$ extends immediately to program nets (by slight
abuse of notation we use the same symbol); Lem.~\ref{lemma: equiv}
guarantees that the relation is well defined.  We write $(R, \ind{R},
\mm) \stackrel{r}{\rednet} \mu$ for the reduction of the redex $r$ in
the raw program net $(R, \ind{R}, \mm)$.

\begin{lemma}[Reduction Preserves Equivalence]\label{lemma: equiv}
  Suppose that $(R, \ind{R}, \mm) \stackrel{r}{\rednet} \mu$ and
  $(R, \sigma\cdot\ind{R}, \sigma \cdot \mm) \stackrel{r}{\rednet}
  \nu$, then $\mu \sim \nu$.
\end{lemma}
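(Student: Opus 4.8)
The plan is to prove Lemma~\ref{lemma: equiv} by a case analysis on the redex type of $r$, showing in each case that applying the permutation $\sigma$ and then reducing is the same (up to $\sim$) as reducing and then applying a suitable permutation. The key observation is that the four reduction clauses of Definition~\ref{def:prognets} are ``natural'' in the index set $I$: the underlying \SMELLY\ net rewrite $R\rednet R'$ does not touch $\ind{R}$ or $\mm$ at all, and the memory operations $\Mtest$, $\Mupdate$ respect the group action of $\Finbij(I)$ by property~(1) of Sec.~\ref{sec:mem-struct}. So the heart of the argument is just bookkeeping about how $\sigma$ interacts with the freshly chosen address in the Link case and with the addresses $\vec i$ read off the premises in the Update and Test cases.

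Concretely, I would proceed as follows. First, for the \textbf{MELL case} (clause~\ref{item:MELLRed}) and the pure net part of the other clauses: the net $R'$ produced by $R\rednet_x R'$ depends only on $R$ (and the choice of redex $r$), not on $\ind{R}$ or $\mm$; and the new index map is literally $\ind{R}$ restricted or extended in a way that commutes with $\sigma$. Hence from $(R,\sigma\cdot\ind{R},\sigma\cdot\mm)$ we reduce $r$ to $\{(R',\sigma\cdot\ind{R'},\sigma\cdot\mm)^1\}$, which is $\sigma$ applied termwise to $\{(R',\ind{R'},\mm)^1\}=\mu$, so $\nu=\sigma\cdot\mu\sim\mu$. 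Second, for the \textbf{Link case} (clause~\ref{item:linkRed}): reducing $r$ in $(R,\ind{R},\mm)$ picks some fresh $i$ and yields $(R,\ind{R}\cup\{x\mapsto i\},\mm)$; reducing $r$ in $(R,\sigma\cdot\ind{R},\sigma\cdot\mm)$ picks some fresh $i'$ and yields $(R,(\sigma\cdot\ind{R})\cup\{x\mapsto i'\},\sigma\cdot\mm)$. I then exhibit a permutation $\tau$ sending the result of the first to the result of the second: take $\tau$ to be (the composite of $\sigma$ with) the transposition $(\sigma(i)\;i')$, which is harmless on the finite support of $\sigma\cdot\mm$ and on the finite range of $\sigma\cdot\ind{R}$ because both $\sigma(i)$ and $i'$ are fresh there; here I use the characterization of support via transpositions from Sec.~\ref{sec:mem-struct} (fresh $i,j$ give $(i\;j)\cdot m=m$). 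This shows the two raw nets lie in the same equivalence class, i.e.\ $\mu\sim\nu$.

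Third, for the \textbf{Update case} (clause~\ref{item:syncRed}) and \textbf{Test case} (clause~\ref{item:botRed}): here the addresses $\vec i$ (resp.\ $i$) involved are $\ind{R}$ applied to the premises of the sync node (resp.\ to the premise $\one$ of the cut), so in the permuted net the corresponding addresses are exactly $\sigma(\vec i)$ (resp.\ $\sigma(i)$). Then property~(1) of memory structures gives $\Mupdate(\sigma(\vec i),l,\sigma\cdot\mm)=\sigma\cdot\Mupdate(\vec i,l,\mm)$ and $\Mtest(\sigma(i),\sigma\cdot\mm)=\sigma\cdot\Mtest(i,\mm)$, where the action on the right is the natural one on distributions (and on $Bool\times\memories$ pairs). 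Since the net rewrites $R\rednet_s R'$, $R\rednet_{u_b}R_b$ again only depend on $R$, and the restricted index maps $\ind{R_b}$ commute with $\sigma$, we get that $\nu$ is precisely $\sigma$ applied termwise to $\mu$, hence $\nu\sim\mu$.

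The main obstacle — really the only place that needs genuine care rather than unwinding definitions — is the Link case, because there the two reductions may make \emph{different} nondeterministic choices of fresh address, so one cannot just say ``$\nu=\sigma\cdot\mu$''; one has to compose with an extra transposition and check it acts trivially on everything relevant (the memory $\mm$, the index map $\ind{R}$, and crucially the already-present indices, not just the new one). This is where nominal-set reasoning is essential: freshness of $i$ for both $\mm$ and $\ind{R}$ (and likewise $i'$ for $\sigma\cdot\mm$ and $\sigma\cdot\ind{R}$) is exactly what makes the transposition invisible. A minor additional subtlety worth a sentence is that $\Mupdate$ is only a partial map, so in the Update case one should note that definedness is preserved under $\sigma$ (it depends only on arities and pairwise-disjointness of the $i_k$'s, both invariant under the bijection $\sigma$), by property~(2).
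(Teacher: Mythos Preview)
Your proposal is correct and follows essentially the same approach as the paper: a case analysis on the reduction rule, using equivariance of $\Mtest$ and $\Mupdate$ (property~(1) of memory structures) to conclude that $\nu$ is a permutation of $\mu$. The paper's proof only spells out the Test case and dismisses the others with ``similarly checked''; the freshness argument you give for the Link case (composing $\sigma$ with a transposition of two fresh addresses) is in fact relegated by the paper to the Remark immediately following the lemma rather than appearing in the proof itself, so your treatment is if anything more complete at the one point that requires genuine care.
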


\begin{proof}
  Let us check the rule $\rednet_{\Mtest(i)}$. 
  Suppose $(R',\ind{R'},\mm') = \sigma(R,\ind{R},\mm)$,
  $(R, \ind{R}, \mm) \stackrel{r}{\rednet}_{\Mtest\!(\!i\!)}
  \mu\,{=}\, \{  (R_0, \ind{R_0}, \mm_0)^{p_0}\!,$
  $(R_1, \ind{R_1}, \mm_1)^{p_1}  \}$,
  and $(R', \ind{R}', \mm') \stackrel{r}{\rednet}_{\Mtest(i)}$

\noindent
  $\nu = \{  (R_0', \ind{R_0}', \mm_0')^{p_0'},
  (R_1', \ind{R_1'}, \mm_1')^{p_1'}  \}$ by reducing the same redex $r$.
  It suffices to show that $\nu = \sigma \Mswap \mu$.
  Element-wise, we have to check
  $R_i' = R_i$, $\ind{R_i}' = \sigma \circ \ind{R_i}'$, $\mm_i' = \sigma \Mswap \mm_i$, and $p_i' = p_i$ for $i \in \{0,1\}$.
  The first two follow by definition of $\rednet_{\Mtest(i)}$ and
  the last two follow from the equation $\sigma\Mswap(\Mtest(i,m)) = \Mtest(\sigma(i), \sigma\Mswap m)$.
  The other rules can be similarly checked.
\end{proof}

\begin{remark}
  In  the definition of the  reduction rules: 
  \begin{varitemize}
  \item \textit{Link}
    is independent from  the choice of $i$.
    If we chose another address $j$ with the same conditions,
    then we would have gone to
    $(R, \ind{R}\cup\{x\mapsto j\},
     \mm)$.
    However this does not cause a problem:
    by using a permutation $\sigma = (i~ j)$,
    since $\sigma\Mswap\mm = \mm$ we
    have
    $
    \sigma(R, \ind{R}\cup\{x\mapsto i\},\mm)
    =
    (R, \ind{R}\cup\{x\mapsto j\},\mm)
    $
    and therefore $(R, \ind{R}\cup\{x\mapsto i\}, \mm)$ and
    $(R, \ind{R}\cup\{x\mapsto j\},  \mm)$
    are as expected the exact same program net.
  \item  In the rules \textit{Update} and \textit{Test}, 
    the involved  $\onelk$ nodes  are required to be active.
  \end{varitemize}
\end{remark}

The pair $(\mathcal{N},\rednet)$ forms a PARS.
Reduction can happen at different places in a net, however the diamond property allows us to deal with this seamlessly.

\begin{figure}
  \centering
  \includegraphics[width=\columnwidth,page=4]{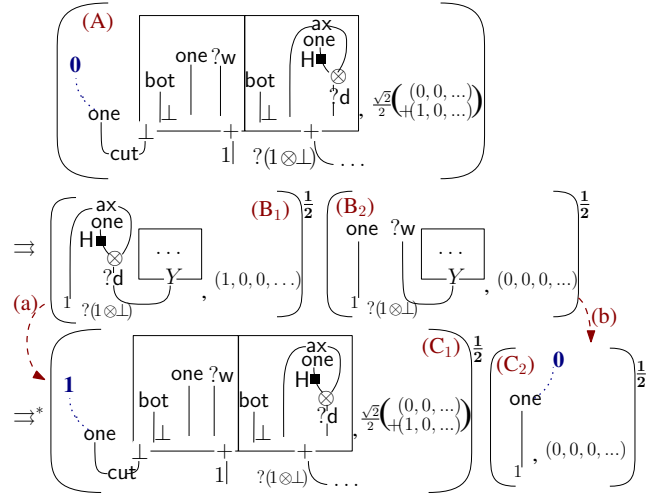}
  \caption{Example of Program Net Rewriting.}
  \label{fig:choice-program-net}
\end{figure}

\begin{proposition}[Program Nets are Diamond]\label{prop:netDiamond}
The PARS $(\mathcal{N}, \rednet)$ satisfies the diamond property.\qed
\end{proposition}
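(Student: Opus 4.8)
The plan is to reduce the distribution-level diamond property of Definition~\ref{base} (which concerns $\redd$ on $\dists{\mathcal{N}}$) to a one-step \emph{parallel-moves} statement about single raw program nets, and then lift it. The parallel-moves statement I would establish is: if a raw program net $(R,\ind{R},\mm)$ has two \emph{distinct} redexes $r_1\neq r_2$ and $(R,\ind{R},\mm)\stackrel{r_k}{\rednet}\mu_k$ for $k=1,2$, then $r_2$ has a unique residual redex in every element of $\supp{\mu_1}$ and $r_1$ has a unique residual redex in every element of $\supp{\mu_2}$; reducing those residuals pointwise and reweighting by $\mu_1$ gives a distribution $\mu_3$ which coincides with the one obtained by reducing the residuals of $r_1$ in the elements of $\supp{\mu_2}$; and, since every element of $\supp{\mu_1}\cup\supp{\mu_2}$ still carries a residual redex, $\term{\mu_1}=\term{\mu_2}=\zerodst$. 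In particular $\mu_1\redd\mu_3$ and $\mu_2\redd\mu_3$.

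The substance of the argument lies in this parallel-moves statement, and I would split it into a ``net part'' and a ``memory part''. For the net part: a redex of a \SMELLY\ net occupies only finitely many nodes on the surface (a cut with its two neighbouring nodes, a dereliction/box cut, a sync node, or a $\onelk$ node), and firing one surface redex neither destroys nor duplicates a \emph{different} surface redex. The only rules that duplicate material are the exponential-box rules and the $y$-unwinding, but these copy the \emph{content} of a box, which lives at depth $\geq 1$ and hence contains no surface redex, while the surrounding surface graph is untouched and no genuinely new surface redex is created (exponential steps being closed); weakening only discards such deep content, and the two contents selected by a \bbox\ step differ only below depth $0$. Thus $r_2$ survives the firing of $r_1$ as a single, canonically determined residual, the two residual reductions rewrite disjoint regions, and the underlying \SMELLY\ nets produced commute --- a standard parallel-moves argument for proof nets, in the style of \cite{lics2015}. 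For the memory part: distinct redexes involve distinct $\onelk$ nodes or cuts, hence, $\ind{R}$ being injective, the tuples of addresses they act upon are pairwise disjoint; property~(3) of memory structures (Sec.~\ref{sec:mem-struct}: disjoint updates commute, disjoint tests commute, and tests commute with disjoint updates) then yields exactly that the memory component of $\mu_3$, as well as the branch probabilities of any \emph{Test} step involved, is independent of the order in which $r_1$ and $r_2$ are fired. The \emph{Link} rule I would treat as in the Remark following Lemma~\ref{lemma: equiv}: a residual \emph{Link} can always pick a fresh address distinct from the one just allocated, and since program nets are quotients by $\Finbij(I)$ and a fresh permutation fixes $\mm$, the order of two allocations --- or of an allocation and a disjoint update/test --- is immaterial.

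To lift to $\dists{\mathcal{N}}$, observe that two steps $\mu\redd\nu$ and $\mu\redd\xi$ arise from choices $a\mapsto r_a$ and $a\mapsto r'_a$ of one redex per non-terminal $a\in\supp{\mu}$, with $\nu=\term{\mu}+\sum_a\mu(a)\mu_a$ and $\xi=\term{\mu}+\sum_a\mu(a)\mu'_a$, where $\mu_a,\mu'_a$ are the results of firing $r_a,r'_a$ in $a$. For each $a$ with $r_a\neq r'_a$, the parallel-moves statement gives a common one-step $\redd$-reduct $\zeta_a$ of $\mu_a$ and of $\mu'_a$, and $\term{\mu_a}=\term{\mu'_a}=\zerodst$; for each $a$ with $r_a=r'_a$ I take $\zeta_a$ to be any fixed one-step $\redd$-reduct of $\mu_a$. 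Summing over $a$ gives $\term{\nu}=\term{\mu}+\sum_a\mu(a)\term{\mu_a}=\term{\mu}+\sum_a\mu(a)\term{\mu'_a}=\term{\xi}$, which is Definition~\ref{base}(1). For Definition~\ref{base}(2) I set $\rho:=\term{\mu}+\sum_a\mu(a)\zeta_a$; the remaining point is to check that the per-$a$ prescriptions (``reduce the residual of $r'_a$'', resp.\ ``of $r_a$'', resp.\ the fixed chosen redex) assemble into a single coherent choice of one redex per non-terminal element of $\supp{\nu}$ (resp.\ $\supp{\xi}$), so that $\nu\redd\rho$ and $\xi\redd\rho$ each hold in one step; here the canonicity of residuals supplied by the parallel-moves statement ensures that distinct occurrences of the same net receive the same prescription.

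The hard part will be the parallel-moves statement: checking, uniformly across all the rewrite rules (those of Fig.~\ref{probaRed} together with the program-net rules), that two distinct surface redexes never interfere and that the memory actions they trigger commute --- the latter being precisely the use for which commutativity of the memory operations (property~(3)) was built into the definition of a memory structure. The distribution-level bookkeeping in the third paragraph is the second delicate point, but it becomes routine once canonicity of residuals is in hand.
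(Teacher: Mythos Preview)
Your approach is essentially the paper's own. The paper packages your ``parallel-moves'' statement as a separate Locality Lemma (Lemma~\ref{lem:locality}: if $\R$ has two distinct redexes $r_1,r_2$, then $r_2$ survives in every element of $\supp{\mu_1}$ and vice versa), and from it derives exactly your two facts: (1) if a one-step reduct contains a terminal element then the starting net had a unique redex, and (2) distinct-redex firings close to a common one-step reduct by firing the surviving residual. The ``non-evident'' cases the paper singles out are precisely yours: two \emph{Link} steps are reconciled by a transposition of fresh addresses (your appeal to the remark after Lemma~\ref{lemma: equiv}), and two memory-touching steps commute because injectivity of $\ind{R}$ guarantees disjoint address tuples, so the equations of Sec.~\ref{sec:mem-struct} apply.

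The one place you are more explicit than the paper is the lift from single nets to $\redd$ on arbitrary distributions: the paper dispatches this in a single sentence (``item~1 implies $\term{\nu}=\term{\xi}$, item~2 implies $\exists\rho\ldots$''), whereas you spell out the per-$a$ bookkeeping and flag the coherence issue (the same $b$ might lie in $\supp{\mu_{a_1}}\cap\supp{\mu_{a_2}}$ and receive conflicting residual prescriptions). Your resolution via ``canonicity of residuals'' is not quite airtight as stated---different $r'_{a_1},r'_{a_2}$ can leave different residuals in a shared $b$---but the paper does not address this point at all, so you are at least no less rigorous than the published proof, and the overall strategy is identical.
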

The proof \condinc{}{(see Appendix \ref{app:nets}) }relies on
commutativity of the memory operations.
Due to Th. \ref{th:proba_term}, program net reduction enjoys all
the good properties we have studied in Sec. \ref{sect:confluence}:
\begin{corollary}
  The relation $\rednet$ satisfies Confluence, Uniformity and
  Uniqueness of Normal Forms (see Th. \ref{th:proba_term}).
  \qed
\end{corollary}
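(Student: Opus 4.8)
The plan is to verify Definition~\ref{base} directly for $(\mathcal N,\rednet)$. Unfolding the definition of $\redd$, a step $\mu\redd\nu$ is the datum of a one-step reduction $a\rednet_{r_a}\delta_a$ for every non-terminal $a\in\supp{\cont\mu}$, together with $\nu=\term\mu+\sum_a\mu(a)\cdot\delta_a$; similarly $\mu\redd\xi$ is given by choices $a\rednet_{r'_a}\delta'_a$ with $\xi=\term\mu+\sum_a\mu(a)\cdot\delta'_a$. Everything therefore reduces to a \emph{local} statement about a single program net carrying two redexes: if $P$ has redexes $r\neq r'$, then firing either one yields a distribution all of whose elements are non-terminal, and the two one-step reducts can be joined by one further $\rednet$-step on each side. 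Granting this, condition~(1) of Definition~\ref{base} follows: for every $a$ either $r_a=r'_a$, and then $\delta_a=\delta'_a$ (the only freedom in a single step is the choice of fresh name in \textit{Link}, which produces the same program net), so $\term{\delta_a}=\term{\delta'_a}$; or $r_a\neq r'_a$, and then $\term{\delta_a}=\zerodst=\term{\delta'_a}$. Summing, $\term\nu=\term\mu+\sum_a\mu(a)\cdot\term{\delta_a}=\term\mu+\sum_a\mu(a)\cdot\term{\delta'_a}=\term\xi$. Condition~(2) follows by assembling the elementwise joins into single $\redd$-steps $\nu\redd\rho$ and $\xi\redd\rho$; the only subtlety is choosing the residual redexes coherently for a net occurring in several of the $\delta_a$'s, which is the usual bookkeeping of residuals of surface redexes.

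The local statement is proved by a case analysis on the pair of redexes, each of which is a \SMELLY\ redex (ax / unit / multiplicative / exponential cut, or $Y$-unwinding), a \textit{Link}, an \textit{Update} at a sync node, or a \textit{Test} at a \bbox. The structural backbone is that at depth $0$ two distinct redexes never overlap destructively: each redex is localized at a bounded set of surface nodes (a cut and the two doors feeding it, a sync node and the $\onelk$'s feeding it, a \bbox\ with the $\onelk$ cut against it, or a single $\onelk$), and since every edge is the premise of at most one node, distinct redexes are edge-disjoint; moreover an \emph{erasing} step (weakening, or the discarded branch of a \bbox) deletes only material lying inside the redex or below the surface, thanks to the ``surface reduction'' and ``exponential steps are closed'' constraints --- which is exactly what prevents one step from silently absorbing another redex, so that neither one-step reduct is ever terminal. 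Hence $r'$ has a well-defined residual after $r$ (and symmetrically), and for two \SMELLY\ redexes the diamond closes with one step on each side by the surface-reduction confluence already established for \SMELLY\ nets in \cite{lics2015}. \textit{Link} touches only the indexing map $\ind{R}$, hence commutes with every other step up to the fresh name, which is harmless since program nets are taken up to $\Finbij(I)$ and $(i~j)\Mswap\mm=\mm$ whenever $i,j$ are fresh for $\mm$ (as in the Remark after Lemma~\ref{lemma: equiv}).

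The genuinely new part is the interaction with the memory, which is where commutativity of the memory operations enters. The key observation is that two distinct \textit{Update}/\textit{Test} redexes of $P$ act on pairwise-disjoint addresses: an \textit{Update} at a sync node $s$ uses the $\ind{R}$-images of the $\onelk$-occurrences feeding $s$, a \textit{Test} uses the $\ind{R}$-image of the $\onelk$ cut against its \bbox, and since $\ind{R}$ is injective and each $\one$-occurrence feeds at most one node, two distinct such redexes involve disjoint tuples of addresses. Performing the corresponding memory actions in either order then yields the same result by properties~(1) and~(3) of memory structures (commutation of disjoint updates, of disjoint tests, and of a test with a disjoint update; see Fig.~\ref{fig:commut}); in the \textit{Test}/\textit{Test} case this gives equality of the two four-way distributions, and one also checks that the domain restrictions $\ind{R_0},\ind{R_1}$ prescribed by the \textit{Test} rule are preserved by firing the other redex, because a $\onelk$-occurrence at the surface of $R$ stays at the surface of both $R_0$ and $R_1$. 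The main obstacle is thus the sheer size and uniformity of the case analysis --- in particular every pair in which one redex is the probabilistic \textit{Test} --- together with the residual bookkeeping needed to turn the elementwise joins into honest single $\redd$-steps; no individual case is deep. Confluence and uniqueness of normal forms then follow from Theorem~\ref{th:proba_term}.
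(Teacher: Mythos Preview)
Your proposal is correct and follows essentially the same approach as the paper. The paper factors the argument through Proposition~\ref{prop:netDiamond} (whose proof in Appendix~\ref{app:nets} is precisely your local statement: a Locality Lemma ensuring that two distinct redexes survive each other's firing, a case analysis using the equivalence-up-to-permutation for \textit{Link}/\textit{Link} and the memory commutativity equations for \textit{Update}/\textit{Test} pairs via injectivity of $\ind{R}$), and then obtains the Corollary immediately from Theorem~\ref{th:proba_term}; you have simply inlined Proposition~\ref{prop:netDiamond} into the Corollary.
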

The following two results can be obtained as adaptations of
similar ones in  \cite{lics2015}.
\begin{theorem}[Deadlock-Freeness of Net Reduction]\label{main_lem} 
  Let $\R = [(R,\ind{R},$ $  \mm)]$ be a program net such that no $\bot$, $?$ or $!$
  appears in the conclusions of the net $R$.  If $R$ contains cuts, a
  reduction step is always possible.\qed
\end{theorem}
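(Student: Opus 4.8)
The plan is to adapt to the present system the cut-elimination (deadlock-freeness) argument for \SMELLY\ nets of \cite{lics2015}. Two features need attention with respect to that reference: the hypothesis on the conclusions of $R$, which is exactly what forbids a normal form still containing a cut, and the sync nodes, which however never obstruct a cut. Throughout, ``redex'' means a surface redex, and we use freely that in a correct pre-net no directed downward path is cyclic, since such a cycle would be a cyclic switching path.

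\emph{Step 1: $R$ has a cut at the surface.} Assume towards a contradiction that every cut of $R$ lies at depth $\geq 1$, hence inside some surface box $b$. The principal conclusion $e_0$ of $b$ is typed by $\bot$ (if $b$ is a \bbox) or by $!A$ (if $b$ is a $!$- or $Y$-box); its label therefore contains $\bot$ or $!$. Follow $e_0$ downward: the current edge is a conclusion of the current node and a premiss of at most one node. If it is a premiss of no node it is a conclusion of $R$, which then contains $\bot$ or $!$, contradicting the hypothesis. If it is a premiss of a cut, that cut is at depth $0$, contradicting our assumption. Otherwise it is a premiss of a $\otimes$, $\parr$ or $?d$ node — the only nodes, besides cut, that can carry a premiss of type $\bot$ or $!A$ — and the conclusion of that node again contains $\bot$, $!$ or $?$. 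By acyclicity this descent is finite and must end at a cut or at a conclusion of $R$, both impossible. The same descent started from the auxiliary conclusions of $b$ (typed by $?\Gamma$, hence containing $?$) settles the remaining subcase. Hence $R$ contains a cut at the surface.

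\emph{Step 2: some surface cut is a redex, possibly after preparatory steps.} We proceed by well-founded induction on a measure on surface cuts of the kind used in \cite{lics2015} — essentially the position of the involved boxes in the box-dependency order of $R$. Fix a minimal surface cut $c$ and inspect the two nodes immediately above it. If one premiss is an axiom conclusion, $c$ is an axiom redex. If the two premisses are respectively the conclusion of a $\otimes$ node and of a $\parr$ node, $c$ is a multiplicative redex. If they are the conclusion of an $\onelk$ node and the principal conclusion of a \bbox, then, after possibly one Link step to activate the $\onelk$ node (itself a legal reduction), $c$ is a test redex — precisely the non-deterministic pair $u_0,u_1$. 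If one premiss is the principal conclusion of an exponential box with no auxiliary conclusion, $c$ is an exponential redex ($?d$, $?w$, $?c$, or the $Y$-unwinding $y$). In every remaining case the offending premiss is the conclusion of a node that is not yet the matching one — a $\otimes$/$\parr$/box-door higher up, or an exponential box still carrying auxiliary conclusions — and by following the corresponding edge downward as in Step 1 and invoking acyclicity we reach a surface cut strictly smaller for the measure, to which the induction hypothesis applies. A sync node met on the way is harmless: it is either ready to fire (rule $s$, a legal reduction) or simply transparent, exactly as in \cite{lics2015}. This shows that a reduction step is always available.

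\emph{Main obstacle.} The delicate point is the last case of Step 2: an exponential box still carrying auxiliary conclusions, under the closed-reduction discipline (exponential steps fire only on boxes with no auxiliary conclusion). One must guarantee that the cuts on those auxiliary conclusions can always be brought to a redex first, so that the box is eventually emptied, without the downward chase running into a loop; this is precisely where the correctness criterion is used in an essential way, and it is the technical core of the adaptation from \cite{lics2015}. The rest is a routine transcription, the sync nodes and the memory playing no obstructive role.
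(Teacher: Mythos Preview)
The paper gives no proof of this statement beyond the remark that it ``can be obtained as adaptations of similar ones in \cite{lics2015}''. Your sketch carries out exactly that adaptation and is sound in outline; the points at which you defer to \cite{lics2015} (the well-founded measure on surface cuts in Step~2, and the closed-box case you flag as the main obstacle) are precisely the points the paper itself leaves to that reference.
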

\begin{corollary}[Cut Elimination] \label{cutel} With the same
  hypothesis as above, if $\R\not\rednet$, (\ie\ no further reduction
  is possible) then $\R$ is cut free.\qed
\end{corollary}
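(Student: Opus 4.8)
The plan is to obtain the statement as an immediate contrapositive of the Deadlock-Freeness theorem (Th.~\ref{main_lem}). Fix a program net $\R = [(R,\ind{R},\mm)]$ satisfying the standing hypothesis, namely that no $\bot$, $?$ or $!$ occurs among the conclusions of $R$, and assume $\R \not\rednet$, i.e.\ $\R$ is a normal form: none of the reduction clauses of Def.~\ref{def:prognets} --- \textit{Link}, \textit{Update}, \textit{Test}, nor the remaining \SMELLY\ cut steps of Fig.~\ref{probaRed} lifted through clause~\ref{item:MELLRed} --- applies to $\R$. I must show that $R$ is cut free.

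Suppose, towards a contradiction, that $R$ still contains at least one cut. Since the hypothesis on the conclusions of $R$ is exactly the one required by Th.~\ref{main_lem}, that theorem applies and guarantees that some reduction step on $\R$ is possible, i.e.\ $\R \rednet \mu$ for some $\mu \in \dists{\prognetsset}$. This contradicts $\R \not\rednet$. Hence $R$ contains no cut, which is precisely the assertion that $\R$ is cut free.

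The only point worth checking is that the informal phrase ``a reduction step is always possible'' in Th.~\ref{main_lem} refers to the very relation $\rednet$ of Def.~\ref{def:prognets} (so that its negation coincides with the normal-form condition $\R\not\rednet$), which it does; with that in place the argument is a single contraposition. I do not expect any genuine obstacle here: all of the substance lies in Th.~\ref{main_lem} itself, whose proof is obtained by adapting the corresponding deadlock-freeness argument of~\cite{lics2015} to the memory-annotated setting.
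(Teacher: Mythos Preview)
Your argument is correct and matches the paper's intended approach: the corollary is marked with \qed{} immediately after its statement, indicating it is the direct contrapositive of Th.~\ref{main_lem}, exactly as you spell out. There is nothing to add.
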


\paragraph{Example.}
The net in LHS of Fig.~\ref{fig:recursion-net-1} can be embedded into
a program net with a quantum memory of empty support:
$(0,0,0,\ldots) \equiv{}$ ``$\ket{000\ldots}$''. It reduces according
to Fig.~\ref{fig:recursion-net-1}, with the same memory. The next
step requires a $\rednet_{\mathrm{link}}$-rewrite step to attach a fresh
address---say, $0$---to the $\onelk$ node at surface. The
$\mathsf{H}$-sync node then rewrites with a $\rednet_{\mathrm{update}}$-step, and we
get the program net (A) in Fig.~\ref{fig:choice-program-net} with
the ``update'' action applied to the memory: the memory corresponds to
$\frac{\sqrt2}2(\ket0+\ket1)\otimes\ket{00\ldots}$. From there, a
choice reduction is in order: it uses the ``test'' action of the
memory structure, which, according to Sec.~\ref{sec:quantum-memory}
corresponds to the measurement of the qubit at address $0$. This
yields the probabilistic superposition of the program nets (B${}_1$)
and (B${}_2$). As the net in (B${}_1$) is the LHS of
Fig.~\ref{fig:recursion-net-1}, it reduces to (C${}_1$) (dashed arrow
(a)), similar to (A) modulo the fact that the address $0$ was not
fresh: the $\rednet_{\mathrm{link}}$-rewrite step cannot yield $0$: here we
choose $1$. Note that we could have chosen any other non-zero number as the address.
The program net (B${}_2$) rewrites to (C${}_2$) (dashed arrow (b)):
the weakening node erases the Y-box, and a fresh variable is
allocated. In this case, the address 0 is indeed fresh and can be
picked.

\section{A Memory-Based  Abstract Machine}\label{PSIAM}\label{sec:msiam}
In this section we introduce a class of memory-based token machines,
called the \PSIAM\ (\emph{Memory-based Synchronous Interaction Abstract Machine}).
The base rewrite system on which the \PSIAM\ is built, is a
variation\footnote{In this paper we make this transition
  non-deterministic; otherwise there is no major difference.} of the
\SIAM\ multi-token machine from \cite{lics2015}, which we recall in
Sec. \ref{sec:siam}.  The specificity of the \SIAM\ is to allow not
only parallel threads, but also \emph{interaction} among them,
\ie\ \emph{synchronization}. Synchronization happens in particular at
the sync nodes (unsurprisingly, as these are  nodes
introduced with this purpose), but also on the additive boxes (the
\bbox). The transitions at the \bbox\ model \emph{choice}: as we see
below, when the flow of computation reaches the \bbox\ (\ie\ the
tokens reach the auxiliary doors), it continues on one of the two
sub-components, depending on the tokens which are positioned at the
principal door.

The original contribution of this section is contained  in Sections
~\ref{sec:psiamDef} through ~\ref{semantics}, where we use our
parametric construction to define the \PSIAM\ $\machine{\R}$ for $\R$
as a PARS consisting of 
	a set of \emph{states} $\sts$, and
	a \emph{transition relation} $\mathord{\redsiam} \subset \sts \times \dists{\sts}$,
and establish its main properties, in particular Deadlock-Freeness
(Th.~\ref{deadlock-free}), Invariance (Th.~\ref{invariance})
and Adequacy (Th.~\ref{adequacy psiam}).

\subsection{\SIAM}\label{sec:siam}
Let $R$ be a net. The \SIAM{} for $R$ is given by a set of states and a transition relation  on states.
Most of the definitions are standard.

\emph{Exponential signatures} $\sigma$  and \emph{stacks} $s$ are defined by\\ 
$
\begin{array}{l@{~}l@{~}l}
\sigma \bnf {*} \midd l(\sigma) \midd r(\sigma)
\midd \encode{\sigma}{\sigma} \midd y(\sigma,\sigma)\\
\stk \bnf \emp \midd l.\stk \midd r.\stk \midd \sigma.\stk \midd \delta
\end{array}
$\\
where $\epsilon$ is the empty stack and $.$ denotes concatenation.
Two kinds of stacks are defined: (1.) the \emph{formula stack} and
(2.) the \emph{box} stack. The latter is the standard GoI way
to keep track of the different copies of a box.  The former describes
the formula path of either an occurrence $\atomone$ of a unit, or an
occurrence $ \Diamond$ of a modality, in a formula $A$.
Formally, $\stk$ is a \emph{formula stack on $A$} if either
$\stk=\delta$ or $\stk[A]= \atomone$ (resp. $\stk[A]= \Diamond$), with
$\stk[A]$ defined as follows: $\emp [\atomone] = \atomone$,
$\sigma.\delta[\Diamond B] = {\Diamond}$, $\sigma.t[\Diamond \typetwo]
= t[\typetwo] $ whenever $t\neq\delta$, $l.t[\typetwo \Box
  \typethree]= t[\typetwo] $ and $r.t[\typetwo \Box
  \typethree]=t[\typethree]$ (where $\Box$ is either $\otimes$ or
$\parr$).  We say that $\stk$ \emph{indicates} the occurrence
$\atomone$ (resp. $ \Diamond$). 
\begin{example}\label{ex:indic}
  Given the formula $A =\bang(\bot\otimes{!\one})$, the stack $*.\delta$
  indicates the \emph{leftmost} occurrence of $\bang$,
  the stack $ {*}.r.{*}.\delta$ indicates the
  \emph{rightmost} occurrence of $!$, and $ *.l[A]=\bot$.
\end{example}
\paragraph{Positions.}
Given a net $R$, the set of its \emph{positions} $\POSALL_R$ contains all
the triples $(\edg, \stk, \bstk)$, where $\edg$ is an edge of
$\netone$, $\stk$ is a formula stack on the type $A$ of $\edg$, and
$\bstk$ (the \emph{box stack}) is a stack of $n$ exponential
signatures, where $n$ is the depth of $\edg$ in $R$. We use the
metavariables $\ss$ and $\pp$ to indicate positions. For each position
$\pp=(\edg,\stk,\bstk)$, we define its \emph{direction} $\dr (\pp)$ to
be \emph{upwards} ($\up$) if $\stk$ indicates an occurrence of $!$ or
$\bot$, to be \emph{downwards} ($\down$) if $\stk$ indicates an
occurrence of $?$ or $\one$, to be \emph{stable} ($\stable$) if $\stk=
\delta$ or if the edge $\edg$ is the conclusion of a $\botlk$ node.
The following subsets of $\POSALL_R$ play a  role in the
definition of the machine:
\begin{varitemize}
\item 
  the set $\POSI_R$ of \emph{initial positions}  $\pp=(\edg, \stk, \emp)$,
      with $e$  conclusion of
  $\netone$, and $\dr(\pp)$ is $ \up$;
\item 
  the set $\POSF_R$ of  \emph{final positions} $\pp=(\edg, \stk, \emp)$, with $e$  conclusion of
    $\netone$, and $\dr(\pp)$ is $ \down$;
\item 
  the set $\ONES_{\netone}$ of positions $(\edg, \emp, \bstk)$,
  $\edg$ conclusion of a $\onelk$ node;
\item 
  the set $\DEREL_{\netone}$ of positions $(\edg, *.\delta, \bstk)$,  $\edg$  conclusion of a $\dlk$ node;
\item   
  the set $\PDOORS_R$ of the positions $\pp$ for which
  $\dr(\pp) = {\stable}$;
  \item the set of starting positions 
    $\START_{\netone}= \POSI_R \cup
    \ONES_{\netone}\cup \DEREL_{\netone} $. 
\end{varitemize}

\paragraph{\SIAM\ States.}
A \emph{state} $(T, \orig)$ of $\M_R$ is a set of positions $T\subseteq
\POSALL_R$
equipped with an injective map $\orig: T \to
\START_R$.  Intuitively, $T $ describes the current positions of
the tokens, and $\orig$ keeps track of where each such token started its path.

 A state  is \emph{initial} if $ T\subseteq \POSI_R$ and $\orig$ is
 the identity. We indicate the (unique) initial state of $\M_R$ by
   $I_R$. A state $T$ is \emph{final} if all positions in $T$
   belong to either $\POSF_R$ or $\PDOORS_R$.
   
   With a slight abuse of notation, we will denote the state $(T,\orig)$ also by $T$.
   Given a state $T$ of $\M_R$, we say
   that \emph{there is a token in $\pp$} if $\pp\in T$.  We use
   expressions such as ``a token moves'', ``crosses a node'', in the
   intuitive way.

\paragraph{\SIAM\ Transitions.}
The  transition rules of the \SIAM\ are described in Fig.~\ref{fig:trRules bot} and \ref{fig:trRules}. 
Rules (i)-(iv) require synchronization among different tokens; this is expressed by specific multi-token conditions
which we discuss in the next paragraph.   First, we  explain the graphical conventions and give an overview of the rules.
\begin{figure}
  \centering
  \includegraphics[width=.85\columnwidth]{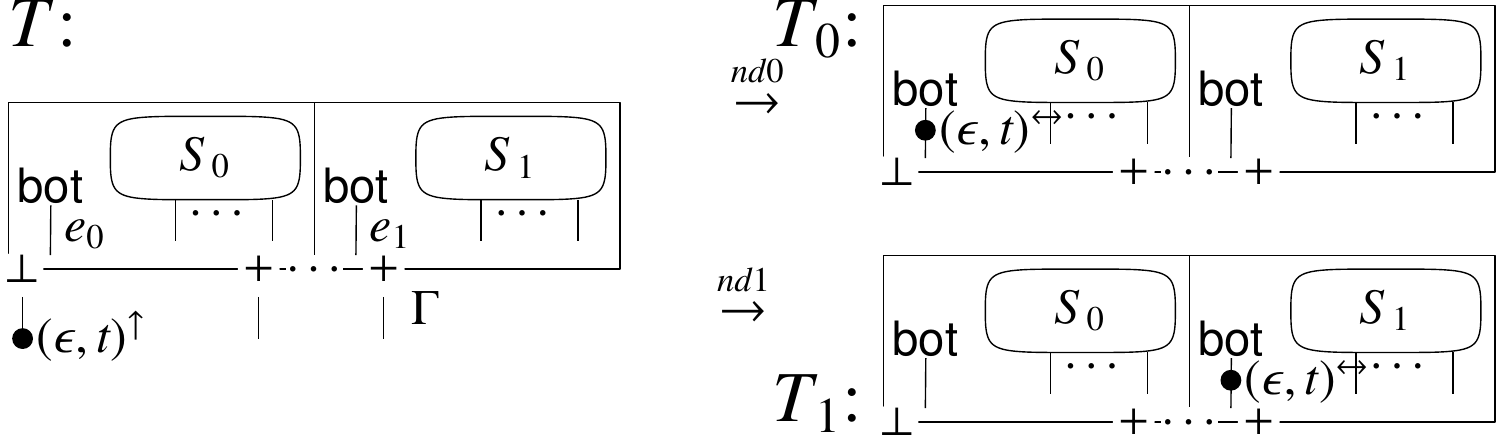}
  \caption{\SIAM\: Non-Deterministic Transition Rules.} 
  \label{fig:trRules bot}
\end{figure}
\begin{figure}
  \centering
  \includegraphics[width=\columnwidth]{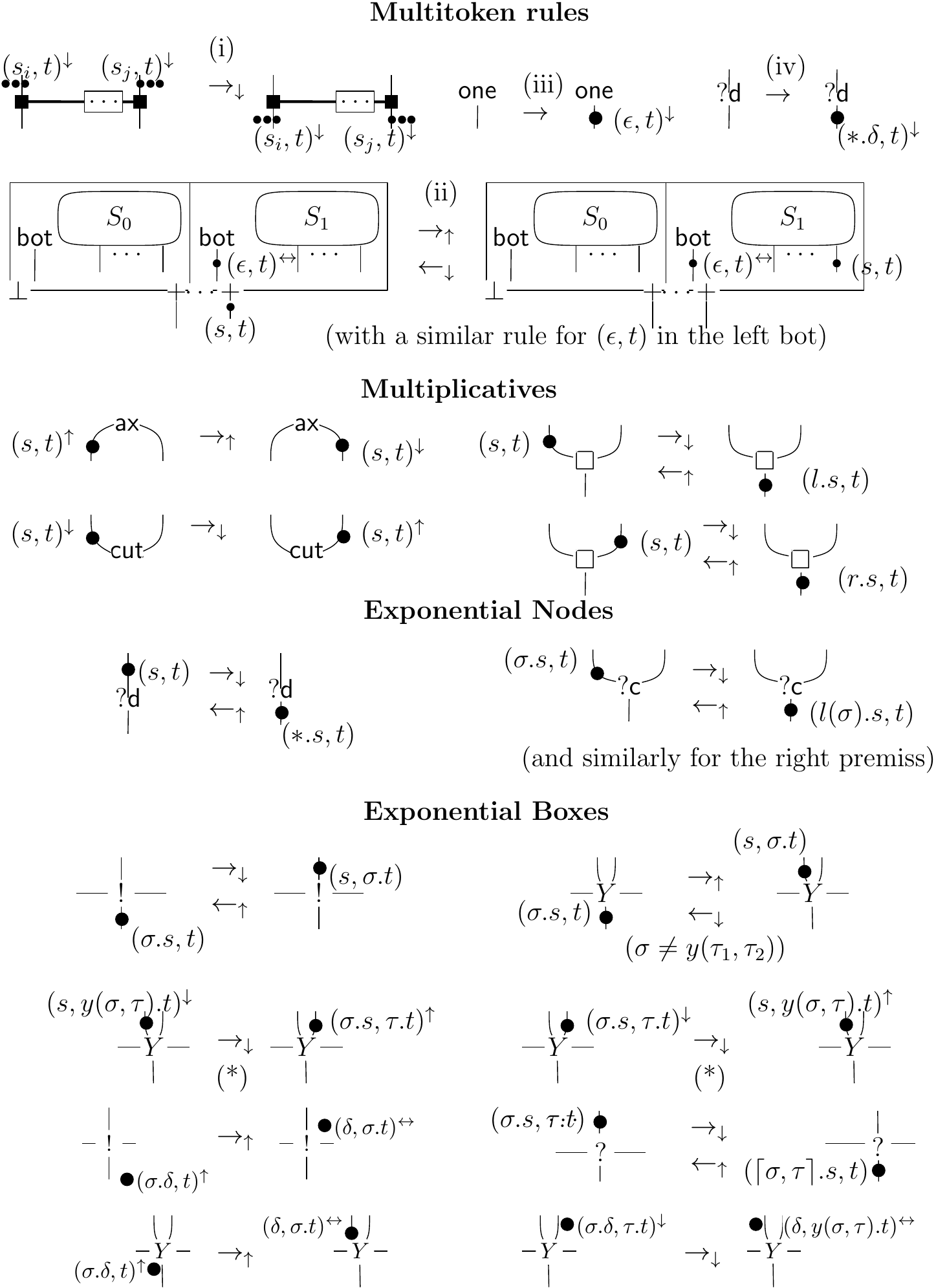}
  \caption{\SIAM: Deterministic Transition Rules.}
  \label{fig:trRules}
\end{figure}

The position $\pp = (\edg, \stk, \bstk)$ is represented graphically by
marking the edge $e$ with a bullet $\bullet$, and writing the stacks
$(\stk, \bstk)$.  A transition $T \red U$ is
given by depicting only the positions in which $T$ and $U$
differ. 
 It is  intended that all positions of $T$ which do
not explicitly appear in the picture also belong to $U$.  To
save space, in Fig.~\ref{fig:trRules}  the transition
arrows are annotated with a \emph{direction}; this means that the rule applies (only)
to positions which have that direction.  When useful, the direction of a position is directly annotated  with
${}^{\down},{}^{\up}$ or ${}^{\stable}$. Note that no transition is
defined for stable positions.  For boxes,  whenever a token is on a conclusion of
a box, it can move into that box (graphically, the token ``crosses''
the border of the box) and it is modified as if it were crossing a
node. For exponential boxes, Fig.~\ref{fig:trRules} depicts only
the border of the box.
We do not explicitly give the function $\orig$, which is immediate to reconstruct when keeping in mind that it is a pointer to the origin of the token.

We briefly discuss  the most interesting transitions (we refer to
\cite{lics2015} for a broader discussion).  \emph{Fixpoints:} the
recursive behavior of Y-boxes is captured by the exponential signature
in the form $y(\cdot,\cdot)$, and the associated
transitions. \emph{Duplication:} the key is rule (iv), which generates a
 token on the conclusion of a $\dlk$ node; this token
will then travel the net, possibly crossing a number of contractions,
until it finds its exponential box; intuitively, each such 
token corresponds to \emph{a copy of a box}.  \emph{One:} the behavior
of the token generated by rule (iii) on the conclusion of a $\onelk$
node is similar to that of a dereliction token; the $\onelk$ token
searches for its \bbox. \emph{Stable tokens:} when the token from {an
  instance} of a $\dlk$ or of a $\onelk$ node ``has found its box'',
\ie\ it reaches the principal door of a box, the token become
\emph{stable} ($\stable$). A stable token is akin to a
marker sitting on the principal door of its box, 
keeping track of the box copies and of the choice made in each specific
copy.

\paragraph{Multi-token Conditions.} 
The rules marked by (i), (ii), (iii), and (iv) in
Fig.~\ref{fig:trRules} require the tokens to interact,  which is
formalized by  multi-token conditions. Such conditions allow, in particular, to capture choice
and synchronization.  
Below we give
 an intuitive presentation; we refer to \cite{lics2015,LVarxiv}
for the formal details. \condinc{}{For convenience we also recall
  them in Appendix \ref{sec:multi-token}.}
 
\emph{Synchronization, rule (i)}.  To cross a sync node $l$, all the positions on the premises of $l$ (for the  same box stack $\bstk$) must be  filled; intuitively, having the same $\bstk$, means that
the positions all belong to the same  copy of $l$. Only when all the tokens have reached $l$,  they can cross it; they do so   simultaneously.
%
 
\emph{Choice, rule (ii)}. Any token arriving at a \bbox\ on an
auxiliary door must wait for \emph{a token on the principal door to
  have made a choice} for either of the two contents, $S_0$ or
$S_1$:  a token $(e,s,t)$ on the conclusions $\Gamma$
of the \bbox\ will  move to $S_0$ (resp. $S_1$)  \emph{only if}   the principal door of $S_0$ (resp. $S_1$) has a token with the same 
$\bstk$. 
    
The rules marked by \emph{(iii) and (iv)} also carry a multi-token
condition, but in a more subtle way: a token is enabled to start its
journey on a $\onelk$ or $\dlk$ node only when its box has been
opened; this  reflects in the \SIAM\  
the constraint of   \emph{surface reduction}
of nets.

\subsection{\PSIAM}\label{sec:psiamDef}
Similarly to what we have done for nets, we enrich the machine with a memory, and use the \SIAM\ and the operations on the memory to define a PARS.

\paragraph{\PSIAM\ States.}
Given a memory structure $\memories =(\memories, I, \mathcal L)$ and 
  a raw  {program net} $( R,\ind{R} ,\mm_R)$ on $\memories$, 
 a \emph{ raw state} of the \PSIAM\ $\machine{\R}$ is 
  a tuple $(T, \ind{\st}, \mm_T)$ where 
  \begin{varitemize}
  \item $\stT$ is a state of $\M_R$,
  \item  $\ind{\stT}\colon \START_R \to I$ is a partial  injective map,
  \item $\mm_T \in \memories$.
  \end{varitemize}
 States    are defined as the equivalence class  $\stTT =[(\stT,\ind{\stT}, \mm_T)]$ of row states \emph{over permutations}, with 
    the  action  of   $\perm(I)$ on  tuples  being the natural one. 
 
    \paragraph{\PSIAM\ Transitions.} Let $\R$ be a program net, and
    $\stTT$ be a state $[(\stT,\ind{\stT}, \mm_T)]$ of $\M_{\R}$. We
    define the transition
    $\st \red \mu \in \sts \times \dists{\sts} $.  As we did for
    program nets, we first give the definition on raw states.  The
    definition depends on the \SIAM\ transitions for $T$.  Let us
    consider the possible cases.

 \begin{varenumerate}
 \item
   \textit{Link}. Assume $\stT \stackrel{(iii)}\red \stU$
   (Fig.~\ref{fig:trRules}), and let $n$ be the $\onelk$ node, $x$ its
   conclusion, and $\pp$ the new token in $\stU$.  We set
    $$(\stT,\ind{\stT},\mm)    \red_{\mathrm{link(n,i)}} 
    (\stU, \ind{\stT}\cup\{ \orig(\pp)\mapsto i\},\mm)$$
    where we choose  $i=\ind{R}(x)$ if the  $\onelk$ node is active,  and otherwise an address   $i$  which is fresh for both  $\ind{\stT}$ and $(\mm)$.
  \item
    \textit{Update}.  Assume $\stT \stackrel{(i)}\red \stU$
    (Fig.~\ref{fig:trRules}), $l$ is the name associated to the sync
    node, and $\vec i$ are the addresses which are associated to its
    premises (by composing $ \orig{}$ and $\ind{}$), then $$(\stT,
    \ind{\stT}, \mm) \red_{\mathrm{update(s)}} \{(\stU, \ind{\stT},
    \upd(l,\vec i,\mm)^1\}.$$
  \item
    \textit{Test}.  Assume $\stT \stackrel{nd 0}\red \stT_0$ and $\stT
    \stackrel{nd 1}\red \stT_1 $ (Fig.\ \ref{fig:trRules bot},
    non-deterministic transition).  If $\pp\in \stT$ is the token
    appearing in the redex (Fig.~\ref{fig:trRules bot}), and $i$ the
    addresses that $\ind{\stT}$ associates to $\orig(\pp)$,
    then
    \begin{center} $(\stT, \ind{\st}, \mm)
      \red_{\mathrm{test(i)}}$\\ $\test(i,\mm)[\Mfalse:=(\stT_0,
        \ind{\stT}), \Mtrue:=(\stT_1,$ $ \ind{\stT})]$.
    \end{center}
  \item
    In all the other cases: if $\stT\red \stU$ then $(\stT,
    \ind{\stT}, \mm) \red \{(\stU, \ind{\stT}, \mm)^1\}.$
\end{varenumerate}

Let $\R=[(R,\ind{R},\mem{R})]$. The \emph{initial state} of 
 $\machine{\R}$ is  $\stII_{\R}=[(I_R, \ind{I_R},\mem{R})]$, where  $ \ind{I_R}$ is only defined on the initial positions:
 if $\pp\in \POSI_R $, and $x$ is the occurrence of $\bot$ corresponding to $\pp$, then 
 $\ind{I_R}(\pp) =\ind{R}(x)$. 
A state $[$$(T, \ind{\stT},\mem{\stT})$$]$ is \emph{final} if $T$ is final.

\begin{figure}
  \centering
  \includegraphics[width=\columnwidth,page=5]{example.pdf}
  \caption{\PSIAM{} run of Fig.~\ref{fig:recursion-net-1}.}
  \label{fig:recursion-msiam}
\end{figure}

In the next sections, we study the properties of the machine, and show that the  \PSIAM\ is a computational model for $\mathcal{N}$.

\paragraph{Example.}
We informally  develop in Fig.~\ref{fig:recursion-msiam} an execution of the
\PSIAM{} for the LHS net of Fig.~\ref{fig:recursion-net-1}. In the
first panel (A) tokens (a) and (b) are generated. Token (a) reaches
the principal door  of the $Y$-box, which corresponds to \emph{opening}  a first copy. Token (b)   enters
the $Y$-box and hits the \bbox.
  The test action of the memory triggers a probabilistic
  distribution of states where the left and the right components of the
  $\bot$-box are opened: the corresponding sequences of operations are
  Panels (B${}_0$) and (B${}_1$) for the left and right sides.

  In Panel (B${}_0$): the left-side of the \bbox\  is opened and its
  $\mathsf{one}$-node emits the token (c) that eventually reaches the
  conclusion of the net.
  In Panel (B${}_1$): the right-side of the \bbox\ is opened and
  tokens (c) and (d) are emitted. Token (d) opens a new copy of the
  $Y$-box, while token (c) hits the \bbox\ of this second
  copy. The test action of the memory again spawns a probabilistic
  distribution.
  
  We focus on panel (C${}_{10}$) on the case of the opening of the
  left-side of the $\bot$-box: there, a new token (e) is generated. It
  will exit the second copy of the $Y$-box, go through the first copy
  and exit to the conclusion of the net.

\subsection{\PSIAM\ Properties, and Deadlock-Freeness}

Intuitively, a \emph{run} of the machine $\M_{\R}$ is the result of
composing transitions of $\M_{\R}$, starting from the initial state
$\stII_\R$ (composition being transitive composition). We are not
interested in the actual order in which the transitions are performed
in the various components of a distribution of states. Instead, we are
interested in knowing which  distributions of states  are reached
from the initial state. This notion is captured well by the
relation $\reach$ (see Sec.~\ref{sect:distrpars}). We will say that \emph{a run of the machine
  $\M_{\R}$ reaches $\mu\in \dists{\sts_\R}$ if $\stII_\R\reach \mu$.}
\condinc{}{ We will also use the expression ``a run of $\M_\R$ reaches a  state $\st$'' if  $\stII\reach \mu$ with $T\in \supp{\mu}$.}

An analysis similar to the one done for program nets shows 
the next lemma (Lem.~\ref{lem:msiamDiamond}) and therefore Prop.~\ref{prop:machine_conf}:
\begin{lemma}[Diamond]\label{lem:msiamDiamond}
  The relation $\redsiam$ satisfies the diamond property.\qed
\end{lemma}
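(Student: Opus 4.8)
The plan is to mirror the proof of Proposition~\ref{prop:netDiamond}: reduce the diamond property of the distribution-level relation $\redd$ to a \emph{local commutation} statement about the one-step \PSIAM\ transition $\redsiam$ on states, discharge that statement by a case analysis on the \SIAM\ transition rules together with the commutativity axioms of memory structures, and then reassemble the pieces through the same bookkeeping used in the deterministic ``tiling'' argument behind Theorem~\ref{th:proba_term}.

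First I would record the \emph{locality} of \PSIAM\ redexes: each redex --- a \emph{Link}, an \emph{Update}, a \emph{Test}, or one of the remaining deterministic base \SIAM\ steps --- is determined by a bounded region of the token configuration (the positions drawn in Fig.~\ref{fig:trRules bot}--\ref{fig:trRules}, together with, for the multi-token rules, the tokens on the premises of the sync node, resp.\ the stable token at the principal door of the \bbox), and it reads or writes at most a tuple of pairwise-disjoint memory addresses. Then I would treat the two ways two steps $\st\redsiam\mu$ and $\st\redsiam\nu$ out of the same state can arise. If they fire the \emph{same} redex $r$, then $\mu=\nu$ up to the equivalence $\sim$ on states: for \emph{Update} and \emph{Test} this is the equivariance computation already carried out for Lemma~\ref{lemma: equiv}, while for \emph{Link} two admissible fresh addresses $i\ne j$ give states related by the transposition $(i~j)$, which fixes $\mm$ precisely because $i$ and $j$ are fresh. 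If they fire \emph{distinct} redexes $r_1\ne r_2$, I would argue that $r_1$ and $r_2$ act on disjoint sets of tokens and on disjoint address tuples; that the multi-token enabling conditions (synchronization at a sync node, the choice condition at a \bbox, and the ``the box has been opened'' side condition of rules (iii)/(iv)) are \emph{monotone} under firing an unrelated redex --- using that stable tokens never move and that a token sitting on a premise of a sync node is moved only by that very node; and that the two firing orders then yield the same distribution, the memory components agreeing by property~(3) of memory structures (Fig.~\ref{fig:commut}): disjoint updates commute, disjoint tests commute, and a test commutes with a disjoint update.

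Next I would lift this to $\redd$. Since $\redd$ reduces every non-terminal element of a distribution simultaneously, the only freedom in $\mu\redd\nu$ versus $\mu\redd\xi$ is which redex is selected at each non-terminal state in $\supp{\cont\mu}$. For clause~(1) of Definition~\ref{base} I would observe that a state admitting two distinct applicable redexes has only non-terminal one-step successors (the unused redex survives, by the monotonicity above), so a state can enter the terminal part of $\nu$ only by reducing a state of $\supp{\cont\mu}$ that has a \emph{unique} redex, on which $\nu$ and $\xi$ necessarily coincide; together with the common part $\term{\mu}$ this gives $\term{\nu}=\term{\xi}$. For clause~(2) I would close each discrepancy between the two chosen redexes at a non-terminal state by the local commutation just established, and reassemble the outcomes into a single distribution $\rho$ with $\nu\redd\rho$ and $\xi\redd\rho$ --- the same combinatorics as in the appendix proof of Proposition~\ref{prop:netDiamond}.

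The main obstacle is the second ingredient of the distinct-redex case: establishing that distinct applicable redexes really are disjoint and that the multi-token conditions are preserved when an unrelated redex fires. This is exactly where the detailed shape of the \SIAM\ rules and of the multi-token conditions from \cite{lics2015} is used --- immobility of stable tokens, and the fact that a sync-node premise is consumed only by its own node. Once that structural fact is in place, everything concerning the memory is routine and rests solely on the commutativity of the memory operations; no new subtlety arises from the probabilistic \emph{Test} step beyond what the axioms of Fig.~\ref{fig:commut} already provide.
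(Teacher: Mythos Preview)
Your proposal is correct and follows exactly the approach the paper indicates: the paper's own ``proof'' is the one-line remark that an analysis similar to the one for program nets (Prop.~\ref{prop:netDiamond}) establishes the lemma, and you have spelled out precisely that adaptation, including the \PSIAM-specific points (disjointness of the token sets involved in distinct redexes, and monotonicity of the multi-token enabling conditions under firing an unrelated redex) that the paper leaves implicit. The overall structure --- locality of redexes, the permutation argument for two \emph{Link} steps, and the appeal to injectivity of $\ind{}$ plus the commutativity axioms of the memory structure for \emph{Update}/\emph{Test} interactions, followed by the lift to $\redd$ --- matches the appendix proof of Prop.~\ref{prop:netDiamond} step for step.
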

\begin{proposition}[Confluence, Uniqueness of Normal Forms, Uniformity]\label{lem:diamProp}\label{prop:machine_conf} 
  The relation $\redsiam$ satisfies confluence, uniformity, and
  uniqueness of normal forms.\qed
\end{proposition}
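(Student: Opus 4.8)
The plan is to read this proposition off the general theory of Section~\ref{sect:confluence}. Once $(\sts,\redsiam)$ is known to be a PARS satisfying the diamond property in the sense of Definition~\ref{base} --- which is exactly Lemma~\ref{lem:msiamDiamond} --- confluence is immediate from the Confluence corollary of that definition, while uniformity and uniqueness of normal forms are the two clauses of Theorem~\ref{th:proba_term}. So nothing remains for the proposition itself beyond invoking Lemma~\ref{lem:msiamDiamond} and Theorem~\ref{th:proba_term}; all the real work lies in Lemma~\ref{lem:msiamDiamond}, which I would establish first, by closely following the proof of Proposition~\ref{prop:netDiamond}.

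For Lemma~\ref{lem:msiamDiamond}, the distribution-level diamond of Definition~\ref{base} reduces, by the usual pointwise argument, to a local joining property for single states, so I would fix a raw state $(\stT,\ind{\stT},\mm)$ and two distinct enabled \SIAM-redexes $r\neq r'$, and show: (a) firing $r$ does not disable $r'$, and conversely; (b) the two firing orders reach a common distribution; (c) the terminal parts of the two one-step distributions agree. Point~(a) is where the multi-token side-conditions have to be inspected: a deterministic transition never consumes the enabling data of a redex located elsewhere in the net, the \bbox\ choice transition merely moves a single token into a chosen branch and leaves the rest of the state untouched, and the surface-reduction constraints that gate the $\onelk$ and $\dlk$ start transitions are monotone under reduction, so an enabled redex stays enabled. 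Point~(c) is then easy: in $\redd$ exactly one redex is fired per non-terminal state, and any one-step successor of a state that still carries a second pending redex is again non-terminal, so which redex is fired cannot affect the terminal part.

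The genuine difficulty will be point~(b) in the cases that touch the memory, since a \emph{Test} step yields a two-point distribution rather than a single state, so closing the diamond between two memory-affecting redexes amounts to comparing distributions obtained by applying memory operations in different orders. Here I would appeal to the commutation equations of Section~\ref{sec:mem-struct}: property~(3) gives update/update, update/test and test/test commutation --- the test/test case producing the very same four-point distribution in either order --- property~(1) discharges the permutation bookkeeping, and property~(2) keeps the arity side-conditions stable. The \emph{Link}/\emph{Link} and \emph{Link}/memory cases go as in the remark following Lemma~\ref{lemma: equiv}: a \emph{Link} picks an address fresh for both $\ind{\stT}$ and $\mm$, so a transposition of the two freshly chosen addresses witnesses that the two successors are literally the same program state, and freshness makes that choice commute with any memory action. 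Finally, the state-level analogue of Lemma~\ref{lemma: equiv}, which I would prove along the way, ensures that reduction is well defined on equivalence classes, so the diamond property passes from raw states to states and the present proposition follows.
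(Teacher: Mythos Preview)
Your proposal is correct and follows essentially the same approach as the paper: the proposition itself is an immediate corollary of Lemma~\ref{lem:msiamDiamond} via the Confluence corollary and Theorem~\ref{th:proba_term}, and the paper explicitly says so. Your sketch of Lemma~\ref{lem:msiamDiamond} also matches the paper's intent --- the paper simply remarks that the argument is ``similar to the one done for program nets'' (i.e.\ Proposition~\ref{prop:netDiamond} and its locality lemma), which is precisely the route you outline, including the appeal to the memory commutation equations and the fresh-address transposition for the \emph{Link} cases.
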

By the results we have studied in Sec. \ref{sect:confluence}, we
thus conclude that all runs of $\M_\R$ have the same behavior with respect
to the degree of termination, \ie\ if $\stII_\R$  p-normalizes following a
certain sequence of reductions, it will do so 
whatever sequence of reductions we pick.  
\condinc{}{We say that \emph{the machine $\M_\R$ $p$-terminates} if $\stII_\R$ $p$-terminates (see Def.~\ref{def:normalization}).}

\paragraph{Deadlocks.} A terminal state $\st \stopsiam$ of $\M_\R$ can be final
or not.  A non-final terminal state is called a
\emph{deadlocked} state. Because of the inter-dependencies among
tokens given by the multi-token conditions, a multi-token machine is
likely to have deadlocks.  We are however able to guarantee that any
\PSIAM\ machine is deadlock-free, whatever is the choice for the
memory structure.

\begin{theorem}[Deadlock-Freeness of the \PSIAM]\label{deadlock-free}
  Let $\R$ be a program net of conclusion $\one$; if
  $\stII_\R \reach \mu$ and $\st\in \supp{\mu}$ is terminal, then
  $\st$ is a \emph{final} state.\qed
\end{theorem}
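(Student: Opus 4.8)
The plan is to reduce the statement to the deadlock-freeness of the underlying \SIAM\ (the non-deterministic variant of \cite{lics2015}) by \emph{projecting} the \PSIAM\ onto it and checking that the memory layer never introduces a stuck state. Concretely, I would use the forgetful map sending a \PSIAM\ state $\st = [(\stT,\ind{\stT},\mem{\stT})]$ to the \SIAM\ state $\stT$. Inspecting the four clauses defining $\redsiam$ in Sec.~\ref{sec:psiamDef}, every \PSIAM\ transition out of $\st$ is obtained by performing a \SIAM\ transition on the first component $\stT$ (the \emph{Link} clause uses \SIAM\ rule (iii); \emph{Update} uses rule (i); \emph{Test} uses the non-deterministic \bbox\ steps; clause (4) a plain \SIAM\ step). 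Hence a \PSIAM\ run $\stII_\R \reach \mu$ projects to a \SIAM\ run of $\M_R$ from $I_R$, and for every $\st = [(\stT,\ind{\stT},\mem{\stT})] \in \supp{\mu}$ the underlying $\stT$ is \emph{reachable} in $\M_R$. (Well-definedness on $\sim$-classes is as for program nets.)

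The core step is the converse of this projection: \emph{whenever $\stT$ admits a \SIAM\ transition, $\st$ admits a \PSIAM\ transition}. This is the only place the memory structure intervenes, and it reduces to three checks. For \emph{Link} (\SIAM\ rule (iii)) a suitable address always exists: if the $\onelk$ node is inactive, any address fresh for $\ind{\stT}$ and $\mem{\stT}$ works and $I$ is infinite; if it is active, the prescribed address $\ind{R}(x)$ is still free because a $\onelk$ node fires at most once along a run. For \emph{Test}, $\Mtest$ is total into distributions, so the clause fires precisely when the \SIAM\ non-deterministic \bbox\ step does (both branches $\Mfalse,\Mtrue$ become available together). For \emph{Update} (\SIAM\ rule (i)) one must see that $\upd(l,\vec i,\mem{\stT})$ is defined: by property (2) of memory structures this requires $\Marity(l)$ to equal the number of synchronizing premises — true since a program net forces the arity of each sync node to match that of its name — and the entries of $\vec i$ to be pairwise disjoint, which holds because $\vec i$ is the image of distinct premise occurrences under the injective composition of $\orig$ and $\ind{\stT}$, each such address having already been allocated by the \emph{Link} step that produced the corresponding $\onelk$-token. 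All remaining \SIAM\ steps fall under clause (4), which carries no memory precondition. Consequently $\st$ terminal forces $\stT$ terminal. The two invariants used here — a $\onelk$ node fires at most once, and tokens synchronizing at a sync node carry pairwise distinct, already-allocated addresses — I would establish directly by an easy induction on the run; they are of the same flavour as those supporting Th.~\ref{invariance}.

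Putting the two directions together: given $\stII_\R \reach \mu$ and $\st \in \supp{\mu}$ terminal, the underlying $\stT$ is reachable in $\M_R$ and, by the core step, terminal. Since $R$ has conclusion $\one$ — hence no $\bot$, $?$ or $!$ among its conclusions — the deadlock-freeness of the \SIAM\ (the adaptation to our non-deterministic setting of the corresponding result of \cite{lics2015}; the non-determinism at the \bbox\ is immaterial here, as only the existence of a transition and the terminal/final classification of states matter, and both coincide with the deterministic presentation) yields that $\stT$ is \emph{final}. By definition a \PSIAM\ state is final when its underlying \SIAM\ state is, so $\st$ is final, which is the claim.

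The main obstacle is not this bookkeeping but the \SIAM-level deadlock-freeness it rests on: that a reachable terminal \SIAM\ state of a net with conclusion $\one$ is final. That is the genuinely combinatorial part — a case analysis on a hypothetical non-final token showing some transition is enabled — whose delicate cases are the multi-token conditions (a token waiting on a sync-node premise, or on an auxiliary \bbox\ door, forces the rest of the synchronizing configuration to be present), handled by adapting the argument of \cite{lics2015} (acyclicity of switching paths plus an invariant characterizing reachable states). The content specific to this paper is exactly the harmless-memory lemma above, and in particular the disjoint-and-allocated-addresses invariant that makes $\upd$ total on reachable states.
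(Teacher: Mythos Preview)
Your reduction is correct but follows a different route from the paper. You project the \PSIAM\ onto the underlying non-deterministic \SIAM, verify that the memory layer never blocks a \SIAM\ step (the ``harmless-memory'' lemma: \emph{Link} always has a fresh address, \emph{Test} is total, \emph{Update} is defined because the synchronizing tokens carry pairwise distinct already-allocated addresses), and then invoke \SIAM\ deadlock-freeness adapted from \cite{lics2015}. The paper instead proves deadlock-freeness \emph{together with} adequacy, via the transformation map $\trsf_{R\rednet R_i}$ relating $\M_\R$ to $\M_{\R_i}$: one argues by double induction on (number of stable tokens in the reached terminal states, surface cut weight of $R$), pushing the terminal state along a net reduction $\R\rednet\sum p_i\{\R_i\}$ and using that $\trsf$ preserves terminality and finality; the base case is the trivial net consisting of a single $\onelk$ node, where no deadlock is possible (Fact~\ref{base interplay}).

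Your decomposition has the virtue of isolating exactly what the memory contributes (nothing, for deadlocks) from the combinatorics of the token machine; it also makes clear that the result is insensitive to the choice of memory structure. Its cost is that the hard work is deferred to the \SIAM-level deadlock-freeness, which you treat as a black box to be adapted. The paper's route is more self-contained---it never appeals to a separate \SIAM\ deadlock-freeness theorem---and, because it runs through the same transformation machinery as invariance, it delivers deadlock-freeness and adequacy as two clauses of a single Proposition (Prop.~\ref{mutual}). The diamond property of the \PSIAM\ is used there to choose convenient runs (Sec.~\ref{sect:construction}), which is what the paper's remark ``relies on the diamond property'' is pointing at.
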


\condinc{}{The proof \condinc{}{(see Appendix~\ref{sec:interplay})} relies on  the diamond
property of the machine (more precisely, uniqueness of the normal forms).}


\subsection{Invariance and Adequacy}\label{semantics}

The machine $\machine{\R}$ gives a computational semantics to
$\R$. The semantics is invariant under reduction
(Th.~\ref{invariance}); the adequacy result (Th.~\ref{adequacy
  psiam}) relates convergence of the machine and convergence of the
nets. We define the convergence of the machine as the convergence of
its initial state:
\begin{center}
$\M_\R\Downarrow_p$ if $\stII_\R$ \emph{converges with probability $p$} (\ie\ $\stII_\R  \Downarrow_p$).
\end{center}

\begin{theorem}[Invariance]\label{invariance} 
  Let $\R$ be a program net of conclusion $\one$.  Assume
  $\R \rednet \sum_i {p_i}\cdot \{\R_i\}$. Then we have that
  $\M_\R\Downarrow_q$ if and only if $\M_{\R_i}\Downarrow_{q_i}$ with
  $\sum_i (p_i\cdot q_i)=q$.\qed
\end{theorem}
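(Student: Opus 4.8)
The plan is to prove the Invariance Theorem by relating a single net reduction step $\R \rednet \sum_i p_i \cdot \{\R_i\}$ to the runs of the \PSIAM. The core idea is a \emph{simulation argument}: the state space of $\M_\R$ and the state space of $\M_{\R_i}$ are closely related, and a reduction step on the net induces a correspondence between runs of the respective machines. First I would establish the key technical lemma: for each net reduction rule, there is a bijective-up-to-bookkeeping correspondence between the reachable states (and transitions) of $\M_\R$ and the reachable states of $\bigoplus_i \M_{\R_i}$, weighted by the $p_i$. This is the GoI analogue of the standard fact that token machines are invariant under cut-elimination; the multi-token and memory features make it more delicate but not structurally different, since the memory operations applied along a run are exactly mirrored.

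The key steps, in order: (1) analyze the net reduction $\R \rednet \sum_i p_i \cdot \{\R_i\}$ by cases on which rule fired (Link, Update, Test, or a ``pure'' \MELL/exponential/$Y$ step), recalling from Definition~\ref{def:prognets} and the program-net reduction rules that only Test produces a genuine distribution ($p_0,p_1$ from $\test$), while the others are deterministic. (2) For each case, exhibit a map sending runs of $\M_\R$ to runs of the $\M_{\R_i}$ that preserves the degree of termination: a token configuration on $R$ is pushed forward to a token configuration on $R_i$, since the redex in $R$ corresponds to an already-resolved subgraph in $R_i$, and the memory states match by construction of the program-net reduction. (3) Use Prop.~\ref{prop:machine_conf} (confluence, uniformity, uniqueness of normal forms for $\redsiam$) so that it suffices to compare suprema over \emph{any} reduction sequence rather than over all of them — this is what makes the ``$\sum_i p_i \cdot q_i = q$'' bookkeeping clean, because by Prop.~\ref{prop:equiv} the limit $\M_\R \Downarrow_q$ can be computed along a sequence that first simulates the net step and then proceeds independently in each $\R_i$-component. (4) Conclude by Definition~\ref{def:limit}: the convergence probability of $\stII_\R$ equals the weighted sum of the convergence probabilities of the $\stII_{\R_i}$, because termination degrees add up linearly across a probabilistic branch and the simulation preserves them in each deterministic component.

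The subtle point — and the main obstacle — is the interaction between the \PSIAM's \emph{surface-reduction} discipline (encoded via the multi-token conditions on rules (iii) and (iv): a $\onelk$ or $\dlk$ token may only start once ``its box is opened'') and the fact that a net reduction step can open a box or expose new surface $\onelk$ nodes. After a net step $\R \rednet \R_i$, tokens that were \emph{blocked} in $\M_\R$ (sitting dormant because their box was not yet at the surface) may become \emph{active} in $\M_{\R_i}$, and conversely the token activity that was ``simulating'' the soon-to-be-reduced redex in $\M_\R$ must be accounted for. One must check that this reshuffling is exactly compensated: the run of $\M_\R$ that reaches a normal form must factor through a state in which the redex-simulating tokens are in a canonical ``done'' configuration, from which the pushforward to $\M_{\R_i}$ is well-defined and continues to a corresponding normal form with the same termination degree. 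Handling the $Y$-box unwinding step requires particular care, since one copy of a box on the $\R$ side corresponds to the unwound configuration on the $\R_i$ side, so the exponential-signature/box-stack bookkeeping (the $y(\sigma,\sigma)$ machinery) must be tracked explicitly. I expect this box-exposure/activation matching to be where the bulk of the genuine work lies; the memory side, by contrast, is handled uniformly because commutativity of memory operations (already used for the diamond property, Prop.~\ref{prop:netDiamond} and Lem.~\ref{lem:msiamDiamond}) guarantees the memory state reached is independent of the simulation order.
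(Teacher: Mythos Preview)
Your proposal is correct and follows essentially the same approach as the paper: define a transformation map from states of $\M_\R$ to states of $\M_{\R_i}$, exploit the diamond property to pick a canonical run of $\M_\R$ that begins by simulating the net reduction, and show the map preserves reachability and termination degree. The paper's treatment differs only in making the asymmetry between the two directions explicit---the backward direction (from $q_i$-termination of $\stII_{\R_i}$ to $q$-termination of $\stII_\R$) needs a separate construction because the transformation can \emph{collapse} transitions and is not invertible---but your identification of the surface-reduction/box-opening bookkeeping (especially for $Y$) as the crux is exactly right.
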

\begin{theorem}[Adequacy]\label{thm:termination}\label{adequacy psiam}
  Let $\R$ be a program net of conclusion $\one$.  Then,
  $\M_\R \Downarrow_p$ if and only if $\R \Downarrow_p$.\qed
\end{theorem}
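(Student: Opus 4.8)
The plan is to derive Adequacy from the Invariance theorem (Th.~\ref{invariance}), the behaviour of the machine on net normal forms, and the PARS results of Sec.~\ref{sect:confluence}; up to the memory layer this is an adaptation of the argument of \cite{lics2015}. Throughout, write $\mathfrak{n}(\R)$ for the probability with which the program net $\R$ converges (so $\R\Downarrow_{\mathfrak{n}(\R)}$) and $\mathfrak{m}(\R)$ for that of the machine ($\M_\R\Downarrow_{\mathfrak{m}(\R)}$); since $\redd$ acts pointwise on distributions, both functions extend linearly, $\mathfrak{m}(\sum_i p_i\{\R_i\}) = \sum_i p_i\,\mathfrak{m}(\R_i)$ and likewise for $\mathfrak{n}$, so that Th.~\ref{invariance} reads ``$\R\rednet\sum_i p_i\{\R_i\}$ implies $\mathfrak{m}(\R) = \sum_i p_i\,\mathfrak{m}(\R_i)$''; iterating this along a finite reduction, $\R\redd^*\rho$ implies $\mathfrak{m}(\R) = \mathfrak{m}(\rho)$.

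I would first dispatch the base case. If $\R\not\rednet$ then, by Cut Elimination (Cor.~\ref{cutel}), $\R$ is cut-free of conclusion $\one$, whence $\mathfrak{n}(\R) = 1$. An inspection of the transition rules (Figs.~\ref{fig:trRules bot} and~\ref{fig:trRules}) shows that when $R$ has no cuts there is no probabilistic branching (the only non-deterministic transitions arise at a cut with a $\bot$-box) and the \SIAM\ run from $\stII_\R$ is finite; by Deadlock-Freeness (Th.~\ref{deadlock-free}) it ends in a \emph{final} state, so $\M_\R\Downarrow_1$, i.e.\ $\mathfrak{m}(\R) = 1$. Consequently $\mathfrak{m}(\rho^\circ) = \NF{\rho}$ for every distribution $\rho$ of program nets. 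The soundness inequality $\mathfrak{m}(\R)\ge\mathfrak{n}(\R)$ is then immediate: for any $\rho$ with $\R\redd^*\rho$ we have $\mathfrak{m}(\R) = \mathfrak{m}(\rho) = \mathfrak{m}(\rho^\circ) + \mathfrak{m}(\bar\rho) \ge \NF{\rho}$, and taking the supremum over such $\rho$ gives $\mathfrak{m}(\R)\ge\mathfrak{n}(\R)$.

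The converse, $\mathfrak{m}(\R)\le\mathfrak{n}(\R)$ --- ``the machine converges no more than the net'' --- is the heart of the proof and where I expect the real difficulty. Since $\mathfrak{m}(\R) = \sup_{\stII_\R\reach\mu}\NF{\mu}$, it suffices to prove a completeness lemma: for every state-distribution $\mu$ with $\stII_\R\reach\mu$ there is a program-net reduction $\R\redd^*\rho$ with $\NF{\rho}\ge\NF{\mu}$ (so $\mathfrak{n}(\R)\ge\NF{\rho}\ge\NF{\mu}$). I would prove this by matching a run of $\M_\R$ against net reductions: by Deadlock-Freeness every terminal state in $\supp{\mu}$ is final, and one shows --- analysing final states in terms of stable tokens and the boxes they mark, and threading the memory through the Link/Update/Test cases exactly as in the proof of Invariance --- that whenever $\M_\R$ reaches a final state along a branch the net $\R$ reduces ($\redd^*$) to a cut-free net along a branch of at least the same weight, and more generally that each \PSIAM\ transition is simulated by a (possibly empty) block of $\rednet$-steps which cannot lower the degree of termination. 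The crucial and most delicate point, which is precisely the obstacle, is that the probabilistic splitting produced on the machine side by $\redsiam_{\mathrm{test}(i)}$ must be reproduced weight-for-weight by $\rednet_{\mathrm{test}(i)}$ on the net side: here one relies on the fact that both are driven by the very same memory operation $\Mtest$, and on commutativity of the memory operations (Sec.~\ref{sect:alg-mem}), so that re-scheduling the net redexes to follow the machine's order is harmless by the diamond property (Prop.~\ref{prop:netDiamond}). Combining the two inequalities yields $\mathfrak{m}(\R) = \mathfrak{n}(\R)$, i.e.\ $\M_\R\Downarrow_p\iff\R\Downarrow_p$.
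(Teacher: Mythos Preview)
Your soundness direction ($\mathfrak m(\R)\ge\mathfrak n(\R)$) is correct and is essentially the paper's implication (2)$\Rightarrow$(1) of Prop.~\ref{mutual}: iterate Invariance along a finite net reduction and use that a terminal program net of conclusion $\one$ is a single $\onelk$ node on which the machine trivially finishes (the paper isolates this as Fact~\ref{base interplay}). Note that you do not need the full Th.~\ref{deadlock-free} for this base case, and in fact in the paper Deadlock\nobreakdash-Freeness is \emph{not} an input to Adequacy but is proved \emph{simultaneously} with it, as item~3 of Prop.~\ref{mutual}; invoking it as a premise is at best circular in spirit.

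The genuine gap is in the completeness direction. Your plan is to ``simulate each \PSIAM\ transition by a (possibly empty) block of $\rednet$-steps which cannot lower the degree of termination''. That is not how the two systems are related: the machine runs on the \emph{fixed} net $R$, while net reduction changes the underlying graph, so there is no step-for-step simulation from machine transitions to net reductions --- most machine transitions (tokens travelling through axioms, tensors, box doors, \ldots) have no net-side counterpart at all, and the probabilistic split $\redsiam_{\mathrm{test}(i)}$ does not line up with any $\rednet_{\mathrm{test}(i)}$ on $R$ itself. The paper's device is the \emph{transformation map} $\trsf_{R\rednet R_i}$, a partial function from states of $\M_\R$ to states of $\M_{\R_i}$, which pushes a run on $R$ forward to a run on a one-step reduct $R_i$ (collapsing some transitions). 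Completeness is then a lexicographic induction on the pair $\bigl(S(\tau),W(\R)\bigr)$ --- number of stable tokens in the terminal distribution $\tau$ reached by the machine, and weight of the surface cuts of $\R$ --- showing that if $\stII_\R\oreach\tau$ then $\R$ terminates with probability at least $\NF{\tau}$ (and, at the same time, that every state in $\tau$ is final, which is precisely Deadlock\nobreakdash-Freeness). Your remark about ``analysing final states in terms of stable tokens'' is pointing at the right invariant, but without the transformation map there is no well-founded measure that decreases, and the diamond property of nets alone cannot rescue the simulation you sketch.
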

 %
 
\condinc{The proofs of invariance and  adequacy, as well as that of deadlock-freeness, all are based on the \emph{diamond property} of the machine (Lem.~\ref{lem:msiamDiamond}).}{The proofs of invariance and  adequacy, are both based on the \emph{diamond property} of the machine, 
 and on a map---which we call \emph{transformation}---which allows us  to relate the  rewriting of program nets with the \PSIAM{}.
To obtain the proofs,  we need to establish a series of technical results which we give in Appendix ~\ref{app:PSIAM}.}

\section{A {\fontsize{10}{12}\selectfont\PCFwo-style} Language with Memory Structure}\label{sect:PCF}\label{sec:pcfam}
We introduce a \PCFwo-style language which is equipped with a memory
structure, and is therefore parametric on it. The base type will
correspond to elements stored in the memory, and the base operations
to the operations of the memory structure.

\subsection{Syntax and Typing Judgments}

\newcommand{\PCFop}{{\tt c}}
The language \PCF{} which we propose is based on Linear Logic, and is
parameterized by a choice of a memory structure $\memories$.

The \emph{terms} ($ M,N,P$) and \emph{types} ($A,B$) are defined as
follows:
\[
\begin{array}{@{}l@{~}l@{~}l@{}}
  M,N,P
  &{:}{:}{=} &
  x \bor
  \lambda x.M \bor
  MN \bor
  \PCFletp{x,y}{M}{N}\bor
  \PCFpair{M,N} \bor
  \\ &&
  \PCFletrec{f}{x}{M}{N}\bor
  \\ &&
  \PCFnew \bor \PCFop \bor
  \PCFif{P}{M}{N},
  \\
  A,B
  &{:}{:}{=} &
  \PCFalpha \bor
  A \PCFarrow B \bor
  A \PCFprod B \bor \PCFbang A
\end{array}
\]
where $\PCFop$ ranges over the set of memory operations
$\mathcal{L}$. A typing context $\Delta$ is a (finite) set of typed
variables $\{x_1:A_1,\dots,x_n:A_n\}$, and a typing judgment is
written as $ \Delta\PCFentail M:A.  $ An empty typing context is
denoted by ``$\cdot$''. We say that a type is {\em linear} if it is
not of the form $\PCFbang A$. We denote by $\PCFbang\Delta$ a typing
context with only non-linearly typed variables. A typing judgment is
\emph{valid} if it can be derived from the set of typing rules
presented in Fig.~\ref{fig:typrules}.  We require $M$ and $N$ to have
empty context in the typing rule of $\PCFif{P}{M}{N}$.  The
requirement does not reduce expressivity as typing contexts can always
be lambda-abstracted.
The typing rules make use of a notion of \emph{values}, defined as follows:
$U,V~{:}{:}{=}~ x \bor \lambda x.M \bor \PCFpair{U,V} \bor \PCFop$.

\begin{figure*}
{\begin{minipage}{.97\textwidth}
\[
\def\nl{\\[1ex]}
\begin{array}{c}
\infer{\PCFbang\Delta\PCFentail \PCFnew:\PCFalpha}{}
\quad
\infer{\PCFbang\Delta,x:\PCFbang(A\PCFarrow B)\PCFentail x:A\PCFarrow B}{}
\quad
\infer{\PCFbang\Delta,x:A\PCFentail x:A}{A\textrm{ linear}}
\quad
\infer{\PCFbang\Delta\PCFentail V:\PCFbang(A\PCFarrow
  B)}{\PCFbang\Delta\PCFentail V:A\PCFarrow B & V\textrm{ value}}
\quad
\infer{\Delta\PCFentail \lambda x.M:A\PCFarrow B}{\Delta,x:A\PCFentail
  M:B}
\nl
\infer{\PCFbang\Delta,\Gamma_1,\Gamma_2\PCFentail MN:B}{
  \PCFbang\Delta,\Gamma_1\PCFentail M:A\PCFarrow B
  &
  \PCFbang\Delta,\Gamma_2\PCFentail N:A
}
\,\,
\infer{\PCFbang\Delta,\Gamma_1,\Gamma_2\PCFentail \PCFletp{x,y}{M}{N}:C}{
  \PCFbang\Delta,\Gamma_1\PCFentail M:A\PCFprod B
  &
  \PCFbang\Delta,\Gamma_2,x:A,y:B\PCFentail N:C
}
\,\,\,
\infer{\PCFbang\Delta,\Gamma_1,\Gamma_2\PCFentail \PCFpair{M,N}:A\PCFprod B}{
  \PCFbang\Delta,\Gamma_1\PCFentail M:A
  &
  \PCFbang\Delta,\Gamma_2\PCFentail N:B
}
\nl
\infer{\Delta\PCFentail \PCFif{P}{M}{N}:A}{
  \Delta\PCFentail P:\PCFalpha
  &
  \cdot\PCFentail M:A
  &
  \cdot\PCFentail N:A
}
\quad
\infer{\PCFbang\Delta\PCFentail \PCFop : \alpha^{\PCFprod n}\PCFarrow\alpha^{\PCFprod n}}{
    \Marity(\PCFop) = n
}
\quad
\infer{\PCFbang\Delta,\Gamma\PCFentail 
    \PCFletrec{f}{x}{M}{N}:C}{
  \PCFbang\Delta,f:\PCFbang(A{\PCFarrow} B),x:A\PCFentail M:B
  &
  \PCFbang\Delta,\Gamma,f:\PCFbang(A{\PCFarrow}B)\PCFentail N:C
}
\end{array}
\]
\end{minipage}}
\caption{Typing Rules.}
\label{fig:typrules}
\end{figure*}

\subsection{Operational Semantics}
The operational semantics for \PCF{} is similar to  the one of
\cite{PaganiSV14}, and is inherently call-by-value. Indeed, being
based on Linear Logic, the language only allows the duplication of
``!''-boxes, that is, normal forms of ``!''-type: these are the
values. The operational semantics is in the form of a PARS, written
$\PCFcbv$.

The PARS is defined using a notion of {\em 
  reduction context $C[-]$}, defined by the grammar
\[
\begin{array}{@{}l@{~~}l@{~~}l@{}}
  C[-]
  &{:}{:}{=} &
  [-]\bor C[-]N\bor VC[-]\bor \PCFpair{C[-],N}
  \bor \PCFpair{V,C[-]}\\
  &&\bor\PCFletp{x,y}{C[-]}{N}\bor
  \PCFif{C[-]}{M}{N},
\end{array}
\]
and a notion of abstract machine: the \PCFAM{}.
A \emph{raw \PCFAM{} closure} is a tuple
$
(M,\ind{M},\mm)
$
where $M$ is a term, $\ind{M}$ is an injective map from the set
of free variables of  $M$ to $I$, and  $\mm\in\memories$.
\emph{\PCFAM{} closures} are defined as equivalence classes of raw \PCFAM{} closures over permutations of addresses.

\begin{figure*}
  \[
  \begin{array}{c}
    (C[\PCFnew], \ind{}, \mm)
    \PCFcbv_{\mathrm{link}}
    (C[x], \ind{}\cup\{x \mapsto i\}, \mm) 
    \quad
    (C[\PCFop\,\PCFpair{x_1,\ldots,x_n}], \ind{}, \mm) 
    \PCFcbv_{\mathrm{update}(\PCFop)}
    (C[\PCFpair{\vec{x}}],\ind{},
    \Mupdate(\vec{i},\PCFop,\mm))
    \\[1ex]
    (C[\PCFif{x}{M_{\PCFtt}}{M_{\PCFff}}], \ind{}, \mm)
    \PCFcbv_{\mathrm{test}(i)}
    \Mtest(i,\mm)[\Mtrue := (M_{\PCFtt},\ind{}\setminus\{x \mapsto i\}), \Mfalse
    := (M_{\PCFff},\ind{}\setminus\{x \mapsto i\})]
    \\[1ex]
    (C[(\lambda x.M)U], \ind{}, \mm)\PCFcbv
    (C[M\{x := U\}], \ind{}, \mm)
    \,\,\,
    (C[\PCFletp{x,y}{\PCFpair{U,V}}{M}],\ind{},\mm) \PCFcbv
    (C[N[x := U,y := V]],\ind{},\mm)
    \\[1ex]
    (C[\PCFletrec{f}{x}{M}{N}],\ind{},\mm)
    \PCFcbv
    (C[N\{f := \lambda x.\PCFletrec{f}{x}{M}{M}\}],\ind{},\mm)
  \end{array}
  \]
  \caption{Rewrite System for \PCFAM.}
  \label{tab:rw-pcf}
\end{figure*}

The rewrite system is defined in Fig.~\ref{tab:rw-pcf}.
First, the creation of a new base type element ($\PCFcbv_{\mathrm{link}}$) is
simply memory allocation: $x$ is fresh (and not bound) in $C$ and $i$
is a new address neither in the image of $\ind{}$ nor in the support
of $\mm$.
Then, the operation $\PCFop$ reduces through $\PCFcbv_{\mathrm{update}(\PCFop)}$ using
the update of the memory when $\Marity(\PCFop) = n$ and
$\ind{}(x_k) = i_k$.
Then, the if-then-else reduces through $\PCFcbv_{\mathrm{test}(i)}$ using the test
operation where $\ind{}(x) = i$. Note how we remove $x$ from the
domain of $\ind{}$.
Finally we have the three rules that do not involve probabilities:
Note how the mapping $\ind{}$ can be kept the same: the set of free
variables is unchanged.

Let $\PCFam = [(M,\ind{}, \mm)]$ be a \PCFAM{} closure. We define the judgment
$
x_1:A_1, \ldots, x_m:A_m
\PCFentail
\PCFam : B
$
if none of the $x_i$'s belongs to ${\rm Dom}(\ind{})$,
$
y_1:\PCFalpha,\ldots,y_k:\PCFalpha,x_1:A_1,\ldots,x_m:A_m
\PCFentail M : B
$,
and $\{y_1,\ldots,y_k\} = {\rm Dom}(\ind{})$.

\subsection{Modeling \PCF\ with Nets}
We now encode \PCF{} typing judgments and typed \PCFAM{} closures into program
nets.
As the type system is built on top of Linear Logic, the translation
$\net{(-)}$ is rather straightforward, modulo one subtlety: it is
parameterized by a memory structure $\mm$ and a partial function
$\ind{}$ mapping term variables to addresses in $I$.

The mapping $(-)^\dagger$ of types to formulas is defined by
$\PCFalpha^\dagger ~{:}{=}~\one$,
$(A\PCFarrow B)^\dagger ~{:}{=}~ {(A^\dagger\b\parr B^\dagger)}$ and
$(A\PCFprod B)^\dagger ~{:}{=}~ {A^\dagger}\tens{B^\dagger}$.
Now, assume that $\{y_1,\ldots,y_n\}\cap{\rm Dom}(\ind{})=\emptyset$,
that $\Delta$ is a judgment whose variables are all of type
$\PCFalpha$, and that $|\Delta|={\rm Dom}(\ind{})$. The typing
judgment $y_1:A_1,\ldots,y_n:A_n,\Delta\PCFentail M:A$ is 
mapped through $\net{(-)}_{\ind{},\mm}$ to a program net
$\net{M}_{\ind{},\mm}=[(R_M,\ind{R_M}, \mm)]$ with conclusions
$(A_1^\dagger)^{\bot}$,\ldots $(A_n^\dagger)^{\bot}$, $(B^\dagger)$
and memory state $\mm$ (note how the variables in $\Delta$ do not
appear as conclusions). 
\condinc{}{The full definition is found in
Appendix~\ref{app:PCFproofs}.}

\subsubsection{Adequacy}

As in Sec.~\ref{sect:confluence} and~\ref{semantics}, given a \PCFAM{} closure $\PCFam$ we
write $\PCFam\Downarrow_p$ ($\PCFam$ converges to $p$) if
$p = \mathrm{sup}_{\PCFam \PCFreddcbvx \mu} \NF{\mu}$.
The adequacy theorem then relates convergence of programs and
convergence of nets. 
\condinc{}{A sketch of the proof is given in Appendix~\ref{app:PCFproofs}.}

\begin{theorem}\label{th:adeqcbv}
  Let $\PCFentail \PCFam:\PCFalpha$, then $\PCFam\,{\Downarrow_p}$
  if and only if $\PCFsemcbv{\PCFam}{\Downarrow_p}$.\qed
\end{theorem}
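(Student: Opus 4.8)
\textit{Proof sketch.}

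\textbf{Strategy.} The plan is to prove adequacy by establishing a tight operational correspondence between the call-by-value rewriting of \PCFAM{} closures and the reduction of program nets, and then transferring the two convergence values through it using the PARS machinery of Sec.~\ref{sect:confluence}. All along one uses subject reduction for \PCF{}, so that a reduction issued from $\PCFam$ stays within closed closures of type $\PCFalpha$, and the fact that the translation $\net{(-)}_{\ind{},\mm}$ sends such a closure to a program net of conclusion $\one$.

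\textbf{Combinatorial core.} First I would prove a \emph{substitution lemma} for the translation: $\net{M\{x{:=}U\}}_{\ind{},\mm}$ is obtained, up to surface $\rednet$-reduction, by plugging $\net{U}$ into $\net{M}$ at the conclusion associated with $x$; and, more generally, that $\net{(-)}$ commutes up to surface reduction with the syntactic operations of Fig.~\ref{tab:rw-pcf} (capture-avoiding substitution, unfolding of $\PCFletrec{f}{x}{M}{N}$, and decomposition along a reduction context $C[-]$). This is the usual but laborious part of the Curry--Howard correspondence between a linear $\lambda$-calculus and \MELL-style proof nets. Two points make it go through cleanly here: (i) \PCFAM{} reduction contexts $C[-]$ never descend under a $\lambda$, into a $\PCFletrec{}{}{}{}$-body, or into an $\cif$-branch, so the hole translates to a \emph{surface} occurrence and the surface-reduction constraint on nets is never violated; (ii) in a closed computation every duplicable value $U$ (of some type $\PCFbang A$) is itself closed and is typed in a purely exponential context, hence its translation is a $!$-box with \emph{no auxiliary door}, so the exponential cut-eliminations triggered when $U$ is copied are closed in the sense required by Fig.~\ref{probaRed}.

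\textbf{Simulation and normal forms.} Using the core, I would prove a \emph{simulation lemma}: if $\PCFam \PCFcbv \mu$ for a closed closure $\PCFam$ of type $\PCFalpha$, then $\net{\PCFam} \rednet^{+} \net{\mu}$ by a \emph{non-empty} sequence of $\rednet$-steps, where $\net{\sum_i p_i\{\PCFam_i\}}$ abbreviates $\sum_i p_i\{\net{\PCFam_i}\}$. Concretely, the memory rules $\PCFcbv_{\mathrm{link}}$, $\PCFcbv_{\mathrm{update}(\PCFop)}$, $\PCFcbv_{\mathrm{test}(i)}$ of Fig.~\ref{tab:rw-pcf} map, after a handful of multiplicative cut-eliminations exposing the relevant $\onelk$-node (resp.\ sync node, resp.\ $\bot$-box) at surface, to the homonymous program-net rules of Fig.~\ref{fig:unitRed}; while $\beta$-reduction, the pair destructor and recursion unfolding map to non-empty sequences of pure surface cut-eliminations (the last one using the $y$-rule). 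Lifting this componentwise to distributions yields $\PCFam \reach \mu \ \Rightarrow\ \net{\PCFam} \reach \net{\mu}$ in the sense of the $\reach$-relation of Sec.~\ref{sect:distrpars}. Dually I would record the \emph{normal-form matching lemma}: a closed \PCFAM{} closure of type $\PCFalpha$ is $\PCFcbv$-terminal iff it has the shape $(x,\{x\mapsto i\},\mm)$ iff its translation is a $\rednet$-terminal net --- an active $\onelk$-node, which by cut elimination (Cor.~\ref{cutel}) and the shape of \MELL\ nets is the only cut-free net of conclusion $\one$, and which is terminal precisely because it is active. Consequently $\NF{\net{\mu}} = \NF{\mu}$ for every \PCFAM{}-distribution $\mu$ reachable from $\PCFam$, and non-terminal closures translate to non-terminal nets.

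\textbf{Putting it together and the obstacle.} Write $a$ for the (unique) probability with $\PCFam \Downarrow_a$ and $b$ for the one with $\net{\PCFam}\Downarrow_b$. For $a\le b$: if $\PCFam\,{\redd^*}\,\mu$, then by simulation $\net{\PCFam}\reach\net{\mu}$ and $\NF{\mu}=\NF{\net{\mu}}\le b$ (using that program-net reduction is confluent and uniform by Prop.~\ref{prop:netDiamond} and Th.~\ref{th:proba_term}, so that its convergence value, Def.~\ref{def:limit}, dominates the degree of termination of every $\reach$-reachable distribution), whence $a=\sup_{\PCFam\redd^*\mu}\NF{\mu}\le b$. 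For $b\le a$: by uniformity of $\rednet$ and Prop.~\ref{prop:equiv}, $b$ is the limit of $\NF{(-)}$ along \emph{any} $\redd$-sequence issued from $\net{\PCFam}$; choosing the $\redd$-sequence that refines the simulation of a maximal \PCFAM{} reduction $\PCFam=\mu_0\redd\mu_1\redd\cdots$, monotonicity of the degree of termination along $\redd$ together with $\NF{\net{\mu_n}}=\NF{\mu_n}$ forces this limit to be $\lim_n\NF{\mu_n}\le a$. (Here it is convenient to first note that \PCFAM{} reduction is essentially deterministic --- the $\PCFcbv$-redex of a closure, if any, is unique up to the built-in probabilistic branching absorbed by $\redd$ --- so that the maximal reduction is canonical and its limit of termination degrees is exactly $a$.) The main obstacle is the simulation lemma: getting the substitution / surface cut-elimination bookkeeping exactly right while respecting the surface and closed-exponential constraints, and correctly mediating between the small-step $\PCFcbv$-simulation and the big-step $\redd$ relation via $\reach$. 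Once this is in place, the PARS-level combination is routine given the uniformity results already established.

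\qed
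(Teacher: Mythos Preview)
Your proposal is correct and follows essentially the same architecture as the paper's proof: a simulation lemma (your ``$\PCFam\PCFcbv\mu\Rightarrow\net{\PCFam}\redd^{\geq 1}\net{\mu}$'' is exactly Lemma~\ref{lem:nonstuttering}), a normal-form matching lemma (your characterisation of terminal closures of type $\PCFalpha$ is Lemma~\ref{lem:nf}), and then a transfer of convergence values through the diamond/uniformity properties of program nets. The forward inequality $p_{\it term}\le p_{\it net}$ is argued identically.

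The only noteworthy difference is in the backward inequality. The paper proceeds pointwise: for any $\rho$ with $\net{\PCFam}\redd^k\rho$ it manufactures, via Corollary~\ref{cor:ad}, a term-level $\mu$ with $\net{\PCFam}\redd^m\net{\mu}$ and $m\ge k$, then invokes Uniqueness of Normal Forms (Th.~\ref{th:proba_term}.1) on nets to get $\NF{\net{\mu}}\ge\NF{\rho}$. You instead exploit the observation that the \PCFAM{} reduction contexts make $\PCFcbv$ deterministic (so $\redd$ on closures has a single maximal sequence), simulate that sequence in nets, and read off the limit via Prop.~\ref{prop:equiv}. Both routes rest on the same ingredients (simulation with $k\ge 1$, normal-form matching, diamond of $\rednet$); yours is slightly slicker here because determinism of \PCFAM{} lets you avoid the explicit ``progress'' corollary, while the paper's argument would survive even if the term calculus were not deterministic.
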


\section{Results and Discussion}

As we  anticipated in Sec.~\ref{overview}, we have proved---{\em parametrically} on the {\em memory}---that 
the \PSIAM\ is an adequate model of program nets reduction (Th.~\ref{adequacy psiam}),  
and program nets are expressive enough  to adequately  represent the
behavior of the \PCF\ abstract machine (Th.~\ref{th:adeqcbv}).
What does this mean? As soon as we choose a concrete instance of memory structure, we have a language and an adequacy result for it. This is in particular the case for all instances of memory which are outlined in Sec.~\ref{sec:exemples}. 
%
To make this explicit, let  $\Imem$, $\Pmem$ and  $\Qmem$ be respectively a deterministic, probabilistic and quantum memory. We denote  by  \PCF($\Imem$), \PCF($\Pmem$) and \PCF($\Qmem$), respectively,  the language which is obtained by choosing that memory.
Observe in particular that the choice of $\Pmem$ or $\Qmem$, respectively specialize our adequacy result  into  a semantics  for a probabilistic \PCFwo\  in the style of \cite{EhrhardTassonPagani}, and a semantics for a quantum \PCFwo, 
in the style of \cite{SelingerValiron,PaganiSV14}.

\subsection{The Quantum Lambda Calculus}\label{sec:quantum literature}
Let us now focus on the quantum case, and analyze in some depth our result.
We  have a quantum lambda-calculus, namely  \PCF($\Qmem$),  together
with an adequate multi-token semantics.  How does  our calculus relate with the ones in  the literature?

We first observe that the {\em syntax} of \PCF($\Qmem$) is very close
  to the language of \cite{PaganiSV14} (we only omit lists and
  coproducts). The {\em operational semantics} is also the same, as
  one can easily see.
  Indeed, the abstract machine in~\cite{PaganiSV14} consists of a triple
  $(Q,L,M)$ where $M$ is a lambda-term and where $Q$ and $L$ are as
  presented in Sec.~\ref{sec:quantum-memory}. As we discussed
  there, for $Q$ and $L$ one can use either the canonical presentation
  of~\cite{PaganiSV14}, or the memory structure $\Qmem$.

\subsubsection{Discussion on the Quantum Model}\label{sec:discussion}

It is now time to go back to the programs in our motivating examples,
Examples~\ref{ex:entangled} and~\ref{ex:recursion}.  Both programs
are valid terms in \PCF($\Qmem$); we have already  informally developed
Example~\ref{ex:recursion} within our model.

We claimed in the Introduction that Example~\ref{ex:entangled} cannot
be represented in the GoI model described in \cite{hasuo11}: the reason is that the
model does not support entangled qubits in the type
$\PCFalpha\otimes\PCFalpha$ (using our notation), a tensor product is
always separable. To handle entangled states, \cite{hasuo11} uses
non-splittable, crafted types: this is why the simple term in Example~\ref{ex:entangled}
is forbidden. In the \PSIAM{}, entangled states pose no problem, as the memory is disconnected from the types.
  
The term of Example~\ref{ex:recursion}, valid in \PCF($\Qmem$), is mapped
through $(-)^\dagger$ to the net of Fig.~\ref{fig:recursion-net-1}:
Th.~\ref{th:adeqcbv} and~\ref{thm:termination} state that the
corresponding \PSIAM{} presented in Fig.~\ref{fig:recursion-msiam}
is adequate. Note that Example~\ref{ex:recursion} was presented in the context of
quantum computation. It is however possible (and the behavior is going
to be the same as the one already described) to use the probabilistic
memory sketched in  Sec.~\ref{sec:prob-bool}. In this case, the
$\mathsf{H}$-sync node would be changed for the coin-sync node.

\subsubsection{Qubits, Duplication and Erasing}\label{why surface} It is worth  to pinpoint
the technical ingredients which allow for the coexistence of quantum
bits with duplication and erasing.  In the language, the reason is
that, similarly to \cite{PaganiSV14}, \PCF{} allows only
lambda-abstractions (or tuples thereof) to be duplicated.  In the
case of the nets (and therefore  of the \PSIAM), the key ingredient is
\emph{surface reduction} (Sec.~\ref{sec:nets}): the allocation of a quantum bit is
captured by the \textit{link} rule which associates a $\onelk$ node
to the memory. Since a  $\onelk$ node linked to the
memory \emph{cannot} lie inside a box, it will \emph{never be copied nor erased}.  Indeed, the 
ways in which the language and the model deal with quantum bits,
actually match.

\condinc{}{ 

\subsubsection{More Possibilities for Quantum Memory Structure}

In the presentation of $\Qmem$ we have given in
Sec.~\ref{sec:quantum-memory}, fresh qubits were arbitrarily
created in state $\ket0$. We could as well have chosen, say, $\ket1$
for default value.
Formally, if we want to do that we can use the set $\mathcal F_1$ of
\mbox{(set-)maps} from $I$ to $\{0,1\}$ that have value $1$ everywhere
except for a finite subset of $I$. The structure $\Qmem$ would
then have been defined as $\mathcal H_1$, the Hilbert space built from
finite linear combinations of $\mathcal F_1$.

Unlike the case of the integer memory, the mathematical properties of
quantum states can make the test action modify the state of the fresh
addresses. Let us see how this happens.

Indeed, one can not only build a memory structure $\Qmem$ with
$\mathcal H_0$ and $\mathcal H_1$ but also with a superposition of the
elements in $\mathcal H_0$ and $\mathcal H_1$.  For example, one could
choose
$\Qmem = \{\alpha{}v_0+\beta{}v_1) | v_x\in\mathcal H_x,
|\alpha|^2+|\beta|^2=1\}$.
This makes a memory structure satisfying all the equations. In this
system, a valid memory can have all of its fresh variables in
superposition. Any measurement on a fresh variable of this memory will
collapse the state... and ``modify'' the global state of the fresh
variables. So, despite the fact that a test on $i$ cannot touch
another address, it can globally act on the memory.
	
This paradox is of course solved when remembering that measurements
and unitary operations (and measurements and measurements) do commute
independently of the state on which they are applied
(Remark~\ref{rem:nonlocal}). So the fact that the memory changes
globally after a test is irrelevant.

} 

\subsection{Conclusion}

In this paper, we have introduced a parallel, multi-token Geometry of
Interaction capturing the choice effects with a parametric memory. This
way, we are able to represent {\em classical, probabilistic and
quantum effects}, and  adequately model the
linearly-typed language \PCF{} parameterized by the same memory
structure. We expect our approach to capture also
non-determin\-istic choice in a natural way: this is ongoing work.

\condinc{}{
\section*{Acknowledgments}
This work has been partially supported by the JSPS-INRIA Bilateral Joint Research Project CRECOGI.
%
%
%
A.Y.\ is supported by Grant-in-Aid for JSPS Fellows.
}

\newpage
\setlength{\bibhang}{250pt}
\bibliographystyle{unsrtnat}
\bibliography{biblio}

\clearpage
\appendix

\section{Commutation of Tests and Updates on Memory States}
\label{app:commut-mem}

The commutation of tests and updates is formally defined as follows.
Assume that $i\neq j$, that $j$ does not meet $\vec{k}$, and that
$\vec{k}$ and $\vec{k'}$ are disjoint.
\begin{itemize}
\item Tests on $i$ commute with tests on $j$. More
    precisely, if 
    \begin{itemize}
    \item 
      $\Mtest(i,m) = p_0\{(\Mtrue,m_0)\} + p_1\{(\Mfalse,m_1)\}$
    \item 
      $\Mtest(j,m_0) = p_{00}\{(\Mtrue,m_{00})\} +
      p_{01}\{(\Mfalse,m_{01})\}$
    \item 
      $\Mtest(j,m_1) = p_{10}\{(\Mtrue,m_{10})\} +
      p_{11}\{(\Mfalse,m_{11})\}$
    \end{itemize}
    and if 
    \begin{itemize}
    \item 
      $\Mtest(j,m) = q_0\{(\Mtrue,m'_0)\} + q_1\{(\Mfalse,m'_1)\}$
    \item 
      $\Mtest(i,m'_0) = q_{00}\{(\Mtrue,m'_{00})\} +
      q_{01}\{(\Mfalse,m'_{01})\}$
    \item 
      $\Mtest(i,m'_1) = q_{10}\{(\Mtrue,m'_{10})\} +
      q_{11}\{(\Mfalse,m'_{11})\}$
    \end{itemize}
    then for all $x,y=0,1$, $m_{xy} = m'_{yx}$ and $p_xp_{xy} = q_yq_{yx}$.
\item Tests of $j$ commute with updates on $\vec k$. More
    precisely, if
    \begin{itemize}
    \item $\Mtest(i,m) = p_0\{(\Mtrue,m_0)\} + p_1\{(\Mfalse,m_1)\}$
    \item $\Mupdate(\vec k, x, m_0) = m_0'$
    \item $\Mupdate(\vec k, x, m_1) = m_1'$
    \end{itemize}
    and if $\Mupdate(\vec k, x, m) = m'$
    then
    \[
      \Mtest(i,m') = p_0\{(\Mtrue,m'_0)\} + p_1\{(\Mfalse,m'_1)\}.
    \]
  \item Updates on $\vec k$ and $\vec k'$ commute. More precisely:
    \begin{multline*}
    \Mupdate(\vec k, x, \Mupdate(\vec k', x', m)) =\\
    \Mupdate(\vec k', x', \Mupdate(\vec k, x, m))
    \end{multline*}
\end{itemize}

\section{Program Nets: proof of the Diamond Property}\label{app:nets}
We prove that the PARS $(\mathcal{N}, \rednet)$ satisfies the diamond property (Prop.\ref{prop:netDiamond} ).
 We write $(R, \ind{R}, \mm) \stackrel{r}{\rednet} \mu$ for
    the reduction of the redex $r$ in the raw program net $(R, \ind{R}, \mm)$.

First, we observe the following property, proven by case analysis.

\begin{lemma}[Locality of $\rednet$]\label{lem:locality}
  Assume that $\R=[(R, \ind{R}, \mm)]$ has two distinct redexes $r_1$
  and $r_2$, with $\R\stackrel{r_1}\rednet\mu_1$,
  $\R\stackrel{r_2}\rednet\mu_2$ and $\mu_1\not=\mu_2$. Then the redex
  $r_2$ (resp. $r_1$) is still a redex in each
  $(R', \ind{R'}, \mm') \in \supp{\mu_1}$ (resp. $\supp{\mu_2}$).\qed
\end{lemma}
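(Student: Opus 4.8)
The plan is to run a case analysis on the kinds of the two redexes $r_1$ and $r_2$, reading from Fig.~\ref{probaRed} and Fig.~\ref{fig:unitRed} the \emph{footprint} of each redex: the finite set of surface nodes and edges that the rule consumes or rewrites, together with the side data it inspects (bindings of $\ind{}$, the definedness/arity condition for $\Mupdate$, and the activeness of the involved $\onelk$ nodes). The first step is to record exactly what firing $r_1$ can change: a cut-elimination step of \SMELLY\ rewrites only the nodes of its own cut and, for a closed exponential step or a $y$-step, additionally migrates (or duplicates) the content of the box involved \emph{to the surface} — it never deletes or alters a node or edge lying outside the redex; the \textit{link} rule only adds a single binding $x\mapsto i$ with $i$ fresh; the \textit{update} rule applies $\Mupdate$ to the memory and fires the underlying sync step; and the \textit{test} rule opens one side of a \bbox, discards the other, and restricts $\ind{}$ to the $\onelk$'s of the surviving side. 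Consequently, for every $(R',\ind{R'},\mm')\in\supp{\mu_1}$: every node and edge of $R$ outside the footprint of $r_1$ still occurs in $R'$ at the same depth, every binding of $\ind{R}$ not touched by $r_1$ still occurs in $\ind{R'}$, and $\mm'$ differs from $\mm$ only by the single $\Mtest$/$\Mupdate$ step of $r_1$, if any.

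The crux is to show that two \emph{distinct} redexes have \emph{disjoint} footprints. Here three structural facts do the work. (i) A pre-net is a graph in which every edge is the conclusion of exactly one node and the premise of at most one node; hence a multiplicative/exponential cut-redex is fully determined by its cut node, two distinct cuts cannot share a premise edge, and no node can be the principal neighbour of two distinct cuts. (ii) Exponential steps are \emph{closed}: the box of an exponential cut has no auxiliary conclusion, which is precisely what excludes the only way a single box could belong to two redexes at once (one cut on its principal door, another on an auxiliary door). (iii) The $\onelk$ node of a \textit{link} redex is \emph{inactive} (its conclusion is not in the domain of $\ind{R}$), whereas the $\onelk$ nodes of an \textit{update} or \textit{test} redex are required to be \emph{active}, and a $\onelk$ conclusion — being of positive type $\one$ — is the premise of at most one node; so no $\onelk$ node can simultaneously anchor a \textit{link} redex and an \textit{update}/\textit{test}/cut redex. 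Going through the finitely many pairs of redex kinds with (i)--(iii) (and the observation that two distinct sync nodes cannot share a $\onelk$ feeding them, again by edge-linearity) shows that whenever $r_1\ne r_2$ the whole footprint of $r_2$ — its nodes, its edges, and the $\ind{}$-bindings it reads — lies in the part of the program net left untouched by firing $r_1$.

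Finally one checks that the \emph{enabling} conditions of $r_2$ do not depend on the memory state, so that the single $\Mtest$/$\Mupdate$ step possibly performed by $r_1$ cannot disable $r_2$: definedness of $\Mupdate(\vec k,x,m)$ depends only on $\vec k$ and $\Marity(x)$, never on $m$; the \textit{link} rule merely needs \emph{some} fresh address, and fresh addresses always exist; and activeness of $\onelk$ nodes is a property of $\ind{}$, not of $\mm$. Combining the three paragraphs, $r_2$ is still a redex — with the same nodes, bindings and (possibly renamed) addresses — in each $(R',\ind{R'},\mm')\in\supp{\mu_1}$, and symmetrically for $r_1$ in $\supp{\mu_2}$; the hypothesis $\mu_1\ne\mu_2$ is not itself needed for locality, it only delimits the situation of interest in the diamond-property argument. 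I expect the only genuinely delicate point to be the exhaustiveness of the middle step — in particular the bookkeeping of which box contents migrate to the surface when a \bbox\ or a closed exponential cut is fired, and the verification that this migration never collides with the footprint of the other redex.
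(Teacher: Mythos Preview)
Your proposal is correct and follows exactly the approach the paper indicates: the paper dispatches this lemma in one line, ``proven by case analysis,'' and your three paragraphs are precisely that case analysis spelled out, with the key structural observations (edge-linearity, closedness of exponential steps, and the active/inactive dichotomy for $\onelk$ nodes) made explicit. Your remark that the hypothesis $\mu_1\neq\mu_2$ is not actually used for locality itself is also accurate.
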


The proof of Prop. \ref{prop:netDiamond} goes as follows.

\begin{proof}(of Prop. \ref{prop:netDiamond}.)
  The locality implies the following two facts:
  
  \noindent
  (1) If $(R, \ind{R}, \mm) \rednet \mu$ with
  $\term{\mu} \neq \emptyset$, then the raw program net
  $(R, \ind{R}, \mm)$ contains exactly one redex.

  \noindent
  (2) If $(R, \ind{R}, \mm) \stackrel{r_1}{\rednet} \mu$ and
  $(R, \ind{R}, \mm) \stackrel{r_2}{\rednet} \xi$ with $\mu \neq \xi$,
  then there exists $\rho$ satisfying $\mu \redd \rho$ and
  $\xi \redd \rho$.  Concretely, $\mu \redd \rho$ is obtained by
  reducing the redex $r_2$ in each
  $(R', \ind{R'}, \mm') \in \supp{\mu}$, and
  $\xi \redd \rho$ is obtained by reducing $r_1$.

  Assuming $\mu \redd \nu$ and $\mu \redd \xi$, item 1.\ implies
  $\term{\nu} = \term{\xi}$, and item 2.\ implies
  $\exists \rho. \nu \redd \rho \land \xi \redd \rho$.  Let us review
  some of the non-evident cases explicitly.

  If $r_1$ and $r_2$ are both non-active $\onelk$ nodes, say $x$ and
  $y$ respectively, $(R, \ind{R}, \mm)$ reduces to
  $(R, \ind{R}\cup\{x\mapsto i, y\mapsto j\}, \mm)$ and
  $(R, \ind{R}\cup\{x\mapsto k, y\mapsto l\}, \mm)$ for some fresh
  indexes $i$, $j$, $k$, $l$.  The permutation $(i,k) \circ (j,l)$
  renders the two program nets equivalent.

  If both $r_1$ and $r_2$ modify memories (\ie\ they perform either
  $\Mupdate$ or $\Mtest$), the property holds because the injectivity
  of $\ind{R}$ guarantees that we always have the requirement
  (disjointness of indexes) of the equations given in Appendix~\ref{app:commut-mem}.
  Hence the two reductions commute both on memory
  (up to group action) and on probability.
\end{proof}

\section{\SIAM: Multitoken Conditions, Formally}\label{sec:multi-token}
\paragraph{Stable Tokens.} 
A token in a stable position is said to be \emph{stable}.  Each such
token is the remains of a token which started its journey from $\DER$
or $\ONES$, and flowed in the graph ``looking for a box''.  This
stable token therefore witnesses the fact that \emph{an instance} of
dereliction or of $\onelk$ ``has found its box''.  Stable tokens keep
track of box copies; let us formalize this. Let $S$ be either $R$, or a structure associated to a box (at any depth).
 Given a state $T$ of $\M_R$, we define $\id(S)$ to be $\{\emp\}$ if $R=S$
(we are at depth 0).  Otherwise, if $S$ is the structure associated to
a box node $\BB$ of $R$, we define $\id(S)$ as the set of all $t$ such
that $(e, s, t)$ is a stable token on  the  premiss(es) of $\BB$'s principal door.
  Intuitively, each such $t$ \emph{identifies a copy of the box}
which contains $S$.

\paragraph{Multitoken Conditions: Synchronization, Choice, and Boxes Management.}
Rules marked by (i), (ii), and (iii), (iv) in Fig.~\ref{fig:trRules} 
only apply if the following  conditions are satisfied.
\begin{varitemize}
\item[(i)] Tokens cross a sync node $l$ only if for a certain $\bstk$, there is
  a token on each position $(e,\stk,\bstk)$ where $e$ is a premise of
  $l$, and $\stk$ indicates an occurrence of atom in the type of
  $e$. In this case, all tokens cross the node simultaneously.
  Intuitively, insisting on having the same stack $\bstk$ means that
  the tokens all belong to the same box copy.
\item[(ii)]  
  A token $(e,s,t)$ on one of the conclusions $\Gamma$ of the \bbox\ can move inside
  the box \emph{only if} its box stack $\bstk$ belongs to $\id(S_0)$
  (resp. $\id(S_1)$), where $S_0$ (resp. $S_1$) is the left
  (resp. right) content of the \bbox.  Note that if the \bbox\ is
  inside an exponential box, there could be several stable tokens on each  
  premise of the principal door, one stable token for each copy of the box.
\item[(iii)] 
  The position $\pp=(\edg,\emp,\bstk)$ under  a $\onelk$ node
  (resp. $(\edg,\delta,\bstk)$ under a $\dlk$ node) is added to the
  state $T$ only if: it does not already belong to
  $\orig(T)$, and $\bstk\in \id(S)$, where $S$ is the
  structure to which $e$ belongs.
   If both conditions are satisfied, $T$ is extended with the position $\pp$ (and $\orig(\pp)=\pp$).
   Intuitively, each $(\edg,\emp,\bstk)$ (resp.  $(\edg,\delta,\bstk)$)
   corresponds to a copy of $\onelk$ (resp. $?d$) node.
\end{varitemize}

\section{\PSIAM}  \label{app:PSIAM}


 The proofs of invariance, adequacy, and deadlock-freeness, all are based on the diamond property of the machine, 
 and on a map---which we call \emph{Transformation}---which allows us  to relate the  rewriting of program nets with the \PSIAM{}.
 In this section we establish the technical tools we need. In Sec. \ref{sec:invariance} we prove Invariance, in Sec. \ref{sec:interplay} we proof adequacy and deadlock-freeness.

  The tool we use  to relate net rewriting and the \PSIAM{} is a mapping
 from states of $\R$ to states of $\R_i$,
 which we are going to introduce in this section. 
     This tool together with confluence (due to the diamond property) allows us to establish the  main result of this section,   from which Invariance  (Th.~\ref{invariance}) follows.
  
  \vskip 4pt

  From now on, we use the following conventions and assumption.

\begin{itemize}
\item  The letters $\rstT,\rstU$ range over \emph{raw} \PSIAM\ states, the letters $\stTT,\stUU$ over \PSIAM\ states, and the  letters $T,U$ over \SIAM\ states.

\item  To keep the notation light, we will occasionally rely on our  convention of denoting  the distribution 
   $\{\st^1\}$ by $\{\st\}$ or even simply  by 
      $\st$, when there is no ambiguity.

\item  We assume that $\R\rednet\sum_i{p_i}\cdot\{\R_i\}$,
  where $i\in \{0\}$ or $i\in \{0,1\}$, $\R=[(R,\ind{R},\mem{R})]$,
$\R_i=[(R_i,\ind{R_i},\mem{R_i})]$.

 \item  Unique initial state. We assume that $\R$ has a single conclusion, which has type $1$. As a consequence, $\mathrm{Dom}(\ind{R})=\ONES_{R}$, and   $\machine{\R}$ has  a \emph{a unique raw initial state}, which is $\mathcal I_{\R} =(\emptyset, \emptyset, \mem{R})$. We have $\stII_{\R}=\{\mathcal I_{R}\}$.

 \item \condinc{}{We denote by $\stsI{R}$ the set of states $\st$ which can be reached
from $\stII_\R$, \ie\ $\stII_\R\reach \mu$ and $\st\in \supp{\mu}$.}
 \item we do not insist too much on the distinction between raw states and states, which in this section is not relevant.

  \end{itemize}
\subsection{Properties and Tools}

In this section, most of the time we analyze the reduction of raw program nets and raw states, because we do not need to use the equivalence relation. Which is the same: we pick a representative of the class, and follow it through its reductions. 

\subsubsection{Exploit the Diamond}\label{sect:construction}
Because the \PSIAM\ is diamond, we can always pick a run of the machine which is convenient for us to analyze the machine. By confluence and uniqueness of normal forms, all choices produce the same result w.r.t. both the degree of termination of any distribution which can be reached (invariance), and the  states which are reached (deadlock-freeness). 

In case of $\R\rednet \rho$ via $\Mlink, \Mupdate$ or $\Mtest$, we will always choose a run which begins as indicated below:

\begin{enumerate}

\item \textit{Link.}  Assume $(R,\ind{R},\mem{R})
  \rednet_{\mathrm{link(n,j)}} \{(R, \ind{R}\cup\{x\mapsto j\},\mem{R})\}$.  The machine does the same: from the initial state  the machine transitions using its reduction $\Mlink(n,j)$, on the same $\onelk$ node. We can choose the same address $j$ because we know it is fresh for $\mem{R}$. Therefore we have $(I, \ind{I}, \mem{R}) \red_{\Mlink(n,j)} \{(\stU, \ind{U},\mem{R})\}=\mu$.

\item \textit{Update.} Assume $(R,\ind{R},\mem{R}) \rednet_{\Mupdate(s)}$

\noindent 
$ \{(R', \ind{R},  \upd(l,\vec i,\mem{R})\}$. Observe that the $\onelk$ node $n$ in the redex is active; let $j$ be the corresponding address. We choose a run which starts with the transitions $(I, \ind{I},\mem{R})$

\noindent
$\red_{\Mlink(n,j)}\{ (\stU, \ind{U},\mem{R})\}$ and $(\stU, \ind{U},\mem{R}) \red_{\Mupdate(s)}$\\
$ \{(\stU, \ind{U},  \upd(l,\vec i,\mem{R})\}=\mu   $.

\item \textit{Test.} Assume  $(R,\ind{R},\mem{R}) \rednet_{\Mtest(j)} \rho$  where for each $i$,  $R\rednet_{u_i} R_i $. Again  the $\onelk$ node $n$ in the redex is active; let $j$ be the corresponding address. Our canonical way to start the run of the machine applies to the initial state  $(I, \ind{I},\mem{R})$ the transition $\Mlink(n,j)$,
crosses the cut, and finally applies the same $\Mtest(j)$, to reach  $\Mtest(j,\mem{R})[\Mtrue:=\stU_0,\Mfalse:= \stU_1])=\mu$.
\end{enumerate}

  \subsubsection{The Transformation Map}

  The tool we use  to relate net rewriting and the \PSIAM{} is a mapping
  from  states of $\M_\R$ to states of $\M_{\R_i}$. 
    We first define a map on positions of $R$, then on \SIAM\ states, and finally on \PSIAM\ raw states.

  \begin{figure*}[htbp]
        \begin{center}
              \includegraphics[width=10cm]{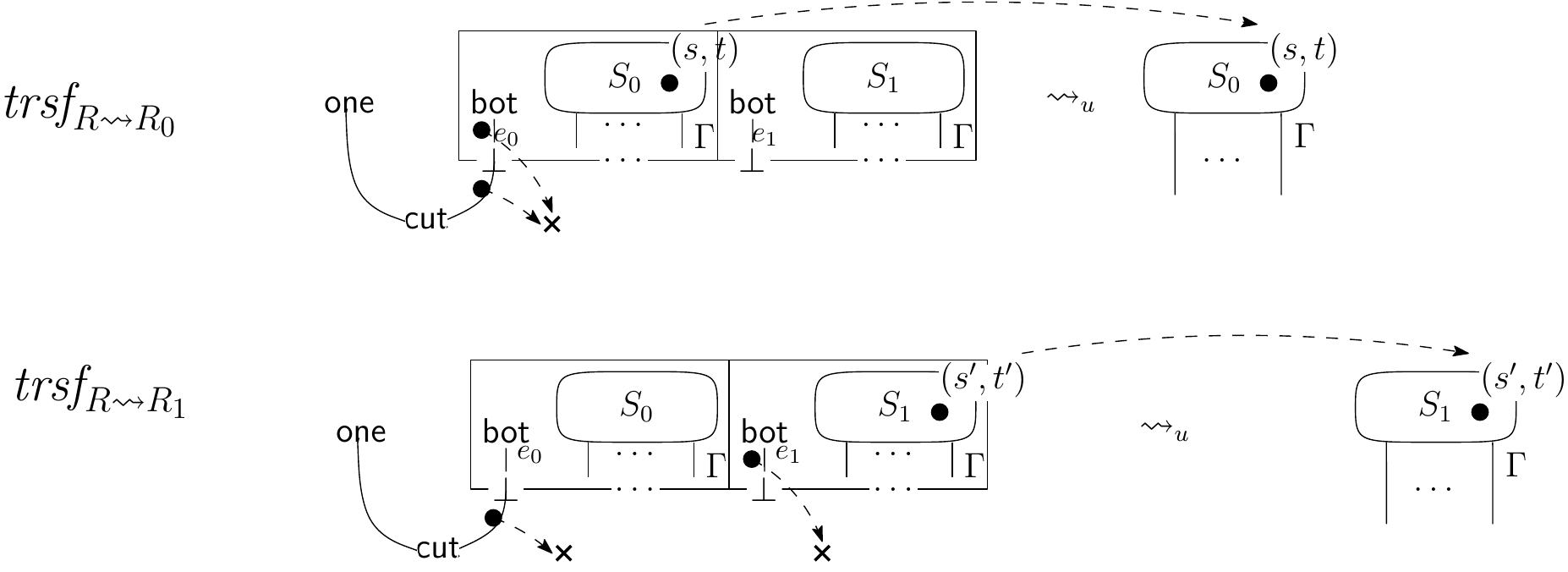}
        \end{center}
        \caption{The Partial Function $\trsf$ on \bbox\ Reduction.}\label{fig:trsfBox}
      \end{figure*}

     %

 \paragraph{Transformation of \SIAM\ States.}    
For each  $R_i$ to which $R$ reduces, we define a \emph{transformation}  on positions, as  a partial function 
  $\trsf_{R \rednet R_i}: \POSALL_{R} \partto \POSALL_{R_i}$. The key case is the case of \bbox\ reduction, illustrated in Fig. \ref{fig:trsfBox};  for each position  outside the redex, we intend that  $\trsf (\pp)$   is the identity.
  The other cases are as in \cite{lics2015}.
      \begin{figure*}[htbp]
        \begin{center}
              \includegraphics[width=15cm]{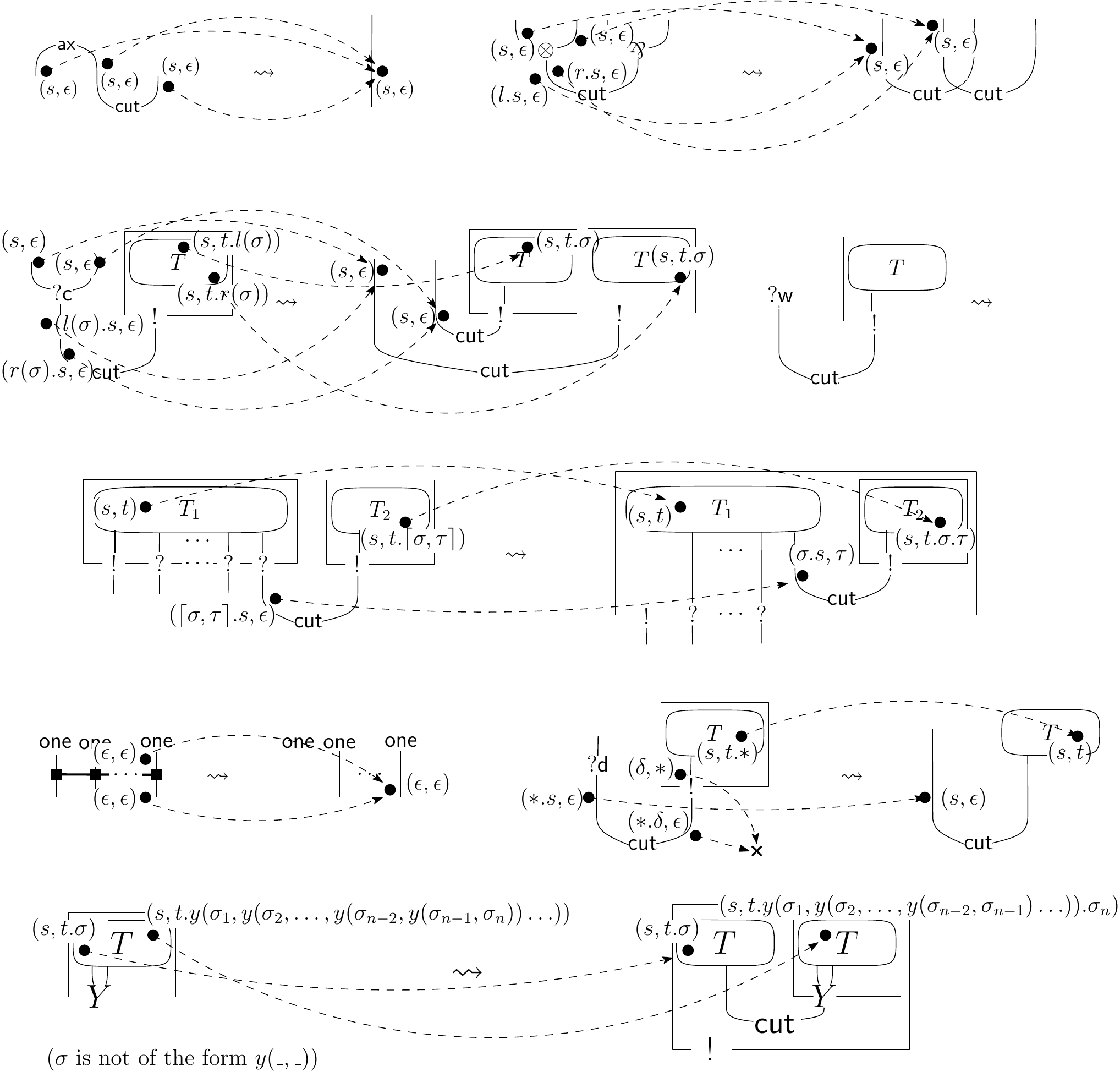}
        \end{center}
        \caption{The Function $\trsf_{R\rednet R'}$.}\label{fig:trsfOthers}
      \end{figure*}
      
The definition extends to the states of the \SIAM{}
point-wisely, in the obvious way.

From now on, we write $\trsf_{R_i}$  or sometimes simply  $\trsf_i$
for $\trsf_{R\rednet R_i}$.

\paragraph{Transformation of \PSIAM\ States.} We now extend   $\trsf_{R\rednet R_i}$ to \PSIAM\  states. 
To do so smoothly, we define a subset  $[\trsf_{R_i}]$ of $\stsI{R}$, which depends on the reduction rule. To work with such states simplify the proofs, and  is always possible because of  Sec. \ref{sect:construction}.

\begin{itemize}

\item Case $\rednet_{\Mlink(n,j)}$.  We define $[\trsf_{R_i}]$ as the set of the states in $\stsI{R}$ in which $\ind{}(\pp)=j$, where  $\pp$ is the position associated to the $\onelk$ node $n$.
 
 \item  Case $\rednet_{\Mupdate(s)}$. We define $[\trsf_{R_i}]$ as the set of the states in $\stsI{R}$ which ``have crossed'' the sync node $s$. We can easily characterize these states. 
 Assume  $ \pp_1, \ldots, \pp_n$ are the positions associated to the premises of $s$ (observe that each $\pp_i$  belongs to $\ONES_R$). $[\trsf_{R_i}]$ is the set of the states $\stTT \in \stsI{R}$ such that $\{\pp_1, \ldots, \pp_n\}\subseteq  \orig(T)$ and      $\{\pp_1, \ldots, \pp_n\}\not \subseteq T$.

\item  Case  $\rednet_{\Mtest(j)}$.  We define  $[\trsf_{R_0}]$ as the set of the states in $\stsI{R}$ which have a 
 token on the left $\botlk$ of the redex (the edge $e_0$ in the Fig.~\ref{fig:trsfBox}).  We define $[\trsf_{R_1}]$ similarly.

\item  Otherwise: we define $[\trsf_{R_i}] = \stsI{R}$.
 \end{itemize}

\begin{definition}[Transformation Map]
~
\begin{enumerate}
\item  $\trsf_i:[\trsf_{R_i}]\subset \stsI{R} \to \sts_{R_i}$  maps the  state $\stTT=[(T, \ind{}, \mem{})]$ 
 into $[\trsf_i(T, \ind{}, \mem{})]$, with  
 $$\trsf_i(T, \ind{}, \mem{})= (\trsf_i (T), \ind{}, \mem{} ).$$ 
\item The definition  extends linearly to  \emph{distributions}. 
     Assume  $\mu=\sum c_k \cdot \{\st_k\}$ 
     and $ \st_k\in [\trsf_{R_i}] $  for each $\st_k$,
      then 
      $$\trsf_{R \rednet R_i} (\mu):=\sum c_k \cdot \{\trsf_{R \rednet R_i}(\st_k)\}.$$
\end{enumerate}

\end{definition}

\begin{fact} If $\st\in [\trsf_{R_i}]$, with $\st \red \mu$ and $\sttwo\in \supp{\mu}$, then $\sttwo\in [trsf_{R_i}]$.
\end{fact}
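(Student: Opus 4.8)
The plan is a case analysis driven by the reduction $\R\rednet\sum_i p_i\cdot\{\R_i\}$, since this is exactly what the definition of $[\trsf_{R_i}]$ branches on. In each branch $[\trsf_{R_i}]$ is the subset of $\stsI{R}$ carved out by a single extra condition, so it suffices to show (a) that $\stsI{R}$ is itself closed under $\red$, and (b) that this extra condition, once it holds of $\st$, holds of every $\sttwo\in\supp{\mu}$ whenever $\st\red\mu$. Part (a) is immediate from the definition of $\stsI{R}$ via $\reach$ (Sec.~\ref{sect:distrpars}): if $\stII_\R\reach\xi$ with $\st\in\supp{\xi}$, then reducing $\st$ inside $\xi$ produces $\xi'$ with $\stII_\R\reach\xi'$ and $\supp{\mu}\subseteq\supp{\xi'}$. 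It is worth keeping in mind that the sole purpose of this Fact is to make the linear extension $\trsf_{R\rednet R_i}(\mu)$ well defined, \ie\ to ensure $\supp{\mu}\subseteq[\trsf_{R_i}]$ stays in the domain of $\trsf_{R\rednet R_i}$.

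For part (b) I would isolate three elementary monotonicity properties of \PSIAM\ transitions, all read off the rules of Sec.~\ref{sec:psiamDef} together with the \SIAM\ rules (Fig.~\ref{fig:trRules bot} and~\ref{fig:trRules}): \textbf{(M1)} the index map $\ind{}$ is only ever \emph{extended}, and by the \textit{Link} rule alone, never altered at an address already in its domain; \textbf{(M2)} no transition deletes a token, so $\orig(T)$ only grows along $\red$; \textbf{(M3)} a token on the conclusion of a $\botlk$ node is \emph{stable}, and no transition acts on stable positions, so such a token is frozen forever. These dispose of three of the four branches. In the $\rednet_{\Mlink(n,j)}$ branch, ``$\ind{}(\pp)=j$'' for the $\onelk$-position $\pp$ persists by (M1), and the \textit{Link} step cannot re-fire at $\pp$, because rule~(iii)'s multi-token side condition forbids creating a token whose origin already lies in $\orig(T)$, and $\pp\in\orig(T)$ holds after the step and persists by (M2). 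In the $\rednet_{\Mtest(j)}$ branch, ``there is a token on $e_0$'' (resp.\ on $e_1$) persists by (M3). In the catch-all branch $[\trsf_{R_i}]=\stsI{R}$, so nothing is left to check beyond (a).

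The remaining branch, $\rednet_{\Mupdate(s)}$, is the one I expect to cost the most care. Here the condition on $\stTT=[(T,\ind{},\mm)]$ is $\{\pp_1,\dots,\pp_n\}\subseteq\orig(T)$ together with $\{\pp_1,\dots,\pp_n\}\not\subseteq T$, where $\pp_1,\dots,\pp_n\in\ONES_R$ are the starting positions feeding the premises of the sync node $s$. The first conjunct persists by (M2). For the second, I need that once some $\pp_k$ has left $T$ it cannot re-enter. The argument I would give: a fresh token cannot be born at $\pp_k$ once $\pp_k\in\orig(T)$ (rule~(iii) again, using (M2) for persistence), and the token that was born at $\pp_k$ cannot return, since a token on the conclusion of a $\onelk$ node has direction $\down$ (the empty stack on a $\one$-edge indicates the unit), hence it leaves downwards, and no \SIAM\ transition puts a token back onto that edge from below --- the $\onelk$ node has no premise. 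Thus each $\pp_k\notin T$ is stable under $\red$, so $\{\pp_1,\dots,\pp_n\}\not\subseteq T$ is too. If a cleaner justification is wanted, one can instead invoke acyclicity of switching paths (the correctness criterion of~\cite{lics2015}) to forbid a token revisiting $\pp_k$. This closes the last branch and the argument.
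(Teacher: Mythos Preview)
Your case analysis is correct and is exactly the kind of argument the paper leaves implicit: in the paper this statement is given as a bare \emph{Fact} with no proof attached, so there is no ``paper's proof'' to compare against beyond the evident intention that it should follow by inspection of the definition of $[\trsf_{R_i}]$.

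Your monotonicity observations (M1)--(M3) are accurate and suffice. A couple of minor remarks. In the \textit{Link} branch, the persistence of $\ind{}(\pp)=j$ already follows from (M1) alone; the extra argument that rule~(iii) cannot re-fire at $\pp$ is true but not needed for the conclusion. In the \textit{Update} branch, your argument that a token cannot return to $\pp_k$ is fine; the cleanest justification is the one you sketch first: $\pp_k\in\ONES_R$ is a starting position, the only rule that places a token there is rule~(iii), and that rule is blocked once $\pp_k\in\orig(T)$, which you have by the first conjunct of the $[\trsf_{R_i}]$ condition together with (M2). Invoking switching acyclicity is overkill here.
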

 
 \begin{lemma}[Important Observation]\label{lem:main}
   The construction given in \ref{sect:construction} leads each time to a distribution $\mu$, where  each state in the support
      satisfies:
      
             \begin{itemize}
                  \item  $\stUU_i\in [\trsf_{R_i}].$
             \item  $  \trsf_{R_i}(\stUU_i) =\stII_{R_i}$.

             \end{itemize}

 \end{lemma}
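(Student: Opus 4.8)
The plan is a direct case analysis on which of the four clauses in the definition of $\rednet$ is applied to $\R$ --- equivalently, on which of the four canonical runs fixed in Section~\ref{sect:construction} we are following. In each case I would exhibit the distribution $\mu$ produced by the construction, read off the state(s) $\stUU_i$ in its support, and then check the two required properties against the definition of the subset $[\trsf_{R_i}]$ and of the map $\trsf_{R_i}$. The first property --- $\stUU_i \in [\trsf_{R_i}]$ --- is essentially bookkeeping, since the subsets $[\trsf_{R_i}]$ were defined precisely so as to record the footprint of the corresponding canonical prefix. Concretely: in the $\rednet_{\Mlink(n,j)}$ clause the prefix is the single transition $(I_R,\ind{I_R},\mem{R}) \red_{\Mlink(n,j)} \stUU_0$, after which the freshly created $\onelk$-token $\pp$ on the conclusion $x$ of $n$ satisfies $\ind{\stUU_0}(\pp)=j$, which is exactly the membership condition; in the $\rednet_{\Mupdate(s)}$ clause the prefix first links the $\onelk$-nodes feeding $s$ and then fires rule~(i), so the positions $\pp_1,\dots,\pp_n$ on the premises of $s$ land in $\orig(\stUU_0)$ but are no longer in $\stUU_0$; in the $\rednet_{\Mtest(j)}$ clause the prefix links, crosses the cut and fires the non-deterministic rule opening the $i$-th content, so $\stUU_i$ carries a stable token on the $\botlk$-edge $e_i$; and in the remaining clause $[\trsf_{R_i}] = \stsI{R}$, so there is nothing to prove.

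For the second property --- $\trsf_{R_i}(\stUU_i)=\stII_{R_i}$ --- I would use the design principle behind $\trsf_{R_i}$: it ``rewinds'', on the machine side, exactly the net-rewrite step $R\rednet R_i$, carrying a state of $\M_{\R}$ that sits in $[\trsf_{R_i}]$ to the corresponding state of $\M_{\R_i}$. Since the canonical prefix performs nothing but the machine-level counterpart of $R\rednet R_i$, the state $\stUU_i$ contains no token other than the scaffolding produced by that prefix, and applying $\trsf_{R_i}$ --- whose action on positions is given by Fig.~\ref{fig:trsfBox} and~\ref{fig:trsfOthers}, extended in the obvious way to the origin and index maps --- erases exactly this scaffolding while carrying the memory component over unchanged. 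In each clause one then unfolds $\trsf_{R_i}$ on the explicit $\stUU_i$ and checks that the resulting $\M_{\R_i}$-state has empty token set, empty index map, and memory $\mem{R_i}$. The empty-token part is where the standing assumption that $\R$ has a single conclusion of type $\one$ enters: it forces $\POSI_{R_i}=\emptyset$, hence $\ind{I_{R_i}}$ is the empty map and $\stII_{R_i}=(\emptyset,\emptyset,\mem{R_i})$, so it is enough to see that $\trsf_{R_i}$ deletes every token of $\stUU_i$. The memory part is immediate from the three memory-touching clauses ($\mem{R_i}=\mem{R}$ for link, $\mem{R_i}=\upd(l,\vec i,\mem{R})$ for update, $\mem{R_i}$ the $i$-th memory returned by $\Mtest(j,\mem{R})$ for test), which matches what $\trsf_{R_i}$ transports since it leaves the memory alone.

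I expect the only genuinely delicate case to be $\rednet_{\Mtest(j)}$, \ie\ $\bot$-box reduction, for two reasons. First, $\trsf_{R_i}$ is there a \emph{partial} function, defined by the case analysis of Fig.~\ref{fig:trsfBox} over tokens inside the kept content, tokens inside the discarded content, tokens on the doors of the box, and tokens outside the redex; I must check that on the particular $\stUU_i$ none of the undefined branches is hit, which is guaranteed exactly by $\stUU_i\in[\trsf_{R_i}]$ --- the token on $e_i$ witnesses that content $i$, and not content $1-i$, was opened. Second, I must make sure the canonical prefix leaves \emph{no} other token inside the box or on its doors besides the intended $\botlk$-token, so that $\trsf_{R_i}$ collapses $\stUU_i$ all the way down to $(\emptyset,\emptyset,\mem{R_i})$; this is where the ``start the run as link, then cross the cut, then test, and nothing else'' discipline of Section~\ref{sect:construction} is used in an essential way. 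The link and update clauses are routine once the definitions are unwound, and the remaining clause is trivial.
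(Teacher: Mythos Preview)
The paper does not supply a proof for this lemma: it is stated as an ``Important Observation'' and left without argument, the claim being meant to follow by inspection of the construction in Section~\ref{sect:construction} together with the definitions of $[\trsf_{R_i}]$ and of $\trsf_{R_i}$ that immediately precede it. Your case analysis is exactly the unfolding the paper has in mind and is correct; in particular your identification of the $\rednet_{\Mtest(j)}$ case as the only delicate one (because of the partiality of $\trsf_{R_i}$ on positions, Fig.~\ref{fig:trsfBox}) matches the paper's emphasis, and the remaining clauses are indeed routine once the definitions are unwound. One small remark: you are right to say that $\trsf_{R_i}$ must be ``extended in the obvious way to the origin and index maps''; the paper's displayed definition $\trsf_i(T,\ind{},\mem{}) = (\trsf_i(T),\ind{},\mem{})$ is slightly loose on this point (the domain of $\ind{}$ has to pass from $\START_R$ to $\START_{R_i}$), and your reading is the intended one.
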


\subsubsection{Properties of the Reachable States}
  \newcommand{\future}[1]{\mathop{\uparrow}(#1)}
Let us analyze the set of states which is  spanned by a run of the \PSIAM.
Given a \bbox\ of $\R$, let $e_0$ be the conclusion of the left $\bot$ and $e_1$ be the conclusion of the right $\bot$.
For any stacks $s,t$, we call the two stable positions $(e_0,s,t)$ and $(e_1,s,t)$ a  $\bot$-pair. These two positions are mutually exclusive in a state, because $\orig(e_0,s,t) = \orig(e_1,s,t)$.

 We say that two states $\st,\sttwo\in \stsI{R}$ are \emph{in conflict}, written $\st \smile \sttwo$, if $\st$ contains one of the two positions of a $\bot$-pair and $S$ the other.
 We observe that conflict is hereditary with respect to transitions, because stable positions are never deleted or modified by a transition.
 Let $\future{\st} = \{
 \sttwo ~|~ \st\mathrel{\redd^*}\rho \,\land\, \sttwo\in \supp{\rho}
 \}$. The following properties are all immediate:
 
 \begin{enumerate}

 \item If $\st \smile \st'$, $\sttwo \in \future{\st}$, and $\sttwo' \in \future{\st'}$, then $\sttwo \smile \sttwo'$.
 \item If $\st \red\mu$, either the transition is deterministic, or  $\supp{\mu}= \{\sttwo_0,\sttwo_1\}$ with $\sttwo_0 \smile \sttwo_1$.
 \item If $\stII\redd^* \mu$, then for each $\st\not=\st'\in \supp{\mu}$,  $\st \smile \st'$.

 \end{enumerate}

States in conflict are in particular disjoint. Therefore we can safely sum them:

\begin{lemma}\label{lem:conflict}
Given a distribution of states $\mu \in \dists{\stsI{R}}$,
  \[
  \infer
      { \mu \redd^{k}   \sum_{\st\in\supp{\mu}}\mu(\st)\cdot\rho_{\st}  }
      {  \forall \st_i,\st_j\in \supp{\mu}.  \st_i \smile \st_j~   &  \{\st \redd^k \rho_{\st}\}_{\st\in\supp{\mu}}  }
      \,.
  \]
\end{lemma}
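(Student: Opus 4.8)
The plan is to argue by induction on $k$, using the affine behaviour of $\redd$ on distributions together with the hereditarity of the conflict relation $\smile$ established just above (Property~1).

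The base case $k=0$ is immediate: each $\rho_{\st}$ equals $\{\st\}$, so $\sum_{\st\in\supp{\mu}}\mu(\st)\cdot\rho_{\st}=\mu$ and $\mu\redd^{0}\mu$. For the inductive step I would first split every witnessing derivation $\st\redd^{k+1}\rho_{\st}$ as $\st\redd\sigma_{\st}\redd^{k}\rho_{\st}$, taking $\sigma_{\st}=\{\st\}$ and the trivial first step when $\st$ is terminal. Picking, for each non-terminal $\st\in\supp{\mu}$, the one-step reduction $\st\red\sigma_{\st}$, the definition of $\redd$ gives $\mu\redd\nu$ with $\nu:=\sum_{\st\in\supp{\mu}}\mu(\st)\cdot\sigma_{\st}$. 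The key point to check is the invariant that carries the induction through: the states occurring in $\nu$ are again pairwise in conflict, and the supports $\supp{\sigma_{\st}}$ ($\st\in\supp{\mu}$) are pairwise disjoint. For $\st\neq\st'$ every state of $\supp{\sigma_{\st}}$ lies in $\future{\st}$ and every state of $\supp{\sigma_{\st'}}$ in $\future{\st'}$, so by Property~1 (hereditarity of $\smile$) any two of them conflict, hence differ; and when some $\sigma_{\st}$ has two elements, these conflict by Property~2 (the two outcomes of a probabilistic transition are in conflict). Thus $\nu$ has pairwise-conflicting support and $\nu=\sum_{\st}\mu(\st)\cdot\sigma_{\st}$ is a decomposition with no cancellation, each $\sttwo\in\supp{\nu}$ belonging to exactly one $\supp{\sigma_{\st}}$.

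Next I would use the routine fact that $\redd$ is affine: whenever $\eta\redd^{k}\theta$, one can write $\theta=\sum_{b\in\supp{\eta}}\eta(b)\cdot\theta_{b}$ with $b\redd^{k}\theta_{b}$ for suitable $\theta_{b}$ — an easy induction on $k$, at each step re-using the reductions already chosen for the descendants of $\eta$. Applied to $\sigma_{\st}\redd^{k}\rho_{\st}$, this produces, for each $\sttwo\in\supp{\sigma_{\st}}$, a distribution $\rho'_{\sttwo}$ with $\sttwo\redd^{k}\rho'_{\sttwo}$ and $\rho_{\st}=\sum_{\sttwo\in\supp{\sigma_{\st}}}\sigma_{\st}(\sttwo)\cdot\rho'_{\sttwo}$; since the $\supp{\sigma_{\st}}$ are disjoint this assigns an unambiguous $\rho'_{\sttwo}$ to every $\sttwo\in\supp{\nu}$. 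As $\supp{\nu}$ is pairwise in conflict and each $\sttwo\in\supp{\nu}$ reduces in $k$ steps to $\rho'_{\sttwo}$, the induction hypothesis applies to $\nu$, giving $\nu\redd^{k}\sum_{\sttwo\in\supp{\nu}}\nu(\sttwo)\cdot\rho'_{\sttwo}$, and rearranging the sum this distribution equals $\sum_{\st\in\supp{\mu}}\mu(\st)\cdot\rho_{\st}$. Composing $\mu\redd\nu\redd^{k}\sum_{\st}\mu(\st)\cdot\rho_{\st}$ closes the induction.

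I expect the only genuinely delicate point to be the disjointness-of-supports invariant — and it is precisely why the conflict hypothesis cannot be dropped. If a descendant of $\st$ coincided with a descendant of some $\st'\neq\st$, a single $\redd$-step of $\nu$ would be forced to commit to one of the two one-step reductions prescribed for that shared state, so the weighted sum $\sum_{\st}\mu(\st)\cdot\rho_{\st}$ could not be realised by one $\redd^{k}$-derivation. Hereditarity of $\smile$ (Property~1) is exactly what rules this out; all the remaining steps are bookkeeping over the definitions of $\redd$ and $\redd^{k}$.
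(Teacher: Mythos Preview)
Your proof is correct. The paper does not spell out a proof of this lemma at all: it only records the observation that ``states in conflict are in particular disjoint, therefore we can safely sum them'' and moves on. Your induction on $k$ is the natural way to make that remark precise, and you have correctly isolated the one point where the conflict hypothesis does real work --- namely, guaranteeing that after one $\redd$-step the supports of the $\sigma_{\st}$ remain pairwise disjoint, so that the same reduction is never forced to serve two incompatible continuations. Your use of Properties~1 and~2 (hereditarity of $\smile$, and conflict of the two outcomes of a probabilistic step) is exactly right for re-establishing the pairwise-conflict invariant on $\nu$.

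One minor remark: the ``affine decomposition'' fact you invoke for $\sigma_{\st}\redd^{k}\rho_{\st}$ is indeed routine, and as you note it holds by re-using the one-step choices already fixed in the given derivation; in the present setting $\sigma_{\st}$ has at most two elements and they are already in conflict by Property~2, so you could equally well absorb this step into the induction hypothesis itself rather than appealing to a separate general fact. Either way the argument goes through.
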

              
As an immediate consequence, the following also hold:

\[
\infer{ \stUU \redd^{k+1}   \sum_{\st\in\supp{\mu}}\mu(\st)\cdot\rho_{\st}  }
      { \stUU \red \mu    &  \{\st \redd^k \rho_{\st}\}_{\st\in\supp{\mu}}  }
      \]
\[
      \infer{ \stUU \redd^{n+k}   \sum_{\st\in\supp{\mu}}\mu(\st)\cdot\rho_{\st}  }
            { \stUU \redd^n \mu    &  \{\st \redd^k \rho_{\st}\}_{\st\in\supp{\mu}}  }
\]

   \subsubsection{The Reachability Relation $\reach$}    
   The reachability relation $\reach$ (defined  in  Sec.~\ref{sect:distrpars}) 
   is a useful tool in the study of the \PSIAM. 
   
   %
   A \emph{derivation} of $a\reach \mu$ is inductively obtained by using the  rules which define $\reach$.
   %
   %

  In the case of the \PSIAM\, the relations $\reach$
   and $\redd$ are equivalent with respect to normal forms.

   \begin{lemma}\label{lem:equiv}
     If $\{\st\} \redd^n \xi $ then $\st \reach \xi$. 
   Conversely, if  $\st \reach \mu$ then there exists $\rho$ with   $\{\st\} \redd^*
   \rho$ and such that $\term \mu \subseteq \term \rho$.
   \qed
   \end{lemma}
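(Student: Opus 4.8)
The plan is to prove the two implications separately, both by induction, relying on two elementary observations about $\redd$ together with the conflict analysis established above. The observations are: \emph{(a)} if $a$ is non-terminal and $a \redone \nu$ then $\{a^1\} \redd \nu$ (since $\cont{\{a^1\}} = \{a^1\}$); and \emph{(b)} $\term{\cdot}$ is monotone non-decreasing along $\redd$, because in $\mu \redd \nu$ the terminal part $\term{\mu}$ is carried verbatim into $\nu$ and $\nu$ can only gain terminal mass from the reducts of $\cont{\mu}$. Throughout we use that $\st$ is a reachable state of $\M_\R$, so the support of every distribution reachable from $\{\st\}$, whether via $\redd$ or via $\reach$, consists of pairwise conflicting, hence pairwise disjoint, states.

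For the first implication I would argue by induction on $n$. The base case $n=0$ is the axiom $\st \reach \{\st^1\}$. For the step, writing $\{\st\} \redd^{n-1} \nu \redd \xi$ and using the induction hypothesis $\st \reach \nu$, it suffices to establish the composition property: $\st \reach \nu$ and $\nu \redd \xi$ imply $\st \reach \xi$. Here I would decompose $\nu = \term{\nu} + \cont{\nu}$, enumerate $\supp{\cont{\nu}} = \{a_1,\dots,a_m\}$ with $a_i \redone \dsttwo_i$ and $\xi = \term{\nu} + \sum_i \nu(a_i)\cdot\dsttwo_i$, and extend the derivation of $\st \reach \nu$ by applying the third rule of $\reach$ to the $a_i$ one at a time. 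The side condition $b \notin \supp{\mu}$ of that rule holds at each stage because $a_i$ is non-terminal (hence absent from $\term{\nu}$), the $a_j$ are pairwise distinct, and the already-produced reducts $\dsttwo_j$ live in the futures of states conflicting with $a_i$ and are therefore disjoint from $a_i$; since $a_i \redone \dsttwo_i$ gives $a_i \reach \dsttwo_i$, the rule applies, and after $m$ applications it yields $\st \reach \xi$.

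For the second implication I would induct on the derivation of $\st \reach \mu$, proving the slightly stronger statement that $\{\st\}$ and $\mu$ admit a common $\redd^*$-reduct $\rho$ (whence $\term{\mu} \subseteq \term{\rho}$ by observation \emph{(b)}). The axiom case takes $\rho = \{\st^1\}$, and a single step $\st \redone \mu$ takes $\rho = \mu$ by observation \emph{(a)}. The crux is the third rule, with premises $\st \reach \mu' + \{b^p\}$, $b \reach \sigma$, $b \notin \supp{\mu'}$ and conclusion $\mu = \mu' + p\cdot\sigma$. If $b$ is terminal then necessarily $\sigma = \{b^1\}$ and the claim reduces to the first premise. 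Otherwise, the induction hypothesis on the first premise gives a common reduct $\xi_1$ of $\{\st\}$ and $\mu' + \{b^p\}$; because the $\mu'$-part and the $b$-part have hereditarily disjoint descendants, the reduction $\mu' + \{b^p\} \redd^* \xi_1$ splits, for a common number of steps $k$, as $\mu' \redd^k \xi_1'$ and $\{b^1\} \redd^k \xi_1''$ with $\xi_1 = \xi_1' + p\cdot\xi_1''$. Applying the induction hypothesis to $b \reach \sigma$ and then confluence of $\redd$ (Prop.~\ref{prop:machine_conf}) yields a common $\redd^*$-reduct $\theta$ of $\xi_1''$ and $\sigma$. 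Reducing $\mu = \mu' + p\sigma$ and, in parallel, $\xi_1 = \xi_1' + p\xi_1''$ for a common, sufficiently large number of steps — padding being harmless by observation \emph{(b)} and the terminal part being well-defined by uniqueness of normal forms (Th.~\ref{th:proba_term}) — then produces the required common $\redd^*$-reduct $\rho$ of $\mu$ and $\{\st\}$.

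The main obstacle I anticipate is exactly this last case: the tension between the \emph{laziness} of $\reach$, which may leave non-terminal elements such as $b$ untouched, and the \emph{eagerness} of $\redd$, which reduces all non-terminal elements at once. Reconciling them requires (i) the conflict/disjointness structure of reachable distributions, so that $\redd$-reductions split cleanly along the $\mu'/b$ decomposition; (ii) confluence of $\redd$ to align the eager and lazy reductions of the $b$-part; and (iii) a careful but routine step-count bookkeeping (padding with extra $\redd$-steps), legitimised by monotonicity of $\term{\cdot}$ and uniqueness of normal forms. The first implication and the remaining cases of the second are by contrast straightforward inductions.
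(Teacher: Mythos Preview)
Your proposal is correct and follows the same inductive skeleton that the paper indicates (``induction on $n$'' for the first direction, ``structural induction'' for the second); the paper gives no further detail, so your elaboration is a legitimate expansion rather than a different route.

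Two remarks on the write-up. First, the assertion you rely on throughout, that the support of every distribution reachable from $\{\st\}$ \emph{via $\reach$} consists of pairwise conflicting states, is not an immediate consequence of the paper's property~3 (which is stated only for $\redd^*$ from $\stII$). It does hold, but it needs its own short structural induction on the $\reach$ derivation: in the inductive step one uses that $\supp{\sigma}\subseteq\future{b}$ (itself proved by the same induction, together with Lem.~\ref{lem:conflict} to show transitivity of $\future{\cdot}$), and then hereditariness of conflict gives that every state in $\supp{\sigma}$ conflicts with every state in $\supp{\mu'}$. Without this, the ``splitting'' step $\mu'+\{b^p\}\redd^k \xi_1'+p\,\xi_1''$ is not justified, since after one $\redd$-step the $b$-descendants could in principle collide with the $\mu'$-descendants. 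Second, your strengthening of the induction hypothesis in the converse direction (to a common $\redd^*$-reduct of $\{\st\}$ and $\mu$) is indeed what makes the inductive step go through cleanly; the weaker statement $\term{\mu}\subseteq\term{\rho}$ alone does not compose well at the third rule of $\reach$, so this strengthening is not optional. The paper's one-line proof hides exactly this point.
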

   \begin{proof} The former part is by induction on $n$.
     The latter is by structural induction (on the rules shown above).
   \end{proof}
   
   
   It is helpful to  define also another  auxiliary relation $\st \oreach \tau$ which holds if there exists $\mu$ satisfying $\reach \mu$ and $\tau \subseteq \term{\mu}$.
   This relation\footnote{It is easy also to give rules to define $\oreach$ inductively.} states that $\st$ reaches a set of terminal states. It is immediate that $\st \oreach \tau$ iff $\exists \rho$, $\st \redd^* \rho$ and $\tau \subseteq \term{\rho}$.

    \subsubsection{Properties of $\trsf$}     

 We now  study the action of $\trsf$ on transitions.  
 We first look at how $\trsf$ maps initial/final/deadlock states.

 \begin{lemma}\label{trsf_term}
   
   \begin{enumerate}
   \item If $\stII_{\R} \in [\trsf_{R_i}]$, then $\trsf_{R_i}(\stII_{\R}) = \stII_{\R_i}$ .
         
     
   \item  Assume  $\st\in [\trsf_{R_i}]$  is a final/deadlock state of  $\M_{R}$;
     then $\trsf_{R_i}(\st)$
     is a  final/deadlock state of  $\M_{R_i}$.
     
     
   \item If $\tau=\term{\tau}$ (\ie\ all states  are terminal), and  $\supp{\tau}\subseteq [\trsf_{R_i}]$,  then $\NF{\tau}= \NF{\trsf_{R_i} (\tau)}$.
   \end{enumerate}
   
 \end{lemma}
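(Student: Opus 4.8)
The plan is to treat the three items in turn; item~2 carries the real content and the other two follow quickly from it.

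\emph{Item~1.} I would proceed by cases on the reduction rule underlying $R\rednet R_i$, using the standing assumption that $\R$ has a single conclusion, of type $\one$, so that the initial raw state is $\mathcal{I}_{\R}=(\emptyset,\emptyset,\mem{R})$ --- no token, empty index map. When the rule is the \emph{link}, \emph{update} or \emph{test} rule, the set $[\trsf_{R_i}]$ is, respectively, the set of reachable states whose index map already sends the $\onelk$-position to the chosen address, the set of states that have already crossed the sync node, and the set of states carrying a token on the relevant $\botlk$-conclusion of the redex; an empty state meets none of these requirements, so the hypothesis $\stII_{\R}\in[\trsf_{R_i}]$ is vacuous in those cases. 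In the remaining case $[\trsf_{R_i}]=\stsI{R}$ contains $\stII_{\R}$, the reduction changes neither the memory nor the index map, $R_i$ has the same conclusions as $R$ (hence still a single conclusion of type $\one$), and $\trsf_i$ is the identity away from the redex; since $\mathcal{I}_{\R}$ carries no positions at all we get $\trsf_i(\mathcal{I}_{\R})=(\emptyset,\emptyset,\mem{R})=\mathcal{I}_{\R_i}=\stII_{\R_i}$.

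\emph{Item~2.} The point is that, on its domain $[\trsf_{R_i}]$, the map $\trsf_i$ relates states of $\M_R$ with states of $\M_{R_i}$ in a way compatible with the two ingredients of the notions ``final'' and ``deadlock'': (a) every position lying in $\POSF_R\cup\PDOORS_R$ is sent either into $\POSF_{R_i}\cup\PDOORS_{R_i}$ or to nothing at all, whereas every position \emph{not} in $\POSF_R\cup\PDOORS_R$ is sent to a position again outside $\POSF_{R_i}\cup\PDOORS_{R_i}$; and (b) $\st$ is terminal in $\M_R$ if and only if $\trsf_i(\st)$ is terminal in $\M_{R_i}$. Both (a) and (b) are established by a case-by-case inspection of the definition of $\trsf_i$ (Fig.~\ref{fig:trsfBox} and Fig.~\ref{fig:trsfOthers}) against the transition rules (Fig.~\ref{fig:trRules bot}--\ref{fig:trRules}) and the multi-token side conditions of Appendix~\ref{sec:multi-token}. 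The delicate cases are the \bbox\ reduction --- where the token sitting on the opened $\botlk$-conclusion, present in every state of $[\trsf_{R_i}]$ by construction, simply has no image, while the discarded content carries no token at all (because, the redex being at the surface, the choice recorded by that stable token is unique and the choice condition forbids a token from entering the other content) --- and the synchronisation, choice and ``box-opened'' conditions, which are invariant under $\trsf_i$ since $\trsf_i$ is a bijection on the box stacks and stable tokens they mention. Granting (a) and (b): a terminal $\st\in[\trsf_{R_i}]$ is sent by (b) to a terminal state; if moreover every token of $\st$ lies in $\POSF_R\cup\PDOORS_R$, then by the first half of (a) every token of $\trsf_i(\st)$ lies in $\POSF_{R_i}\cup\PDOORS_{R_i}$, so $\trsf_i(\st)$ is final; and if some token of $\st$ lies outside $\POSF_R\cup\PDOORS_R$, then by the second half of (a) some token of $\trsf_i(\st)$ lies outside $\POSF_{R_i}\cup\PDOORS_{R_i}$, so $\trsf_i(\st)$ is terminal but not final, \ie\ a deadlock.

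\emph{Item~3.} This is then immediate. Write $\tau=\sum_k c_k\cdot\{\st_k\}$ with every $\st_k\in[\trsf_{R_i}]$ terminal (each being final or a deadlock, since $\tau=\term\tau$). By item~2 each $\trsf_i(\st_k)$ is terminal in $\M_{R_i}$, so $\trsf_i(\tau)=\sum_k c_k\cdot\{\trsf_i(\st_k)\}$ is again supported on terminal states; hence $\NF{\trsf_i(\tau)}$ equals the total mass of $\trsf_i(\tau)$, which is $\sum_k c_k$, which is the total mass of $\tau$, which (as $\tau$ is all terminal) equals $\NF{\tau}$ --- possible coincidences $\trsf_i(\st_k)=\trsf_i(\st_l)$ only regroup mass and do not change the total. \emph{The main obstacle} is item~2, and inside it the verification that $\trsf_i$ preserves the availability of transitions through the multi-token conditions: these are the non-local features of the machine and the very reason the auxiliary sets $[\trsf_{R_i}]$ were introduced, so the bookkeeping of box stacks and stable markers around the opened \bbox\ is where the argument is least routine.
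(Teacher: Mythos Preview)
The paper states this lemma without proof, so there is nothing of its own to compare against; it is evidently meant as a routine verification, and your proposal is a reasonable outline of exactly that verification, with the weight correctly placed on item~2.

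Two small remarks on item~2. First, you phrase (b) as a biconditional, but the backward implication can fail: a state $\st$ admitting only \emph{collapsing} transitions is not terminal while $\trsf_i(\st)$ may be (this is precisely the phenomenon behind Lemma~\ref{aux_lemma}). Fortunately your actual argument only invokes the forward direction of (b), so nothing is lost---just tighten the statement. Second, for the ``deadlock $\to$ deadlock'' half you need that a position outside $\POSF_R\cup\PDOORS_R$ is not \emph{deleted} by $\trsf_i$; you handle the \bbox{} case correctly (the discarded content is empty for states in $[\trsf_{R_i}]$, and the only deleted token is the stable one on the $\botlk$), but it is worth saying explicitly that in the remaining reduction cases the only positions erased by $\trsf_i$ are stable ones on the door of the box that disappears. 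With that, (a) goes through.

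Your treatment of items~1 and~3 is correct: the vacuity argument for the \emph{link}/\emph{update}/\emph{test} cases in item~1 is exactly right under the standing assumption that $\R$ has conclusion $\one$ (so $\POSI_R=\emptyset$ and the initial raw state is empty), and the total-mass argument for item~3, including the observation that coincidences $\trsf_i(\st_k)=\trsf_i(\st_l)$ only regroup mass, is the clean way to finish.
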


\begin{lemma}\label{lem:trsf}
  If $\st \smile \st'$ and $\st,\st'\in [\trsf_{R_i}]$, then $\trsf_{R_i}(\st) \smile \trsf_{R_i}(\st')$
\end{lemma}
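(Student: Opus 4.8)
The plan is to unwind the conflict relation and then check that the single $\bot$-pair responsible for the conflict survives the passage through $\trsf_{R_i}$.

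First I would use $\st\smile\st'$: by definition there are a \bbox\ $B$ of $R$ and stacks $s,t$ such that, writing $e_0,e_1$ for the conclusions of the left and right $\botlk$-nodes of $B$, exactly one of the two (stable) positions $(e_0,s,t),(e_1,s,t)$ lies in $\st$ and the other in $\st'$; say $(e_0,s,t)\in\st$ and $(e_1,s,t)\in\st'$, the symmetric case being identical. Recall that these two positions share their origin, hence can never co-occur in one state. The first real step is to observe that, under the hypothesis $\st,\st'\in[\trsf_{R_i}]$, this $B$ cannot be the \bbox\ destroyed by the step $R\rednet R_i$, nor can it lie inside the content of that \bbox\ that the step erases. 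Indeed, such a step is a $u_i$-reduction ($i\in\{0,1\}$) at a surface \bbox, call it $B^\star$; by the definition of $[\trsf_{R_i}]$ in the $\rednet_{\Mtest}$-case, every state of $[\trsf_{R_i}]$ carries the stable ``choice'' token on the $i$-side $\botlk$-conclusion of $B^\star$, and since $B^\star$ is at the surface it has a single copy, so this is the only token any reachable state may carry on either $\botlk$-conclusion of $B^\star$, and moreover (the $i$-side having been chosen) the erased content $S_{1-i}$ of $B^\star$ is token-free. Consequently, if $B=B^\star$ the pair $\{(e_0,s,t),(e_1,s,t)\}$ could not be split, and if $B\subseteq S_{1-i}$ then neither $\st$ nor $\st'$ could carry a token on an $\botlk$-conclusion of $B$ — in both cases contradicting that $B$ witnesses the conflict for two states of $[\trsf_{R_i}]$. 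So $B$ is untouched by the step (for link-, update- and the remaining reductions this is automatic, since no \bbox\ is consumed).

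It then remains to transport the pair. Since $B$ is untouched, inspection of Fig.~\ref{fig:trsfBox} and Fig.~\ref{fig:trsfOthers} shows that $\trsf_{R_i}$ is the identity on every position whose edge lies in $B$ — in particular on $(e_0,s,t)$ and $(e_1,s,t)$, so they are in its domain — possibly up to the uniform renaming of the box-stack component of an enclosing box that the step duplicates or erases; that renaming depends only on the common part $t$, so it acts identically on the two members of the pair, and it does not alter origins. Hence $\trsf_{R_i}(e_0,s,t)$ and $\trsf_{R_i}(e_1,s,t)$ are the left/right $\botlk$-conclusions of a copy of $B$ in $R_i$ for one common pair of stacks, still sharing their origin, i.e.\ a $\bot$-pair of $R_i$; since the first lies in $\trsf_{R_i}(\st)$ and the second in $\trsf_{R_i}(\st')$, we conclude $\trsf_{R_i}(\st)\smile\trsf_{R_i}(\st')$.

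The delicate point is the bookkeeping of the last paragraph: pinning down precisely how $\trsf$ re-indexes the box-stack of a \bbox\ that is itself untouched but sits inside the material duplicated or contracted by the redex, and checking that this re-indexing — together with the induced action on origins — is literally the same on the two positions of a $\bot$-pair. Everything else is a direct reading of the figures defining $\trsf$ and of the definition of $[\trsf_{R_i}]$.
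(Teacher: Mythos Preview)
The paper states this lemma without proof, so there is no ``paper's own proof'' to compare against; your task was effectively to supply one, and the line you take --- isolate the witnessing $\bot$-pair, argue it lies outside whatever the step destroys, then transport it --- is the natural one and is essentially correct.

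There is one imprecision worth flagging. You write that for link, update ``and the remaining reductions this is automatic, since no \bbox\ is consumed''. That is not literally true: the weakening step ($?w$ cut against a closed exponential box) erases the whole box together with any \bbox es nested in its content, so a \bbox\ \emph{can} disappear in a non-$u_i$ step. The repair, however, is exactly the argument you already deploy for the discarded branch of a $u_i$-step: in any reachable state, a box cut against $?w$ has never been opened (no dereliction token can reach its principal door through a weakening), hence by the multi-token conditions (iii)/(iv) its content is token-free and cannot host the witnessing $\bot$-pair. With this one extra sentence your case analysis is complete. (If you prefer a more uniform phrasing, you can appeal to Fact~\ref{stable num}: for every step other than $d,y,u_i$ the number of stable tokens is preserved by $\trsf$, so no stable token --- in particular neither member of a $\bot$-pair --- is lost; for $d$ and $y$ the single stable token removed sits on the principal door of an exponential box, not on a $\botlk$-conclusion; and for $u_i$ you give the explicit argument.)

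Your ``delicate point'' about the box-stack re-indexing is handled correctly: the two members of a $\bot$-pair share the same box stack $t$, and every transformation in Fig.~\ref{fig:trsfOthers} acts on the box-stack component as a function of $t$ alone, so the pair is sent to a pair with a common new stack and remains a $\bot$-pair of $R_i$.
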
   
      
  %

    \newcommand{\stables}[1]{S(#1)}
  It is also important to understand the action of $\trsf$ on the number of stable tokens.
   We observe that the number of tokens, and stable tokens in particular, in any state $\st$ which is reached in a run of $\M_R$
     is finite.   We denote by $\stables {\st}$  the number of stable tokens in $\st$. 
   %
   The following is immediate by analyzing  the definition of transformation,  and checking  which tokens are deleted.
   \begin{fact}[stable tokens]\label{stable num}
    For any $\trsf_{R_i}$,  $\stables {\st} \geq \stables {\trsf_{R_i}(\st)}$.
      Moreover, if the reduction $\rednet$ is $d,y$ or $u_i$,
      then we also have that $\stables{\st} > \stables{\trsf_{R_i}(\st)}$.
   \end{fact}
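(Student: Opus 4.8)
The plan is to prove the Fact by a case analysis on the net reduction rule $R \rednet R_i$, reading the action of $\trsf_{R\rednet R_i}$ off its definition (Fig.~\ref{fig:trsfBox} for the \bbox\ step, Fig.~\ref{fig:trsfOthers} for the remaining rules). Recall that a token is stable exactly when it sits on the conclusion of a $\botlk$ node or in a position whose formula stack is $\delta$ — intuitively, a marker on the principal door of a box, or a marker below a $\botlk$. Since $\trsf$ is the identity on every edge outside the contracted redex and on every box stack whose head does not refer to the contracted box, the stable tokens living outside the redex are carried over bijectively; the statement therefore reduces to tallying, rule by rule, the stable tokens on the redex and its border before and after the transformation.

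For the inequality $\stables{\st}\ge\stables{\trsf_{R_i}(\st)}$, the points to verify — all directly readable off the figures — are that $\trsf$ never \emph{creates} a stable token and never \emph{splits} a position. For the first: the only rewritings $\trsf$ performs on a stack are popping an exponential signature off a box stack and projecting along a $\otimes$ or a $\parr$, and neither of these can turn a non-$\delta$ formula stack into $\delta$ nor place a token onto a $\botlk$-conclusion, so the direction of every surviving position is unchanged. For the second: even in the $y$-step, where the content of the contracted $Y$-box is released once and copied once more into the fresh $Y$-box, $\trsf$ sends each position inside the old box to exactly one of these two occurrences (and similarly, in a box-duplication step, each position goes to the copy selected by its box stack). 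Hence the count of stable tokens can only stay equal or decrease, and it decreases precisely when some stable position of the redex falls outside the domain of $\trsf$, i.e.\ is deleted by its partiality.

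For the strict inequality in the cases $d$, $y$, $u_i$, I would isolate the box $B$ contracted by the step (the closed $!$-box for $d$, the $Y$-box for $y$, the \bbox\ for $u_i$): in $R_i$ the principal door of $B$ has no counterpart, so any stable token carried on a premise of that door is necessarily deleted by $\trsf$. What remains is to see that such a stable token is indeed present in the states $\st$ on which the Fact is invoked. For $u_i$ this is built into the definition of $[\trsf_{R_i}]$, which requires a token on the conclusion $e_i$ of the $i$-side $\botlk$, a stable token — exactly the marker that recorded the choice. For $d$ and $y$ the same holds along the canonical runs fixed in Sec.~\ref{sect:construction} (the regime in which the measure argument is used): opening an exponential box always proceeds by first bringing the dereliction token to the principal door, where it becomes stable, so by the time the step is tracked the state already carries that marker. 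Combining this deleted marker with the $\ge$ estimate of the previous paragraph yields $\stables{\st}>\stables{\trsf_{R_i}(\st)}$.

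The main obstacle I expect is this last step: unlike the $\ge$ part, the strict drop is not purely local to the figures but rests on an invariant about which \PSIAM\ states occur — namely that, whenever a $d$/$y$/$u_i$-redex is being tracked, a stable marker already sits on the door of the box about to be contracted. Making precise over which set of states the Fact is meant to range, and double-checking that $\trsf$ on the $y$-step genuinely does not duplicate a stable token, are the two verifications that need care; the rest is a routine traversal of Figs~\ref{fig:trsfBox} and~\ref{fig:trsfOthers}.
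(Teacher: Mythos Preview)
Your approach --- case analysis on the reduction rule, reading off which positions $\trsf$ deletes --- is exactly what the paper has in mind: its entire proof is the one-line remark ``immediate by analyzing the definition of transformation, and checking which tokens are deleted.'' So on the level of method there is nothing to compare; you have simply unpacked the paper's hint.

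Your final paragraph puts its finger on a real imprecision. The strict inequality for $d$ and $y$ does \emph{not} hold for every $\st\in[\trsf_{R_i}]$: for those two rules $[\trsf_{R_i}]=\stsI{R}$, and on the initial state (no tokens at all) one gets $0\not>0$. What rescues the argument is not the canonical runs of Sec.~\ref{sect:construction} --- those are only set up for the \textit{link}/\textit{update}/\textit{test} reductions, so your appeal to them for $d,y$ is misplaced --- but the way the Fact is actually consumed in Prop.~\ref{mutual}: there it is applied only to the \emph{terminal} states in $\supp{\tau}$. In any terminal state of $\M_\R$ with a surface $d$- or $y$-redex, the surface $\dlk$ node has necessarily emitted its token (rule~(iv) fires unconditionally at depth~$0$) and that token has already stabilised on the principal door of the box; this is the stable marker that $\trsf$ deletes. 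So the strict drop is an invariant of terminal reachable states, not of $[\trsf_{R_i}]$ in general. If you rewrite your last step to invoke terminality rather than Sec.~\ref{sect:construction}, the argument goes through and is in fact more careful than the paper's one-liner.
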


\subsection{Invariance}\label{sec:invariance}
We prove the following result, from which invariance  (Th.~\ref{invariance}) follows.

\begin{proposition}[Main Property] \label{trsf_main} 
  Assume $\R \rednet \sum_i {p_i}\cdot \{\R_i\}$.
  $\stII_{\R}$ $q$-terminates  
  if and only if $\stII_{\R_i}$ $q_i$-terminates and $\sum (q_i \cdot p_i) = q$.
\end{proposition}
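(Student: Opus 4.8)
The plan is to exploit the diamond property of the \PSIAM\ to replace an arbitrary run of $\machine{\R}$ by the canonical run of Sec.~\ref{sect:construction}, and then to transport termination between $\machine{\R}$ and the $\machine{\R_i}$'s along the transformation map $\trsf$. By Lem.~\ref{lem:msiamDiamond} and Prop.~\ref{prop:machine_conf} (confluence, uniformity, uniqueness of normal forms) the value $q$ for which $\stII_{\R}$ is $q$-terminating, and likewise for the $\machine{\R_i}$'s, does not depend on the chosen reduction order, so we are free to start with the run prescribed in Sec.~\ref{sect:construction}. I would distinguish the memory-involving reductions ($\rednet_{\mathrm{link}}$, $\rednet_{\mathrm{update}}$, $\rednet_{\mathrm{test}}$) from the remaining, deterministic net steps. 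In the first case, Lem.~\ref{lem:main} gives a run $\stII_{\R}\redd^{m}\mu$ (with $m$ a small constant depending on the rule) such that $\mu=\sum_i p_i\cdot\{\stUU_i\}$, $\stUU_i\in[\trsf_{R_i}]$ and $\trsf_{R_i}(\stUU_i)=\stII_{\R_i}$; moreover, by the conflict analysis preceding Lem.~\ref{lem:conflict}, the $\stUU_i$ are pairwise in conflict. In the deterministic case $i$ ranges over a single value with $p_0=1$, $[\trsf_{R_0}]=\stsI{R}$, and $\trsf_{R_0}(\stII_{\R})=\stII_{\R_0}$ by Lem.~\ref{trsf_term}(1), so one may take $\stUU_0=\stII_{\R}$ directly.

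The crux is the following simulation statement: \emph{for every reachable state $\st\in[\trsf_{R_i}]$ and every $p$, the state $\st$ is $p$-terminating in $\machine{R}$ if and only if $\trsf_{R_i}(\st)$ is $p$-terminating in $\machine{R_i}$}. I would prove it by showing that $\trsf_{R_i}$ is a weak simulation in both directions. Forward: if $\st\red\nu$ in $\machine{R}$ then $\supp{\nu}\subseteq[\trsf_{R_i}]$ (the Fact stated just after the definition of $\trsf$) and either $\trsf_{R_i}(\st)\red\trsf_{R_i}(\nu)$ or $\trsf_{R_i}(\st)=\trsf_{R_i}(\nu)$, the latter occurring only for steps that re-perform work already carried out by firing the redex in $R\rednet R_i$; by Fact~\ref{stable num} such steps strictly decrease $\stables{\st}-\stables{\trsf_{R_i}(\st)}$, hence occur only finitely often along any run. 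Backward: every transition of $\trsf_{R_i}(\st)$ in $\machine{R_i}$ is matched by at least one transition of $\st$ in $\machine{R}$. Terminal states are reflected by Lem.~\ref{trsf_term}(2), and Lem.~\ref{trsf_term}(3) ensures matching distributions have equal degree of termination. To absorb the mismatch in the number of steps I would argue through the reachability relations $\reach$ and $\oreach$, using Lem.~\ref{lem:equiv} to pass between $\reach$ and $\redd$; conflict is preserved by $\trsf$ (Lem.~\ref{lem:trsf}), so the summation of conflicting branches (Lem.~\ref{lem:conflict}) is transported as well, and uniformity of both machines (Prop.~\ref{prop:machine_conf}, via Th.~\ref{th:proba_term}) lets me move freely between weak and strong $p$-normalization on each side.

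Granted the simulation statement, I would conclude as follows. For the canonical run, Lem.~\ref{lem:conflict} shows that every reduction from $\mu=\sum_i p_i\{\stUU_i\}$ has the shape $\mu\redd^{k}\sum_i p_i\cdot\rho_i$ with $\stUU_i\redd^{k}\rho_i$, whence $\NF{\sum_i p_i\rho_i}=\sum_i p_i\,\NF{\rho_i}$. Since $\stII_{\R}\redd^{m}\mu$ with $m$ fixed, $\stII_{\R}$ is $q$-terminating precisely when the $\stUU_i$ are $q_i$-terminating for some $q_i$ with $\sum_i p_i q_i=q$: for $\Leftarrow$, take a common bound $k=\max_i k_i$ and use monotonicity of $\NF{\cdot}$ along $\redd$; for $\Rightarrow$, read off each $q_i$ as the degree of termination attained by $\stUU_i$ at the uniform stage, uniformity guaranteeing it is attained. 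Applying the simulation statement to each $\stUU_i$ and $\trsf_{R_i}(\stUU_i)=\stII_{\R_i}$ replaces ``$\stUU_i$ is $q_i$-terminating'' by ``$\stII_{\R_i}$ is $q_i$-terminating'', which is exactly the claim; the deterministic case is the instance $i\in\{0\}$, $p_0=1$, $\stUU_0=\stII_{\R}$. Th.~\ref{invariance} then follows by taking suprema over runs, i.e.\ passing from $q$-termination to $\Downarrow_q$ via uniformity (Prop.~\ref{prop:equiv}).

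The main obstacle is the simulation statement, and within it the forward direction, where $\trsf_{R_i}$ collapses one or more \PSIAM\ transitions of $\machine{R}$ into possibly no transition of $\machine{R_i}$. One must show these collapses cannot accumulate---which is exactly the role of the strict decrease of $\stables{\st}-\stables{\trsf_{R_i}(\st)}$ in Fact~\ref{stable num}---and that they alter neither the set of terminal states eventually reached nor the probability with which each is reached. The case analysis is delicate because of the multi-token conditions near the fired redex: for $\rednet_{\mathrm{test}}$ one checks that the memory branching commutes with a token reaching the \bbox\ and choosing a side (where conflict-freeness, Lem.~\ref{lem:trsf} and Lem.~\ref{lem:conflict}, and ultimately commutativity of the memory operations, are used), while for the exponential and fixpoint steps one has to track how $\trsf$ acts on box stacks and on the stable markers sitting on box doors.
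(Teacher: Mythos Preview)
Your overall architecture is exactly the paper's: start the machine with the canonical run of Sec.~\ref{sect:construction}, reach $\mu=\sum_i p_i\{\stUU_i\}$ with $\trsf_{R_i}(\stUU_i)=\stII_{\R_i}$ (Lem.~\ref{lem:main}), and then transport $p$-termination across $\trsf_{R_i}$ in both directions. The paper organizes the transport asymmetrically---Lem.~\ref{trsf_reach} for the forward direction (via $\reach$) and the explicit construction of Lem.~\ref{aux_lemma} for the backward one---whereas you package both as a weak bisimulation; this is a presentational difference, not a mathematical one.

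There is, however, a genuine gap in your handling of collapsing transitions. You claim that a collapsing step $\st\red\{\st'\}$ with $\trsf_{R_i}(\st)=\trsf_{R_i}(\st')$ \emph{strictly decreases} $\stables{\st}-\stables{\trsf_{R_i}(\st)}$, invoking Fact~\ref{stable num}. This is not what Fact~\ref{stable num} says, and the claim is false: stable tokens are never removed by a \PSIAM\ transition, so $\stables{\st'}\geq\stables{\st}$, while $\stables{\trsf_{R_i}(\st')}=\stables{\trsf_{R_i}(\st)}$; the difference is \emph{non-decreasing}. Fact~\ref{stable num} compares a state with its image under $\trsf$ for a fixed state, and the strict inequality there concerns the net reductions $d,y,u_i$, not machine transitions. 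The paper bounds the number of consecutive collapsing transitions by an entirely different argument: the redex involves only finitely many positions (finitely many edges, finite type, surface reduction), and token transitions are bideterministic, so a token cannot revisit a redex position with the same stacks without contradicting the fact that it originated in $\START$. You should replace your decreasing-measure argument by this one.

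A second, smaller point: your backward clause ``every transition of $\trsf_{R_i}(\st)$ is matched by at least one transition of $\st$'' hides real work. A token that is ready to move in $\machine{R_i}$ may, in its preimage in $\st$, still sit inside the redex and need several (collapsing) steps before the corresponding non-collapsing transition becomes available. This is precisely why the paper does not lift step-by-step but instead proves Lem.~\ref{aux_lemma}: given $n$, it builds $\mu$ with $\st\reach\mu$ and $\trsf_{R_i}(\st)\redd^n\trsf_{R_i}(\mu)$, absorbing the collapsing steps into $\reach$ and then choosing $\mu$ so that $\NF{\mu}=\NF{\trsf_{R_i}(\mu)}$. Your plan to ``argue through $\reach$ and $\oreach$, using Lem.~\ref{lem:equiv}'' is on the right track, but you should make the lifting construction explicit along the lines of Lem.~\ref{aux_lemma} rather than appeal to a one-step backward simulation that does not literally hold.
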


Let us first sketch the ingredients of the proof.  We need to work
  our way ``back and forth'' via Lemmas \ref{trsf_reach} and
  \ref{aux_lemma}, because of the following facts.
  \begin{varitemize}
  \item  
    Unfortunately, for $\stII_{\R} \redd^* \mu$  it is not true
    that $\trsf_{R_i} (\stII_{\R}) \redd^* \trsf_{R_i}(\mu)$.  However we
    have that if $\stII_{\R} \reach \mu$ in $\M_R$, then $\trsf_{R_i}
    (\stII_{\R}) \reach \trsf_{R_i}(\mu)$ (under natural conditions).  This is made precise by Lemma~\ref{trsf_reach}.
  \item  
    On the other side, the strength of the relation $\redd$ is that if
    $\stII_{\R}\redd^n \mu$, then for any sequence of the same length
    $\stII_{\R}\redd^n \rho$, we have that $\term{\rho}=\term{\mu}$.  This
    is not the case for the relation $\reach$ which is \emph{not
      informative}. The (slightly
    complex) construction which is given by Lemma \ref{aux_lemma} allows us to exploit the power of $\redd$.
          
  \end{varitemize}

\vskip 4pt
 
 We have everything in place to study the action of $\trsf$ on a run of the machine.    
 What is the action of $\trsf$ on a transition? By checking the definition in Fig.~\ref{fig:trsfOthers}
we observe that it may be the case that  $\st \redsiam \{\sttwo\} $ and  $  \trsf_{R\rednet R_i}(\sttwo)= \trsf_{R\rednet R_i}(\st)$.
We say  that such  a transition   \textit{collapses}
 for $\trsf_{R\rednet R_i}$. 
We observe some properties:

\begin{lemma}
  From a state of $\M_R$,
  we have at most a finite number of collapsing transitions.
\end{lemma}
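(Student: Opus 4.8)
The plan is to prove that there is no infinite sequence $\st_0 \redsiam \st_1 \redsiam \st_2 \redsiam \cdots$ in which every transition collapses for the fixed $\trsf_{R\rednet R_i}$; since the transition relation of $\M_R$ is finitely branching, this well-foundedness is equivalent to the stated finiteness. I would get it by exhibiting a measure on the states of $\M_R$, valued in a well-ordered set, that strictly decreases along every collapsing transition.

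The first step is to read off from Fig.~\ref{fig:trsfOthers} and Fig.~\ref{fig:trsfBox} exactly which machine transitions can collapse. In every case one finds that a collapsing transition only rearranges tokens lying in the finite sub-net \emph{affected} by the net step $R\rednet R_i$ — the cut together with the principal connective/box nodes of the redex and, in the box cases ($d$, $y$, $u_i$), the spliced-in content — because outside this region $\trsf_{R_i}$ is a relabelling that is injective on positions and preserves the direction of a token (in particular it sends stable tokens to stable tokens and non-stable tokens to non-stable tokens). Two consequences are recorded here: (a) a collapsing transition never turns a non-stable token into a stable one, so — using the already-noted fact that stable positions are never deleted or modified by a transition, together with Fact~\ref{stable num} — the number $\stables{\st}$ of stable tokens and the sets $\id(S)$ of box-copy markers stay constant along any run of collapsing transitions; (b) inside the affected region a token cannot cycle: following the rules of Fig.~\ref{fig:trRules}/\ref{fig:trRules bot} it makes monotone progress toward either an auxiliary door or a conclusion of the region (where it leaves it) or toward a principal door (where it becomes stable), essentially by acyclicity of switching paths in a correct net, exactly as in the termination arguments of \cite{lics2015}.

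With this in hand I would take as measure of $\st$ the lexicographic pair $(c(\st), d(\st)) \in \NN\times\NN$, where $c(\st)$ is the number of token-creation steps (rules (iii) and (iv)) still applicable inside the affected region given the — by (a), fixed along collapsing runs — stable-token structure, and $d(\st)$ is the sum over all tokens currently in the region of the number of steps that token will make before it leaves the region or becomes stable; both are finite by (b) and by finiteness of the number of tokens in a reachable state. A collapsing transition is then either a creation step, which consumes one admissible position and so strictly decreases $c$, or a deterministic move of a token strictly inside the region, which leaves $c$ unchanged and strictly decreases $d$. Since the lexicographic order on $\NN\times\NN$ is well-founded, no infinite sequence of collapsing transitions exists.

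The main obstacle is the case where $R\rednet R_i$ is a box reduction — an exponential/$Y$-box opening or a $\bot$-box step $u_i$ — for which the $\trsf$ of Fig.~\ref{fig:trsfBox} is genuinely non-trivial. Here one must check that a collapsing transition really is confined to the box content being merged in, rather than secretly acting on another copy of a box (this is where Fact~\ref{stable num}, Lemma~\ref{lem:trsf} and the conflict/heredity properties of reachable states are used), and one must verify the no-new-stable-token claim (a) in the vicinity of the reduced box, where the door structure disappears; checking acyclicity of the token flow inside the opened content again reduces to correctness of the underlying net fragment.
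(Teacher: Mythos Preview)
Your approach differs substantially from the paper's. The paper gives a short pigeonhole-plus-reversibility argument: since the redex lies at the surface and formulas are finite, there are only finitely many positions $(e,s,t)$ with $e$ an edge of the redex that can be involved in a collapsing step; an infinite sequence of collapsing transitions would therefore force some token to occupy the same such position twice with identical stacks. But the per-token dynamics of the \SIAM{} is \emph{bideterministic} (each token has a unique predecessor position), so from a repeated position one could backtrack along that token's history forever --- contradicting that every token originates in $\START_R$, behind which no further backtracking is possible. No measure is constructed, the multi-token constraints (i)--(iv) play no role, and there is no case analysis on the shape of the redex.

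Your well-founded-measure route is plausible and may go through, but it is heavier and has two soft spots as written. First, $d(\st)$ --- ``the number of steps that token will make before it leaves the region or becomes stable'' --- is not obviously a function of $\st$: whether a token moves at a given step depends on the other tokens through the synchronisation constraints, so its future trajectory length depends on scheduling. A safer choice is the length of the token's maximal path through the region computed from the single-token rules alone, which your acyclicity claim (b) would bound. Second, claim (a), that no collapsing transition creates a stable token, is precisely the delicate point in the box cases $d$, $y$, $u_i$: you must check that the stabilising step at the principal door of the box being opened is not itself collapsed by $\trsf_{R_i}$, otherwise $\id(S)$ grows along the run and $c(\st)$ is no longer monotone. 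The paper's argument sidesteps both issues by exploiting bideterminism, an invariant of the machine that you do not use; that is the main conceptual shortcut you are missing.
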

\begin{proof}
  Since the reduction is surface, and since the type of any edge is finite,
  the set $\{(e,s,t) \,|\, e$ is an edge of the redex, $(e,s,t)$ is involved in a collapsing transition$\}$ is at most finite.
  Suppose there are infinitely many collapsing transitions from a state.
  Then there exist two or more tokens
  which have the same stacks involved in the sequence of transitions.
  They must have the same origin, and hence by injectivity
  they are in fact the ``same'' token visiting the redex twice or more.
  Therefore, by ``backtracking'' the transitions on that token,
  it again comes to the same edge in the redex with the same stack,
  hence we can go back infinitely many times.
  However this cannot happen in our \PSIAM{} machine,
  since any token starts its journey from a position in $\START$
  from which it cannot go back anymore, and transitions are bideterministic on each token.
\end{proof}

\begin{fact}
  Given a transition $\st \redsiam \mu $, if  $\st\in [\trsf_{R_i}]$, then either the transition collapses, or  $\trsf_{R_i} (\st) \redsiam  \trsf_{R_i}(\mu)$ is a transition of $\M_{R_i}$.
\end{fact}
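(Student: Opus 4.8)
The plan is to prove this as a \emph{local simulation} property by inspecting how the transformation $\trsf_{R_i} := \trsf_{R\rednet R_i}$ interacts with a single \SIAM\ transition. First I would record the easy structural facts: $\trsf_{R_i}$ acts only on positions, so on a \PSIAM\ state it leaves the index map and the memory untouched and only rewrites the underlying \SIAM\ state; and, by the Fact stated just before Lemma~\ref{lem:main}, $[\trsf_{R_i}]$ is closed under transitions, so if $\st\in[\trsf_{R_i}]$ then every element of $\supp{\mu}$ again lies in $[\trsf_{R_i}]$ and $\trsf_{R_i}(\mu)$ is well defined. Since the \PSIAM\ rules \textit{Update} and \textit{Test} affect the memory only through the token $\pp$ in the redex and its address, and $\trsf_{R_i}$ moves neither that token's origin nor its address, it then suffices to argue at the level of the underlying \SIAM\ transition $T\redsiam U$ (or $T\redsiam\{U_0,U_1\}$): I want to show that either $\trsf_{R_i}(U)=\trsf_{R_i}(T)$, i.e.\ the transition \emph{collapses}, or the same \SIAM\ rule is enabled from $\trsf_{R_i}(T)$ and produces precisely $\trsf_{R_i}(U)$ (resp.\ $\trsf_{R_i}(U_0)$, $\trsf_{R_i}(U_1)$).

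The core of the argument is a case analysis along three axes: the net reduction rule $R\rednet R_i$ of Fig.~\ref{probaRed} (which fixes $\trsf_{R_i}$ through Figs.~\ref{fig:trsfBox} and~\ref{fig:trsfOthers}), the \SIAM\ rule being applied (Figs.~\ref{fig:trRules bot} and~\ref{fig:trRules}), and whether the positions in which $T$ and $\supp{\mu}$ differ --- together with those inspected by the multi-token conditions of rules (i)--(iv) --- meet the net redex. If none of them does, then $\trsf_{R_i}$ is the identity on all of them, and I would check that the same \SIAM\ rule is re-enabled from $\trsf_{R_i}(T)$: this uses that $\trsf_{R_i}$ restricts to a bijection on the stable tokens sitting on each principal door not involved in the reduction, hence preserves the ``box has been opened'' precondition of rules (iii)--(iv) and the ``the chosen copy is present'' precondition of the choice rule (ii), and does not change whether every premise of a sync node is filled, as needed by rule (i); the resulting state is then exactly $\trsf_{R_i}(U)$ (and for a non-deterministic \textit{Test} away from the redex, Lemma~\ref{lem:trsf} keeps the two branches in conflict, hence distinct, so the rule genuinely fires in $\M_{R_i}$). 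If instead some involved position meets the net redex, I would read the answer off the definition of $\trsf_{R_i}$ on the redex: in the generic case the token is travelling along edges or across nodes that $R_i$ has fused or removed (the cut/axiom pair killed by an $m$- or $e$-step; the box border and the kept $\botlk$ node removed by a $u_i$-step; and so on), so $\trsf_{R_i}$ identifies source and target and the transition collapses; otherwise the token crosses out of the redex into a part of $R$ that survives in $R_i$, and the local picture of $\trsf_{R_i}$ exhibits $\trsf_{R_i}(T)$ and $\trsf_{R_i}(U)$ as exactly one \SIAM\ step apart.

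Two delicate points I would flag explicitly. In the $u_i$-case, membership of $T$ in $[\trsf_{R_i}]$ (the \textit{Test} clause of the definition of $[\trsf_{R_i}]$) forces a stable token on the conclusion $e_i$ of the surviving $\botlk$ node; by the choice condition (ii) this means the relevant box-copy is already committed to side $i$, so $T$ has no token inside the discarded content and $\trsf_{R_i}$ is genuinely total on $T$. The only redex-area transitions from such a $T$ are then box-crossings of a token on the $\bot$-box's conclusions into $S_i$, and these collapse because in $R_i$ those conclusions already \emph{are} the conclusions of $S_i$; in particular the \textit{Test} rule is never re-fired at the net redex from a state of $[\trsf_{R_i}]$, which is consistent with collapsing transitions being, by definition, deterministic. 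Secondly, whenever a token becomes stable on the principal door of the box opened by a $d$-, $y$-, or $u_i$-step, that stable token has no $\trsf_{R_i}$-image and is deleted by $\trsf_{R_i}$, so the transition collapses; this is exactly the strict decrease of the stable-token count recorded in Fact~\ref{stable num}.

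The main obstacle I anticipate is combinatorial rather than conceptual: one has to go through every pair (net redex rule, \SIAM\ rule) and, for the redex-touching sub-cases, check token by token against Figs.~\ref{fig:trsfBox} and~\ref{fig:trsfOthers} that each move is either collapsed or mirrored by a single step. The part that genuinely needs care --- and which I would isolate as a preliminary lemma if the bookkeeping gets heavy --- is verifying that $\trsf_{R_i}$ transports the multi-token conditions correctly, i.e.\ that copy by copy it preserves the stable-token configuration on every principal door untouched by the reduction and therefore the ``opened box'' and ``chosen side'' predicates; once that is in place, the remaining inspection is finite and routine.
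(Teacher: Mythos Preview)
Your proposal is correct and matches the paper's approach: the paper states this as a \emph{Fact} without proof, relying on direct inspection of the transformation maps in Figs.~\ref{fig:trsfBox} and~\ref{fig:trsfOthers}, and your plan is precisely that inspection spelled out in full, including the delicate handling of the multi-token conditions and the $[\trsf_{R_i}]$ hypothesis in the $u_i$-case. There is nothing to add; your write-up is more detailed than what the paper provides.
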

  
\begin{lemma}\label{trsf_reach}
  If  $\st \in [\trsf_{R_i}]$ and $\st \reach \mu$ (in  $\M_R$),
  then  $\trsf_{R_i} (\st) \reach \trsf_{R_i} (\mu)$ holds.
\end{lemma}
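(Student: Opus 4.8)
The plan is to prove Lemma~\ref{trsf_reach} by induction on the derivation of $\st \reach \mu$, using the three inference rules that define $\reach$ (Sec.~\ref{sect:distrpars}). Before starting I would record one preliminary closure fact: if $\st\in[\trsf_{R_i}]$ and $\st \reach \nu$, then $\supp{\nu}\subseteq[\trsf_{R_i}]$. This follows by a routine induction on the $\reach$-derivation from the already-established Fact that $[\trsf_{R_i}]$ is closed under a single transition ($\st\in[\trsf_{R_i}]$, $\st\red\mu$, $\sttwo\in\supp{\mu}$ imply $\sttwo\in[\trsf_{R_i}]$). It guarantees that $\trsf_{R_i}$ is defined on every state the argument touches, so the distributions $\trsf_{R_i}(\nu)$ below all make sense.

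For the base cases: if the last rule used is $\st\reach\{\st^1\}$, then the same rule gives $\trsf_{R_i}(\st)\reach\{\trsf_{R_i}(\st)^1\}=\trsf_{R_i}(\{\st^1\})$. If the last rule is $\st\red\mu$ hence $\st\reach\mu$, I invoke the Fact that a transition from a state of $[\trsf_{R_i}]$ either collapses — in which case $\mu=\{\sttwo^1\}$ with $\trsf_{R_i}(\sttwo)=\trsf_{R_i}(\st)$, so $\trsf_{R_i}(\st)\reach\{\trsf_{R_i}(\st)^1\}=\trsf_{R_i}(\mu)$ — or maps to a genuine transition $\trsf_{R_i}(\st)\red\trsf_{R_i}(\mu)$ of $\M_{R_i}$, which yields $\trsf_{R_i}(\st)\reach\trsf_{R_i}(\mu)$ directly.

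The inductive case is the composition rule: from $\st \reach \mu+\{b^p\}$, $b\reach\rho$, and $b\notin\supp{\mu}$ we conclude $\st\reach\mu+p\cdot\rho$. By the closure fact, $b$ and every state of $\supp{\mu}$ lie in $[\trsf_{R_i}]$, so the induction hypothesis applies to both premises and gives $\trsf_{R_i}(\st)\reach \trsf_{R_i}(\mu)+\{\trsf_{R_i}(b)^p\}$ and $\trsf_{R_i}(b)\reach\trsf_{R_i}(\rho)$. To close with the composition rule in $\M_{R_i}$ I must check the side condition $\trsf_{R_i}(b)\notin\supp{\trsf_{R_i}(\mu)}$. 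I would argue that the states in $\supp{\mu}\cup\{b\}$ are pairwise in conflict: they are all reachable from the single state $\st$, a single transition is either deterministic or produces exactly two states in conflict, and conflict is hereditary along transitions (see the properties of reachable states, and that $\future{b}$ is closed under $\reach$, which is immediate from the definitions); hence distinct states reachable from $\st$ are in conflict. Since $\st\mathbin{\smile}\st'$ with $\st,\st'\in[\trsf_{R_i}]$ implies $\trsf_{R_i}(\st)\mathbin{\smile}\trsf_{R_i}(\st')$ (Lem.~\ref{lem:trsf}), and states in conflict are disjoint hence distinct, $\trsf_{R_i}$ is injective on $\supp{\mu}\cup\{b\}$; in particular $\trsf_{R_i}(b)$ differs from each $\trsf_{R_i}(s)$, $s\in\supp{\mu}$, which is the side condition. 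The composition rule for $\reach$ then yields $\trsf_{R_i}(\st)\reach\trsf_{R_i}(\mu)+p\cdot\trsf_{R_i}(\rho)=\trsf_{R_i}(\mu+p\cdot\rho)$, as required.

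The main obstacle is exactly this last point: preservation of the freshness side condition $b\notin\supp{\mu}$ under $\trsf_{R_i}$. The transformation is only ``locally injective'', so one has to exploit the geometry of the machine — that every distribution a run can span has pairwise-conflicting support — together with Lem.~\ref{lem:trsf}. Everything else (the two base cases and the bookkeeping of distributions through $\trsf_{R_i}$) is straightforward once the closure fact is in place.
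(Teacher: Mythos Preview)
Your proof is correct and faithfully follows the three defining rules of $\reach$ from Sec.~\ref{sect:distrpars}. The paper takes a different decomposition: instead of the composition rule (with its side condition $b\notin\supp{\mu}$), it inducts on an equivalent ``head-step'' presentation of $\reach$, namely the derived rule
\[
\infer{\st\reach\textstyle\sum_{\sttwo} p_{\sttwo}\cdot\mu_{\sttwo}}{\st\red\textstyle\sum_{\sttwo} p_{\sttwo}\cdot\{\sttwo\} & \{\sttwo\reach\mu_{\sttwo}\}_{\sttwo}}
\]
together with the reflexivity axiom. The collapsing/non-collapsing case split is then performed once, on the premise $\st\red\sum p_{\sttwo}\{\sttwo\}$, and since this presentation carries no freshness side condition, Lem.~\ref{lem:trsf} and the whole conflict/injectivity argument are never needed.

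What each approach buys: the paper's route is shorter, but it silently trades the stated inductive definition of $\reach$ for an alternative one whose equivalence is not spelled out. Your route sticks to the actual definition and makes explicit the one genuinely nontrivial point, namely that $\trsf_{R_i}$ is injective on the support of any reached distribution; this is precisely what Lem.~\ref{lem:trsf} is for. One small wording issue: the sentence ``distinct states reachable from $\st$ are in conflict'' is too strong as stated (a state and its deterministic successor are both reachable yet not in conflict); the invariant your induction actually needs, and which your argument does establish, is that distinct states \emph{in the support of a single reached distribution} are pairwise in conflict.
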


 \begin{proof} 
  We transform a derivation $\Pi$ of  $\st \reach \mu$ in $\M_R$ into a derivation of $\trsf_{R_i} (\st) \reach \trsf_{R_i}(\mu)$ in $\M_{R_i}$, by induction on the structure of the derivation.

\newcommand{\UU}{\sttwo}

 \begin{itemize}
 \item Case 
 $
 \infer
 {\st {\reach}\{\st\}}
 { }
 \quad \text{becomes} \quad 
 \infer 
 {\trsf_{R_i} \st {\reach}\{\trsf_{R_i}(\st)\}}
 { }
 $
 
 \item Case 
 $
 \infer[]
 {\st{\reach}\sum p_{\UU}\cdot\mu_{\UU}}
 {\st{\redsiam}\sum p_{\UU}\cdot\UU & \{\infer{\sttwo{\reach}\mu_{\UU}}{\dots}\}}
 $\!\!

\noindent
 We examine the left premise, checking if it collapses:
 
 \begin{itemize}
 \item If it does not  collapse, $\trsf_{R_i} (\st){\redsiam}\sum p_{\UU}\cdot\trsf_{R_i} (\sttwo)$ is a transition of $\M_{R'}$ and we have:
 $$
 \infer
 {\trsf_{R_i}(\st){\reach}\sum p_{\UU} \cdot \trsf_{R_i} (\mu_{\UU})}
 {\begin{array}{c}\trsf_{R_i} (\st){\redsiam}\sum p_{\UU}\cdot \trsf_{R_i} (\sttwo)\\ \{   {\trsf_{R_i} (\sttwo) \reach \trsf_{R_i}(\mu_{\UU}) }  \} \text{ by I.H.}\end{array}}
 $$
 \item If it  collapses, we have   $\st \redsiam \{\sttwo\}$, we also have  $\trsf_{R_i} \st = \trsf_{R_i}(\sttwo)$,and
 the derivation $\Pi$ is  of the form: 
 $$
 \infer
 {\st{\reach}\mu}
 {\st{\red}\{\sttwo\} & \infer{ \sttwo{\reach}  \mu}
 {... }}
 $$
 
 By induction, $ \trsf_{R_i} \sttwo  \reach  \trsf_{R_i}(\mu)$, and
 therefore we conclude  $ \trsf_{R_i} \st  \reach  \trsf_{R_i}(\mu)$.
 \end{itemize}
 \end{itemize}
 
 %
 %
 %
 %
 %
 %
 %
 %
 %
  \end{proof}


 
   

 Lemma  \ref{trsf_reach}, the construction shown in Appendix~\ref{sect:construction}, and Lemma~\ref{trsf_term},
 allow us to transfer termination from $\stII_{\R}$ to  $\stII_{\R_i}$,
 and to prove one direction of Prop.~\ref{trsf_main}.
The other direction is more delicate. 
 
 Assume  that $\stII_{\R_i}$ $q_i$-terminates; this implies that for a certain $n$, whenever $\stII_{\R_i} \redd^n \sigma$ then 
 $\NF{\sigma}\geq q$. The  following Lemma builds 
  such a sequence in a way that  $\sigma=\trsf_{R_i}(\mu) $, with  $\mu$ in $\M_\R$. This allows us to transfer the properties of termination of $\stII_{\R_i}$ back to $\stII_{\R}$, ultimately leading  to the other direction of 
  Prop.~\ref{trsf_main}.

\begin{lemma}\label{aux_lemma}

  Assume  $\st \in [\trsf_{R_i}]$ .
   For any $n$:
\begin{enumerate}
\item  there exists $\mu$ such that  $\st\reach \mu$  and $\trsf_{R_i} (\st) \redd^n \trsf_{R_i}(\mu) $;
\item we can choose $\mu$ such that $\NF{\mu} = \NF{\trsf_{R_i}(\mu)}$.
\end{enumerate}
\end{lemma}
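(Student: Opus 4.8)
The plan is to prove both items at once by induction on $n$, building the distribution $\mu$ in $\M_R$ and the derivation $\trsf_{R_i}(\st)\redd^n\trsf_{R_i}(\mu)$ in lockstep. The reason Lemma~\ref{trsf_reach} does not already settle this is that it transfers \emph{reachability} but not the \emph{number of $\redd$-steps}: a single transition of $\M_R$ may \emph{collapse} under $\trsf_{R_i}$, so running $\M_R$ need not make progress in $\M_{R_i}$. The device I would use to repair this is a notion of \emph{collapse-closed} state — one from which no outgoing transition collapses for $\trsf_{R_i}$ — together with the observation that, since collapsing transitions are deterministic and (by the preceding lemma) do not form an infinite chain, from any $\st'\in[\trsf_{R_i}]$ one can fire collapsing transitions until none remains. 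I call the induced operation on distributions \emph{collapse-closure}; as collapsing transitions leave the $\trsf_{R_i}$-image unchanged and, by the closure of $[\trsf_{R_i}]$ under transitions, never leave $[\trsf_{R_i}]$, collapse-closing $\nu$ gives $\nu\reach\nu'$ with $\nu'$ collapse-closed and $\trsf_{R_i}(\nu')=\trsf_{R_i}(\nu)$.

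The key lemma I would establish first is a terminality equivalence for collapse-closed states: if $\st'$ is collapse-closed and $\trsf_{R_i}(\st')$ is a normal form of $\M_{R_i}$, then $\st'$ is a normal form of $\M_R$ — for otherwise $\st'$ has a transition, necessarily non-collapsing, which (by the transfer of transitions established earlier) projects to a genuine transition of $\M_{R_i}$ out of $\trsf_{R_i}(\st')$, a contradiction. Together with Lemma~\ref{trsf_term}(2) this yields, for collapse-closed $\st'$, that $\st'$ is terminal in $\M_R$ if and only if $\trsf_{R_i}(\st')$ is terminal in $\M_{R_i}$; this is precisely what makes part (2) work.

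Now the induction. For $n=0$, take $\mu$ to be a collapse-closure of $\{\st\}$, so $\st\reach\mu$ and $\trsf_{R_i}(\mu)=\{\trsf_{R_i}(\st)\}$, i.e. $\trsf_{R_i}(\st)\redd^0\trsf_{R_i}(\mu)$. For the step, apply the induction hypothesis to get a collapse-closed $\mu_n$ with $\st\reach\mu_n$ and $\trsf_{R_i}(\st)\redd^n\trsf_{R_i}(\mu_n)$. By the equivalence above, $\supp{\mu_n}$ splits into states terminal in both machines and states non-terminal in both; moreover, by Lemma~\ref{lem:trsf} together with the fact that distinct states in a reachable distribution are in conflict, $\trsf_{R_i}$ is injective on $\supp{\mu_n}$, so the states of $\mu_n$ correspond bijectively with those of $\trsf_{R_i}(\mu_n)$. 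Each non-terminal $\st'\in\supp{\mu_n}$, being collapse-closed, has a non-collapsing transition $\st'\redsiam\nu_{\st'}$ projecting to a genuine transition $\trsf_{R_i}(\st')\redsiam\trsf_{R_i}(\nu_{\st'})$; expanding every such $\st'$ to $\nu_{\st'}$ inside the $\reach$-derivation and collapse-closing the result gives $\mu_{n+1}$ with $\st\reach\mu_{n+1}$, $\mu_{n+1}$ collapse-closed, and $\trsf_{R_i}(\mu_{n+1})$ obtained from $\trsf_{R_i}(\mu_n)$ by reducing exactly its non-terminal elements, so $\trsf_{R_i}(\mu_n)\redd\trsf_{R_i}(\mu_{n+1})$ and hence $\trsf_{R_i}(\st)\redd^{n+1}\trsf_{R_i}(\mu_{n+1})$. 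This proves (1) with a collapse-closed $\mu$; for (2), the terminality equivalence gives that the $\M_{R_i}$-terminal part of $\trsf_{R_i}(\mu)$ is exactly the image of the $\M_R$-terminal part of $\mu$, whence $\NF{\mu}=\NF{\trsf_{R_i}(\mu)}$ — this is Lemma~\ref{trsf_term}(3) applied to $\term{\mu}$.

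The step I expect to cost the most care is the bookkeeping around collapse-closure: one must verify that it terminates (the no-infinite-chain property), that it genuinely commutes with $\trsf_{R_i}$, and — most importantly — that the whole ``expand, then collapse-close'' loop assembles into a single legitimate $\reach$-derivation from $\st$, each round being an application of the expansion rule for $\reach$. The conceptual source of this difficulty, and the reason the lemma is needed at all, is the mismatch between $\reach$ (step-agnostic but compatible with $\trsf_{R_i}$) and $\redd$ (step-exact but not transferred by $\trsf_{R_i}$); collapse-closure is the bridge between the two.
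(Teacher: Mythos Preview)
Your argument is correct and rests on the same ingredients as the paper's proof: collapsing transitions are deterministic and never chain indefinitely, non-collapsing transitions project to genuine transitions of $\M_{R_i}$, and terminality is preserved by $\trsf_{R_i}$ (Lemma~\ref{trsf_term}). The organizational difference is that you carry the \emph{collapse-closed} invariant through the induction and obtain parts (1) and (2) simultaneously, whereas the paper first proves (1) by a bare induction on $n$ (base case $n=1$, with the same three sub-cases your collapse-closure hides: $\st$ terminal, $\st$ has a non-collapsing step, or all steps from $\st$ collapse and one iterates), and only afterwards establishes (2) by post-processing the resulting $\mu$: it singles out those $\sttwo_k\in\supp{\mu}$ whose image is terminal and pushes each of them forward along collapsing steps to a terminal $\sttwo'_k$ with the same image --- exactly your collapse-closure, but applied selectively rather than uniformly. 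Your packaging is a bit cleaner and, notably, makes explicit a point the paper leaves implicit in its inductive step, namely that the states in $\supp{\mu}$ (and hence their $\trsf_{R_i}$-images, via Lemma~\ref{lem:trsf}) are pairwise in conflict, which is what licenses the use of Lemma~\ref{lem:conflict} to assemble the individual $\redd$-steps into a single $\redd$-step on the whole distribution.
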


\begin{proof} 
 
\begin{enumerate}
\item  We build $\mu$ and its derivation, by induction on $n$.
 \begin{enumerate}
 \item[$n=1$.]
 
  \begin{itemize}
 \item Assume $\st$ is terminal, then $\trsf_{R_i} (\st)$ is terminal, and $\trsf_{R_i}  (\st) \redd \trsf_{R_i}  (\st)$.
 
 \item Assume  there is $\mu$ s.t. $\st \redone \mu$ non-collapsing. We have  $\trsf_{R_i}  (\st) \redd \trsf_{R_i}  (\mu)$.
 
 \item Assume that all transitions from $\st$ are   collapsing. For such a reduction, we have that $\st \redone \st'$ and  $\trsf_{R_i}  (\st) = \trsf_{R_i}  (\st') $. It is immediate to check that from any $\st\in \stsI{R}$ there is at most a finite number of consecutive collapsing transitions. We repeat our reasoning on $\st'$ until we find $\sttwo$
 which is either terminal or has a non-collapsing transition $\sttwo \red \mu$. The former case is immediate, the latter gives $\sttwo \reach \mu$ and therefore $\st \reach \mu$ by transitivity, and $ \trsf_{R_i}  (\st)= \trsf_{R_i}  (\sttwo) \red \trsf_{R_i}  (\mu) $, hence $\trsf_{R_i} (\st) \redd \trsf_{R_i}(\mu) $.
 \end{itemize}
 
 \item[$n>1$.] Assume we have built a derivation of $\st \reach \rho$ with

\noindent
 $\trsf_{R_i}(\st) \redd^{n-1} \trsf_{R_i} (\rho)$.
   We have that $\trsf_{R_i}(\rho) =
   \sum \rho(\sttwo)\cdot \trsf_{R_i} (\sttwo)$. For each $\sttwo\in \supp \rho$, we apply the base step, and obtain a derivation of $\sttwo\reach \mu_{\sttwo}$ with   $\trsf_{R_i}(\sttwo) \redd \trsf_{R_i}(\mu_{\sttwo})$. Putting things together, $\st \reach \sum \rho(\sttwo)\cdot \mu_{\sttwo}$ and  $\trsf_{R_i}(\st) \redd^n  \sum \rho(\sttwo)\cdot \trsf_{R_i} (\mu_{\sttwo})$ by Lemma~\ref{lem:conflict}.
  
 \end{enumerate}
 
\item We now prove the second part of the claim. Let  $\st \reach \mu$ be the result obtained at the previous point.
  Let $\{\sttwo_k\}$ be the set of states in  $ \supp{\mu}$ such that $\trsf_{R_i} (\sttwo_k)$ is terminal. This induces a partition of $\mu$, namely $\mu= \rho + \sum c_k \cdot \{\sttwo_k\}$.
  It is immediate to check  that each $\sttwo_k\reach \{\sttwo'_k\}$
  with $\sttwo'_k$ terminal and $\trsf_{R_i}(\sttwo'_k) = \trsf_{R_i}(\sttwo_k)$.
  Observe also that $\rho$ does not contain any terminal state.
  Let $\nu=\sum c_k \cdot \{\sttwo'_k\}$. We have by transitivity
 $\st \reach (\rho + \nu)$,  and $\trsf_{R_i}(\st) \redd^n \trsf_{R_i}(\rho + \nu) $
 (because $\trsf_{R_i}(\rho+\nu)= \trsf_{R_i}(\mu) $).  We have 
 $\NF{\trsf_{R_i} (\rho + \nu)} = \NF{\trsf_{R_i} (\nu)} = \sum c_k $ because $\trsf_{R_i} (\nu) = \sum c_k \cdot \trsf_{R_i}(\sttwo'_k)$. We conclude by observing  that $\NF{ \rho + \nu} = \NF {\nu}= \sum c_k$.
\end{enumerate}
 
 \end{proof}

%

%

 Summing up, we now have all the elements to prove Prop.~\ref{trsf_main}.
 \begin{proof}\textbf{(Prop.~\ref{trsf_main})}
 


 $\Rightarrow$. Follows from 
Prop. \ref{trsf_reach}, by using the  construction  in Sec.~\ref{sect:construction},  Lemma~\ref{trsf_term}, and  linearity of $\trsf$.

    Assume $\stII_R \redd^* \mu$, with $\term \mu$ not empty, and that 
    the machine starts as described  in Sec.~\ref{sect:construction} (in case $\rednet$ is \textit{link, update} or \textit{test}). We observe that
 every state $\st \in \supp{\mu}$ is contained in $[\trsf_i]$ for some $i$.
We  can then   prove that for each $i$ there exists $\mu_i \in \dists{\stsI{R}}$ such that
     $\stII_{R_i} \reach   \trsf_{R_i}(\mu_i)$,
    and  such that $\nu=\sum_i p_i \cdot \mu_i$. 

 $\Leftarrow$. Follows from  Lemma \ref{aux_lemma}. 
 We examine the only non-straightforward  case.
 Assume $\R\rednet_{\mathrm{test(i,\mm)}} \{\R_0^{p_0}, \R_1^{p_1}\}$.
 We choose a run of the machine which starts as described in Sec.~\ref{sect:construction};
 we have that $\stII_{\R} \reach \sum  p_i\cdot \{\st_i\}$,
 with $\trsf_{R_i}(\st_i)=\stII_{\R_i}$ by Lemma~\ref{lem:main}.
 By hypothesis, $\stII_{\R_i}$ terminates with probability at least $q_i$; assume it does so in $n$ steps. 
   By using  Lemma \ref{aux_lemma}, we build  a derivation  $\st_i \reach \mu_i$ such that  $\trsf_{R_i}(\st_i) \redd^n  \trsf_{R_i} ( \mu_i)$ and $\NF{\mu_i}= \NF{\trsf_{R_i} ( \mu_i)}$. By  Th. \ref{th:proba_term}, $\NF{\trsf_{R_i} ( \mu_i)}\geq q_i$.
  
  Putting all together, we have that $\stII_{\R} \reach \sum p_i \cdot \mu_i $, and  $\stII_{\R}$ terminates with probability at least $\sum p_i \cdot q_i$.
 \end{proof}

\subsection{\PSIAM\ Adequacy and Deadlock-Freeness: The Interplay of Nets and Machines}\label{sec:interplay}
\newcommand{\ww}[1]{\mathtt{weight}(#1)}

We are now able to establish adequacy (Th.~\ref{adequacy psiam}) and deadlock-freeness (Th.~\ref{deadlock-free}). Both are  direct consequence of Prop.~\ref{mutual} below, which in turn follows  form Prop.~\ref{trsf_main} and the following Fact,  
   by finely exploiting   the  interplay between nets and machine.

\begin{fact}\label{base interplay} Let $R$ be a  net of conclusion $\one$ and  such that  no reduction is possible. By Th.~\ref{cutel}, $R$ has no cuts, and is therefore simply a $\onelk$ node.  On such a simple net, $\M_R$ can only terminate in a final state: no deadlock is possible.
\end{fact}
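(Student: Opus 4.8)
The plan is to first pin down the shape of such an $R$ and then run $\M_R$ explicitly. Since the unique conclusion of $R$ has type $\one$, no $\bot$, $?$ or $!$ occurs among its conclusions, so Corollary~\ref{cutel} applies: as $R$ admits no reduction it is cut-free. Next I would argue that a cut-free \SMELLY\ net whose only conclusion has type $\one$ is the single $\onelk$ node, by the standard case analysis on node sorts: a $!$/$Y$-box has a $!A$-typed principal conclusion, a \bbox\ has at least two conclusions one of which has type $\bot$, a $\tens$-node has a compound conclusion, an axiom produces a $\bot$-typed companion edge which a cut-free net with a single $\one$-conclusion cannot accommodate, and a sync node feeding (ultimately) on a $\onelk$ node would form a sync-redex, contradicting irreducibility. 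I would double-check these corner cases using acyclicity of switching paths and an edge count, but expect no surprises — this mirrors the fact that the only cut-free \MLL\ proof of $\vdash\one$ is the $\one$-rule.

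Now let $R$ consist of the single $\onelk$ node $n$, with conclusion edge $e$ of type $\one$ at depth $0$. The only formula stacks on $\one$ are $\emp$, which indicates the unit occurrence and therefore has direction $\down$, and $\delta$, which has direction $\stable$; hence $\POSI_R=\emptyset$, the initial \SIAM\ state is empty, and $\stII_\R=[(\emptyset,\emptyset,\mem{R})]$. From there the only applicable \SIAM\ rule is (iii): its side condition is met since we are at depth $0$ (the box stack of the generated token is $\emp\in\id(R)=\{\emp\}$) and $\orig(T)=\emptyset$, and it adds the token $\pp=(e,\emp,\emp)$ on the conclusion of $n$. In the \PSIAM\ this is the \emph{Link} transition, which — because the $\onelk$ node is active ($\ind{R}$ must be defined on $e$, otherwise a link reduction of $R$ would be possible) — gives $\stII_\R\red_{\mathrm{link}}[(\{\pp\},\{\orig(\pp)\mapsto\ind{R}(e)\},\mem{R})]$. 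In the resulting state no transition applies: $\pp$ lies on a net conclusion with direction $\down$ and there is no node below $e$, while rule (iii) is now blocked because $\pp\in\orig(T)$. Thus this state is terminal, and it is \emph{final} since $\{\pp\}\subseteq\POSF_R$. So the unique run of $\M_R$ reaches a final state after exactly one step, and in particular every terminal state reachable from $\stII_\R$ is final: $\M_R$ is deadlock-free.

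The only part that calls for genuine care is the structural identification of $R$ as the single $\onelk$ node; once that is in place the machine argument is a one-step computation with no nondeterminism and no choice nodes involved, so no real obstacle remains. I would take care to use the full hypothesis ``no reduction is possible'' where needed — it rules out cuts (via Corollary~\ref{cutel} and Theorem~\ref{main_lem}) as well as leftover $\onelk$- and sync-redexes — and to note that the empty initial state, while vacuously final, is not terminal, so it correctly admits the \emph{Link} step.
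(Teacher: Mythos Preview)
Your proposal is correct and follows precisely the line the paper intends: the Fact is stated without a separate proof, its content \emph{is} the argument (cut elimination via Corollary~\ref{cutel}, then the structural observation that a cut-free net of sole conclusion $\one$ is a single $\onelk$ node, then the observation that on such a net the machine cannot deadlock). You have simply spelled out what the paper leaves implicit, including the explicit one-step run of $\machine{\R}$ and the useful remark that irreducibility of the \emph{program} net forces the $\onelk$ node to be active, so the \emph{Link} transition uses $\ind{R}(e)$ rather than a fresh address.
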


\begin{proposition}[Mutual Termination]\label{net_termination}\label{mutual}
Let $\R$ be a  net of conclusion $\one$.\
The following are equivalent:
\begin{itemize}
\item[1.]
   $\stII_\R$ q-terminates;
\item[2.]

   $\R$ q-terminates;
\end{itemize}
Moreover \begin{enumerate}
\item[3.]  if $\stII_\R \reach \mu$ and $\st\in \supp{\mu}$ is terminal, then $\st$ is a \emph{final} state.
\end{enumerate}

\end{proposition}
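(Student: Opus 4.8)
The plan is to prove Proposition~\ref{mutual} by induction on the length of net reductions, using the Main Property (Prop.~\ref{trsf_main}) as the one-step case and the deadlock-freeness of the trivial net (Fact~\ref{base interplay}) as the base case. First I would recall that, since $\R$ has conclusion $\one$, the PARS $(\mathcal{N},\rednet)$ is diamond (Prop.~\ref{prop:netDiamond}), and likewise $(\sts_\R,\redsiam)$ is diamond (Lem.~\ref{lem:msiamDiamond}); hence by Th.~\ref{th:proba_term} both are uniform and have unique normal forms. So ``$\R$ $q$-terminates'' is equivalent to ``$\R$ weakly $q$-normalizes'', i.e.\ $\R\Downarrow_q$, and similarly for $\stII_\R$; this lets me move freely between the $\redd^n$-formulation and the reachability/convergence formulation.

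The heart of the argument is an induction on $n$, where $n$ bounds the number of net reduction steps needed to witness $q$-normalization of $\R$ (for the ``$\Leftarrow$'' part) or the run length after which the machine reaches degree $\ge q$ (for the ``$\Rightarrow$'' part). \textbf{Base case:} if $\R$ admits no reduction at all, then by Cut Elimination (Cor.~\ref{cutel}) the underlying net is cut-free, and since its only conclusion is $\one$ it is just a $\onelk$ node (possibly not yet linked). By Fact~\ref{base interplay}, every terminal state of $\M_\R$ reachable from $\stII_\R$ is final, which gives clause~3; and a direct inspection shows $\stII_\R$ does a single $\Mlink$ step (if needed) and reaches a final state with probability $1$, matching the fact that $\R$ is already a normal form of degree $1$ (as a program net of type $\one$, it \emph{is} terminal). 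So both $\stII_\R$ and $\R$ $1$-terminate. \textbf{Inductive step:} if $\R\rednet\sum_i p_i\cdot\{\R_i\}$, then each $\R_i$ reaches its normal forms in fewer steps, so the induction hypothesis applies to each $\R_i$: we get $\stII_{\R_i}\Downarrow_{q_i}\iff \R_i\Downarrow_{q_i}$, plus clause~3 for each $\M_{\R_i}$. Now Prop.~\ref{trsf_main} says $\stII_\R$ $q$-terminates iff each $\stII_{\R_i}$ $q_i$-terminates with $\sum_i p_i q_i = q$. Combining with the IH, $\stII_\R$ $q$-terminates iff each $\R_i$ $q_i$-terminates with $\sum_i p_i q_i = q$, and by definition of $\Downarrow$ on program nets (taking the $\redd$-reduction $\R\redd\sum_i p_i\cdot\{\R_i\}$ and then the $\redd$-sequences inside each component, legitimate because program net reduction is confluent, Cor. after Prop.~\ref{prop:netDiamond}) this last statement is exactly ``$\R$ $q$-terminates''. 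This closes the equivalence $1\Leftrightarrow 2$.

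For clause~3 in the inductive step: suppose $\stII_\R\reach\mu$ and $\st\in\supp{\mu}$ is terminal. Using the canonical run construction of Sec.~\ref{sect:construction}, I would factor the run of $\M_\R$ as: a prefix simulating the first net step $\R\rednet\sum_i p_i\{\R_i\}$ (via $\Mlink$/$\Mupdate$/$\Mtest$ as appropriate), landing in a distribution $\sum_i p_i\{\stUU_i\}$ with $\stUU_i\in[\trsf_{R_i}]$ and $\trsf_{R_i}(\stUU_i)=\stII_{\R_i}$ (Lem.~\ref{lem:main}); then the remainder of the run lives inside the machines $\M_{\R_i}$ up to transformation. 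More precisely, by Lem.~\ref{trsf_reach} and Lem.~\ref{lem:trsf}, $\stUU_i\reach\nu_i$ in $\M_\R$ implies $\stII_{\R_i}\reach\trsf_{R_i}(\nu_i)$ in $\M_{\R_i}$; if $\st$ were a \emph{deadlocked} (terminal non-final) state reachable this way, then $\trsf_{R_i}(\st)$ would be terminal in $\M_{\R_i}$ by Lem.~\ref{trsf_term}(2), and — since transformation sends deadlocks to deadlocks — it would be deadlocked, contradicting the IH's clause~3 for $\M_{\R_i}$. One must separately dispatch the situation where $\R$ itself has no reduction, which is the base case already handled. I expect the main obstacle to be the bookkeeping around the canonical run and the transformation map: one has to check that an \emph{arbitrary} reaching run $\stII_\R\reach\mu$ can be re-routed (using diamond/confluence and uniqueness of normal forms, Prop.~\ref{prop:machine_conf}) through the canonical prefix of Sec.~\ref{sect:construction} without changing which terminal states appear, so that Lemmas~\ref{lem:main}, \ref{trsf_reach}, \ref{trsf_term} actually apply to the terminal $\st$ at hand. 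Everything else is a matter of assembling results already established.
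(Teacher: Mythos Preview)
Your direction $2\Rightarrow 1$ is exactly the paper's: induction on the length $n$ of the witnessing net reduction $\R\redd^n\rho$, with Prop.~\ref{trsf_main} as the inductive step and Fact~\ref{base interplay} as the base. No issue there.

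The gap is in $1\Rightarrow 2$ (and clause~3). You announce ``induction on $n$'', where $n$ is either a net-reduction length or a machine-run length, and in the inductive step you write ``each $\R_i$ reaches its normal forms in fewer steps''. But for $1\Rightarrow 2$ you cannot take $n$ to be a \emph{net}-reduction length: that is precisely what you are trying to conclude, and with $Y$-boxes present, net reduction is not well-founded a priori. If instead $n$ is a \emph{machine}-run length, you would need that $\stII_{\R_i}$ reaches the required degree in strictly fewer machine steps than $\stII_\R$; none of the cited lemmas (Prop.~\ref{trsf_main}, Lem.~\ref{trsf_reach}, Lem.~\ref{lem:main}) give you a strict decrease of step count under $\trsf$ --- collapsing transitions can make the image shorter, but you have not argued that the canonical prefix plus the transfer actually yields $<n$ for the IH in every case. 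So as written the induction is not well-founded.

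The paper closes this gap with a different well-founded measure: it proves ``$\stII_\R\oreach\tau$ implies $\R$ terminates with probability $\ge\NF{\tau}$ and every state in $\supp\tau$ is final'' by a \emph{double} induction on the lexicographic pair $(S(\tau),W(\R))$, where $S(\tau)$ is the total number of stable tokens in the reached terminal states and $W(\R)$ is the weight of the surface cuts. The point (Fact~\ref{stable num}) is that $d$, $y$, and $u_i$ reductions strictly decrease $S$ under $\trsf$, while the remaining surface reductions keep $S$ but strictly decrease $W$ because they do not open any box. That is the missing idea in your argument: a measure on the \emph{machine side} (stable tokens) paired with a measure on the \emph{net side} (surface cut weight) that together strictly decrease along any choice of first net step. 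Your bookkeeping concerns about re-routing through the canonical prefix are real, and the paper addresses them via the $\oreach$ formulation together with Lem.~\ref{trsf_term} and Lem.~\ref{lem:trsf}, but they are secondary to getting the induction parameter right.
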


\begin{proof}

\noindent 
\textbf{(1. $\Rightarrow$ 2.) and 3.}
We prove that
\begin{center}
 if  $\stII_\R\oreach \tau$, then 
 $
 \begin{array}{l}
        \mbox{(*) $\R$ terminates with probability at least $\NF{\tau}$,}\\  
        \mbox{(**) all states in $\supp{\tau}$ are final.}
     \end{array}
 $
\end{center}

 The proof is by double induction on the lexicographically ordered pair $(S(\tau), W(\R))$, where $W(\R)$ is the weight of the cuts \emph{at the surface} of $\R$, and  $S(\tau)=\sum_{\st\in \supp{\tau}} S(\st)$ with $S(\st)$ the number of stable tokens  in $\st$ (Fact\ref{stable num}). Both parameters are finite.

We will largely use  the  following fact (immediate consequence of the definition of $\oreach$ and of results we have already proved): if $\st \oreach \tau$ in $\M_\R$ and $\st \in[\trsf_i]$, then  $\trsf_i(\st) \oreach \trsf_i (\tau)$.


\begin{itemize}
\item If $\R$ has no reduction step, then $\NF{\R}=1$, which trivially proves  (*); (**) holds by Fact~\ref{base interplay}.
\item Assume  $\R\rednet_{\not \Mtest(i,\mm)} \R'$ (observe that this is a deterministic reduction). 
  We have that  $\stII_{\R'} \oreach \trsf (\tau)$, and $\NF{\trsf (\tau)}=\NF{\tau}$. 
  By Fact \ref{stable num}, $S(\trsf(\tau)) \leq S(\tau)$.
  If $\R\rednet_d \R'$, then  $S(\trsf(\tau)) < S(\tau)$.
  Otherwise $S(\trsf(\tau)) = S(\tau)$ but $W(\R') < W(\R)$
  because the step reduces a cut at the surface,
  and \emph{does not open any box}.
  Hence by induction, $\R'$ terminates with probability at least $\NF{\trsf(\tau)} = \NF{\tau}$ (and therefore so does $\R$) and  all states in $\trsf (\tau)$ are final, from which  (**) holds by Lemma~\ref{trsf_term}.2.
\item
Assume  $\R \mathrel{\rednet_{\Mtest(i,\mm)}} \sum p_i\cdot \{\R_i\}$. From $\stII_\R\oreach \tau$, by Lemma~\ref{lem:equiv} we have that there is $\rho$ satisfying  $\stII_\R\redd^* \rho$ and  $\tau \subseteq \term\rho$.
Using the construction in Sec.~\ref{sect:construction}, we have $\stII_\R\redd^* \sum p_i\cdot \{\st_i\}$, which induces  a partition of $\tau$ in $\tau= p_0\cdot \tau_0 + p_1\cdot \tau_1$ with $\st_i \oreach \tau_i$ for each $i$. 
 We have that   $S(\tau_i) < S(\tau)$, and that $\stII_{\R_i} \oreach \trsf_i(\tau_i)$, because $\trsf_i(\st_i)$ is defined and therefore
 $\trsf_i(\sttwo)$ is defined for each state $\sttwo \in \tau_i$.
By Fact \ref{stable num}, $S(\trsf_i(\tau_i)) \leq S(\tau_i) < S(\tau)$,
 thus by induction $\R_i$ terminates with probability at least
 $\NF{\trsf_i(\tau_i)}$, and all states in $\supp{\trsf_i(\tau_i)}$ are final.
 Therefore, $\R$ terminates with probability at least
 $\sum p_i \cdot \NF{\trsf_i(\tau_i)} = \sum p_i \cdot \NF{\tau_i} = \NF{\tau}$ by Lemma~\ref{trsf_term}.3,
 and all states in $\supp{\tau}$ are final by Lemma~\ref{trsf_term}.2.
\end{itemize}

\noindent
\textbf{2. $\Rightarrow$ 1. } By hypothesis, $\R\redd^n \rho$ with $\NF{\rho} \geq q$. We prove the implication by induction on $n$.

Case $n=0$. The implication is true by Fact \ref{base interplay}.

Case $n>0$. Assume $\R\rednet \sum p_i\cdot  \R_i$. By hypothesis,  each $\R_i$ terminates with probability at least $q_i$ (with $\sum p_i \cdot q_i = q$). By induction,  each $\stII_{\R_i}$ $q_i$-terminates, and therefore  (Prop.~\ref{trsf_main}) $\stII_\R$ $q$-terminates.
\end{proof}

\subsection{\PSIAM: Full development of Fig.~\ref{fig:recursion-msiam}}

  Here we fully develop what was sketched as description of the
  \PSIAM{} execution presented in Fig.~\ref{fig:recursion-msiam}.

  In the first
  panel (A), no box are yet opened: only two tokens are generated: the
  dereliction node emits token (a), in state
  $(*.\delta,\epsilon)^{\downarrow}$, while the $\mathsf{one}$-node
  emits token (b), in state $(\epsilon,\epsilon)^{\downarrow}$, and
  attached to a fresh address of the memory. Eventually token (a)
  reaches the entrance of the $Y$-box and opens a copy: its state is
  now $(\delta,*.\epsilon)^{\leftrightarrow}$. Token (b) also flows
  down: it first reaches the $\mathsf{H}$-sync node, crosses it while
  updating the memory, crosses the $\otimes$-node and gets the new
  state $(l.\epsilon,\epsilon)^{\downarrow}$. It continues through
  $\mathsf{?d}$ with new state $(*.l.\epsilon,\epsilon)$, reaches the
  $Y$-entrance: its copy ID is $*.\epsilon$, and it has been
  opened by token (a), it can carry onto the left branch with new
  state $(l.\epsilon,*.\epsilon)$. It arrives at the $\parr$-node and
  follow the left branch with state $(\epsilon,*.\epsilon)$: it now
  hits a bot-box.

  The test-action of the memory is called, and a probabilistic
  distribution of states is generated where the left and the
  right-side of the $\bot$-box are probabilistically opened: the
  corresponding sequences of operations are represented in Panel
  (B${}_0$) for the left side, and Panel (B${}_1$) for the right side.

  In Panel (B${}_0$): the left-side of the bot-box is opened and its
  $\mathsf{one}$-node emits token (c), in state
  $(\epsilon,*.\epsilon)$: note how the box stack of this newly-created
  token is the one of the copy of the $Y$-box it sits in. In any case,
  the token also comes equipped with a fresh address from the memory,
  and carries downward. When it reaches the entrance of the $Y$-box,
  coming from the left it exits and eventually reaches the conclusion
  of the net. Note how we end up with a normal form: token (b) and (a)
  are stable at doors of boxes.
  
  In Panel (B${}_1$): the right-side of the bot-box is opened and its
  $\mathsf{one}$-node emits a token, that we can also call (c), also
  in state $(\epsilon,*.\epsilon)$. The $\mathsf{?d}$-node emits token
  (d) in state $(*.\delta,*.\epsilon)$: this token flows down and gets
  to the entrance of the $Y$-box: it stops there in state
  $(\delta,y(*,*).\epsilon)$ and opens a new copy of the
  $Y$-box. Token (c) goes down, arrives at the $Y$-entrance and enters
  this new copy (of ID $y(*,*)$). It hits the corresponding copy of
  the $\bot$-box, and the test-action of the memory spawns a new
  probabilistic distributions.
  
  We focus on panel (C${}_{10}$) on the case of the opening of the
  left-side of the $\bot$-box: there, a new token (e) is generated
  (with a fresh address attached to it) and goes down. It will exit
  the copy of ID $y(*,*)$, enter the first copy, goes over the axiom
  node, and eventually exits from this first $Y$-box-copy. It is now
  at level 0, and goes to the conclusion of the net. The machine is in
  normal form.

\section{{\fontsize{10}{10}\selectfont\PCF{}}: Adequacy}\label{app:PCFproofs}

       The translation of \PCFAM{} closures into program nets is given  in
      Fig.~\ref{fig:PCFtoNET1} and~\ref{fig:PCFtoNET2}. There, we assume
      that the translation is with respect to the fixed pair
      $(\ind{},\mm)$. The partial map $\ind{R_M}$ is depicted with a dotted
      line, and corresponds exactly to the parameter to the map
      $\net{(-)}_{\ind{},\mm}$.

      \begin{figure*}
        \centering
        \scalebox{0.9}{\includegraphics[page=1,width=\textwidth]{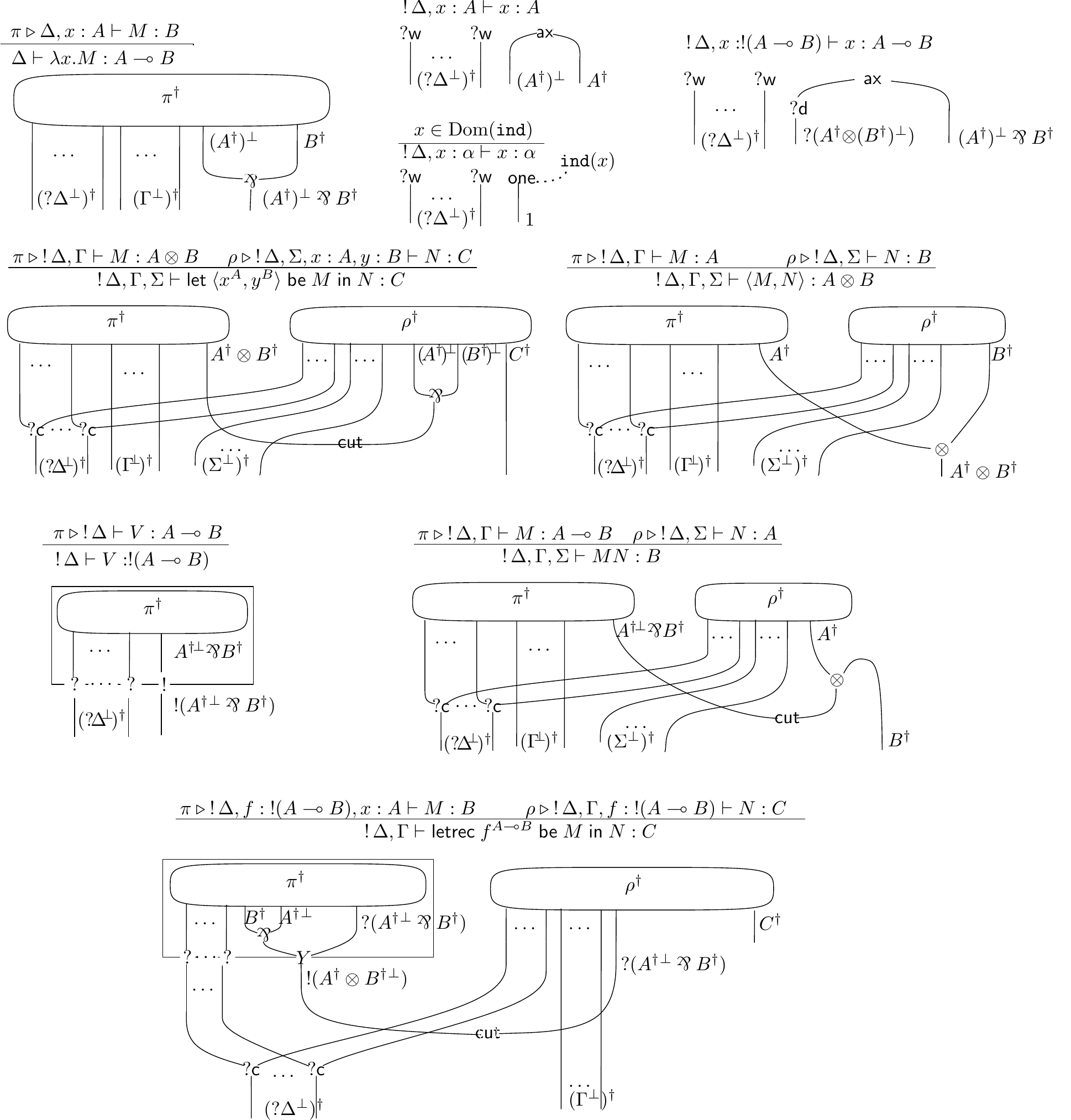}}
        \caption{Translation of \PCF\ into Nets.}
          \label{fig:PCFtoNET1}
      \end{figure*}
      
      \begin{figure*}
        \centering
        \scalebox{0.9}{\includegraphics[page=2,width=\textwidth]{interp}}
        \caption{Translation of \PCF\ into Nets.}
          \label{fig:PCFtoNET2}
      \end{figure*}

      With this definition, the well-typed closure
      $y_1:A_1,\ldots,y_m:A_m\PCFentail(M,\ind{},\mm):B$ can now simply be
      mapped to the program net $\net{M}_{\ind{},\mm}$.

      We prove here the adequacy theorem (Th.~\ref{th:adeqcbv}).

      \medskip
      \noindent
{\bf Th. \ref{th:adeqcbv} (Recall).}~~{\em
  Let $\PCFentail \PCFam:\PCFalpha$, then $\PCFam\,{\Downarrow_p}$
  if and only if $\PCFsemcbv{\PCFam}{\Downarrow_p}$.}
  
\medskip
Before proving the theorem, we first establish a few technical lemmas
which analyze the properties of the translation $\PCFsemcbv{(-)}$.
\begin{lemma}\label{lem:nf} 
  Assume that $\PCFam = (M,\ind{},\mm)$ is \PCFAM{} closure
  that $\PCFentail \PCFam:\PCFalpha$, and $\mu$ 
  a distribution of such closures. We have:
  \begin{enumerate}
  \item $\PCFam$ is a normal form \emph{if and only if}
    $\PCFsemcbv{\PCFam}$ is a normal form.
  \item $\NF{\mu} = \NF {\PCFsemcbv{\mu}}$.\qed
  \end{enumerate}
\end{lemma}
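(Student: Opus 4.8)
The plan is to reduce everything to item~(1): once it is known that $\PCFam$ is $\PCFcbv$-terminal exactly when $\PCFsemcbv{\PCFam}$ is $\rednet$-terminal, item~(2) is immediate, since $\NF{-}$ depends only on which elements of a support are terminal and $\PCFsemcbv{(-)}$ is a total function. Concretely, reading $\PCFsemcbv{\mu}$ as the push-forward of $\mu$ along $\PCFsemcbv{(-)}$,
\[
\NF{\PCFsemcbv{\mu}} \;=\; \sum_{\R\text{ terminal}}\sum_{\PCFsemcbv{\PCFam}=\R}\mu(\PCFam)
\;=\; \sum_{\PCFsemcbv{\PCFam}\text{ terminal}}\mu(\PCFam)
\;=\; \sum_{\PCFam\text{ terminal}}\mu(\PCFam) \;=\; \NF{\mu},
\]
the last-but-one equality being exactly (1). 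So all the real work is in proving~(1).

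For the ``$\Rightarrow$'' half of (1) I would first establish a canonical-forms statement: if $\PCFentail\PCFam:\PCFalpha$ and $\PCFam$ is $\PCFcbv$-terminal, then $M$ is a single variable $y$, necessarily linked (it occurs exactly once, as $\PCFalpha$ is a linear type), so $\PCFam=(y,\{y\mapsto i\},\mm)$. This is the usual progress argument by induction on $M$, using the grammar of reduction contexts $C[-]$ to expose a redex in every non-variable case: $\PCFnew$ is a $\PCFcbv_{\mathrm{link}}$-redex, a $\PCFletrec$ at the head is always a redex, and a head occurrence of $\PCFop$, $\lambda$, $\PCFpair{-,-}$, $\PCFletp{-}{-}{-}$ or $\PCFif{-}{-}{-}$ becomes a redex once its principal argument has been reduced in context to a value --- which, being itself terminal, is by the same analysis a variable or a value of the expected shape. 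Then, by definition of $\net{(-)}_{\ind{},\mm}$, the net $\PCFsemcbv{(y,\{y\mapsto i\},\mm)}$ is a single $\onelk$ node whose conclusion lies in the domain of $\ind{R}$ (it is \emph{active}), with memory $\mm$; this net has no cut and no sync node and its only $\onelk$ node is active, so none of the cases of $\rednet$ ($\mathrm{link}$, $\mathrm{update}$, $\mathrm{test}$, or a base net step) applies, and the net is $\rednet$-terminal.

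The ``$\Leftarrow$'' half is the substantial part, and I would argue by contraposition: assuming $M=C[r]$ with $r$ a $\PCFcbv$-redex, I produce a $\rednet$-redex in $\PCFsemcbv{\PCFam}$. The key structural observation is that $\net{(-)}$ sends a reduction context $C[-]$ to a net context whose hole sits at depth $0$: this is read off the grammar of $C[-]$, which never enters a $\lambda$-body, a $\PCFletrec$-body, or the two branches of an $\PCFif$ (whose condition is translated \emph{outside} the $\bot$-box). A case analysis on $r$ then exhibits, at this surface position, a $\rednet$-redex: $\PCFnew$ yields an inactive surface $\onelk$ node (a $\mathrm{link}$-redex); $\PCFop\,\PCFpair{\vec{x}}$ yields the multiplicative cut between $\net{\PCFop}$ and $\net{\PCFpair{\vec{x}}}$, and after it a sync node over active $\onelk$ nodes (an $\mathrm{update}$-redex); $\PCFif{x}{M}{N}$ yields a cut on the $\bot$-box whose other side is an active $\onelk$ node (a $\mathrm{test}$-redex); $(\lambda x.M)U$ and $\PCFletp{x,y}{\PCFpair{U,V}}{M}$ each yield a $\parr/\otimes$ multiplicative cut; and $\PCFletrec{f}{x}{M}{N}$ yields a cut between a $Y$-box and the $?$-tree formed by the occurrences of $f$ in $N$. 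In the last case the exponential step additionally requires the $Y$-box to be \emph{closed}; its auxiliary doors would be the free $\PCFbang$-typed variables of the $\PCFletrec$-subterm, but since $C[-]$ crosses no binder the free variables of $r$ coincide with those of $\PCFam$, all of which are linked and of the linear type $\PCFalpha$, so there are no such doors (and, likewise, no dangling linear conclusion can obstruct the multiplicative cuts above). Hence $\PCFsemcbv{\PCFam}$ is not $\rednet$-terminal, which completes~(1).

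I expect the main obstacle to be exactly this ``$\Leftarrow$'' case analysis: it needs the precise correspondence between the rewrite rules of Fig.~\ref{tab:rw-pcf} and the translated nets of Fig.~\ref{fig:PCFtoNET1}--\ref{fig:PCFtoNET2}, together with a careful check of the two side-conditions of net reduction --- \emph{surface reduction} and \emph{closedness} of exponential steps. Both side-conditions ultimately follow from the hypothesis $\PCFentail\PCFam:\PCFalpha$ (no free $\PCFbang$-variables) and from reduction contexts never crossing a binder, so this is precisely where those hypotheses are used; everything else in the argument is a routine traversal of the translation tables.
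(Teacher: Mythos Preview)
The paper offers no proof of this lemma (the \qed\ attached to the statement signals it is taken as routine), so there is no argument to compare against; your plan is correct and is the natural one. Item~(2) indeed reduces to~(1), the ``$\Rightarrow$'' half is the expected canonical-forms/progress argument at ground type, and your ``$\Leftarrow$'' case analysis is sound --- in particular your handling of the two side conditions (surface reduction, closedness of exponential steps) via the observation that reduction contexts never cross a binder and that $\PCFentail\PCFam:\PCFalpha$ forces all free variables to be linear of type~$\PCFalpha$ is exactly right.

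One organizational remark: your ``$\Leftarrow$'' direction is the same statement as (the $k\geq 1$ clause of) Lemma~\ref{lem:nonstuttering}.1, which the paper also states without proof. If you establish \ref{lem:nonstuttering}.1 directly by the simulation argument (one term step $\mapsto$ at least one net step), the contrapositive of (1) is then immediate and you avoid duplicating the case analysis; conversely, the case analysis you wrote is already most of the work needed for \ref{lem:nonstuttering}.1. Either ordering is fine.
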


\begin{lemma}\label{lem:nonstuttering}
  Under the hypotheses of Lemma~\ref{lem:nf}:
  \begin{enumerate}
  \item if $\PCFam \PCFcbv \mu$ then
    $\PCFsemcbv{\PCFam} \redd^k \PCFsemcbv{\mu}$, with $k \geq 1$.
  \item if $\mu \redd^*\nu$ then $\net \mu \redd^* \net \nu$.\qed
 \end{enumerate}
\end{lemma}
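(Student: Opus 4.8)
The plan is to prove Part 1 first and obtain Part 2 from it by induction, since Part 1 carries the real simulation content — the ``non-stuttering'' guarantee $k\geq 1$ — while Part 2 is merely its closure under the parallel reduction $\redd$. The backbone of Part 1 is \emph{compositionality of the translation} $\net{(-)}_{\ind{},\mm}$ with respect to the reduction contexts of Fig.~\ref{tab:rw-pcf}. First I would show, by induction on the structure of a context $C[-]$, that $\net{C[-]}_{\ind{},\mm}$ is a \emph{surface} net context (the hole sits at depth $0$) and that translation commutes with plugging, so that a surface net redex in $\net{N}$ induces a surface net redex in $\net{C[N]}$. This is exactly where the calculus is designed to cooperate with the surface-reduction constraint of Sec.~\ref{sec:nets}: each call-by-value context former — $[-]N$, $V[-]$, $\PCFpair{[-],N}$, $\PCFpair{V,[-]}$, $\PCFletp{x,y}{[-]}{N}$, $\PCFif{[-]}{M}{N}$ — translates to a context whose hole never crosses a box border.

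With compositionality in hand, Part 1 is a case analysis on the reduction rules, each time exhibiting the matching net redex at the surface and firing it. For $\PCFcbv_{\mathrm{link}}$ the redex is the $\onelk$ node translating $\PCFnew$, fired by the \textit{Link} rule of Fig.~\ref{fig:unitRed}; for $\PCFcbv_{\mathrm{update}(\PCFop)}$ it is the sync node named $\PCFop$, fired by \textit{Update}, producing the same $\Mupdate(\vec i,\PCFop,\mm)$; for $\PCFcbv_{\mathrm{test}(i)}$ it is the additive box translating the conditional, fired by \textit{Test} in a single step whose two outcomes carry exactly the probabilities and memories of $\Mtest(i,\mm)$, so the resulting distribution is literally $\net{\mu}$. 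The remaining rules are deterministic: $\beta$ and let-pair reduce a multiplicative cut in Fig.~\ref{probaRed}, followed — when the substituted value is of $\PCFbang$-type — by the box-duplication steps accounting for each occurrence of the bound variable, and letrec reduces by the $y$-unwinding of the $Y$-box. In each case at least one net rule fires (hence $k\geq 1$), and identifying substitution of a value with connecting/duplicating the translated value-net yields $\net{\PCFam}\redd^{k}\net{\mu}$.

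For Part 2 I would induct on the length of $\mu\redd^*\nu$, reducing to a single step $\mu\redd\mu_1$. Writing $\mu=\term\mu+\cont\mu$, each non-terminal $a\in\supp{\cont\mu}$ fires the unique redex determined by its evaluation context, $a\PCFcbv\dsttwo_a$, and Part 1 gives $\net{a}\redd^{k_a}\net{\dsttwo_a}$; each terminal $a\in\supp{\term\mu}$ is a \PCFAM\ normal form, so by Lemma~\ref{lem:nf}.1 its image $\net{a}$ is net-terminal and idles under $\redd$. The obstacle is that $\redd$ reduces \emph{all} non-terminal components in lockstep, while the step-counts $k_a$ differ across components: a component reaching $\net{\dsttwo_a}$ early cannot be held still if $\net{\dsttwo_a}$ is not net-terminal. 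This synchronization mismatch is the main difficulty of the lemma.

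I would resolve it by decoupling the simulation from the lockstep discipline. Using the reachability relation $\reach$ of Sec.~\ref{sect:distrpars} — which permits driving one component's block $\net{a}\redd^{k_a}\net{\dsttwo_a}$ to completion independently of the others — the asynchronous statement $\net\mu\reach\net\nu$ is immediate from Part 1 applied componentwise. To return to $\redd^*$ I would invoke that the program-net PARS is Diamond (Prop.~\ref{prop:netDiamond}), hence uniform (Th.~\ref{th:proba_term}), so that by Prop.~\ref{prop:equiv} and its remark $\sup_{\net\mu\redd^*\rho}\NF\rho=\sup_{\net\mu\reach\rho}\NF\rho$; this is exactly the degree-of-termination agreement that Part 2 contributes to the adequacy proof of Th.~\ref{th:adeqcbv}. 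The literal $\redd^*$ reduction $\net\mu\redd^*\net\nu$ is then recovered by scheduling the net reductions block-by-block and using confluence to rejoin the co-initial reductions, the delicate point — and the place I expect to spend the most care — being to verify that net-terminal components, and only those, are permitted to idle while the longer blocks complete.
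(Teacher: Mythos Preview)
The paper gives no proof of this lemma (the statement is closed with \qed), so there is nothing to compare against; I evaluate your proposal on its own merits.

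Your treatment of Part~1 is correct and is the expected argument: compositionality of $\net{(-)}$ with respect to call-by-value contexts (hole at depth~$0$), followed by a case analysis matching each rule of Fig.~\ref{tab:rw-pcf} to at least one surface net step. Nothing to add.

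For Part~2 you have correctly isolated the real difficulty: the lockstep relation $\redd$ forces every non-terminal component to fire, while Part~1 delivers per-component simulations $\{\net a\}\redd^{k_a}\net{\rho_a}$ with \emph{different} $k_a$'s. Your detour through $\reach$ is the right move and already yields everything Th.~\ref{th:adeqcbv} needs: applying Part~1 componentwise gives $\net\mu\reach\net\nu$, and then Lemma~\ref{lem:equiv} (or the remark after Prop.~\ref{prop:equiv}) together with Lemma~\ref{lem:nf}.2 gives $\NF{\nu}=\NF{\net\nu}\leq\sup_{\net\PCFam\redd^*\rho}\NF\rho$, which is exactly the inequality $p_{\it term}\leq p_{\it net}$ used in the proof of Th.~\ref{th:adeqcbv}.

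Where your argument has a gap is the last step, the claim to recover the \emph{literal} $\net\mu\redd^*\net\nu$ ``by scheduling block-by-block and using confluence to rejoin''. Confluence (Prop.~\ref{prop:netDiamond}) only produces a \emph{common reduct} of two co-initial $\redd^*$-reductions; it does not let you land on a \emph{prescribed} intermediate distribution such as $\net\nu$. Concretely, once a short component has reached its target $\net{\rho_a}$ and that target is not net-terminal, the next $\redd$ step \emph{must} push it further, overshooting $\net{\rho_a}$; neither the diamond property nor confluence provides a mechanism to pad the short component so that all components reach their targets in the \emph{same} number of steps. The tools available in the paper (Lemma~\ref{lem:conflict} and its corollaries) also assume a common step count $k$ across components, so they do not help here either. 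In short: your $\reach$-based argument is sound and sufficient for the downstream use; your attempt to upgrade it to the literal $\redd^*$ statement is not justified, and the statement as written may simply be stronger than what is actually provable (or needed). The clean fix is to weaken Part~2 to $\net\mu\reach\net\nu$ and adjust the one line in the proof of Th.~\ref{th:adeqcbv} accordingly.
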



\begin{corollary}\label{cor:ad} Under the hypotheses of Lemma \ref{lem:nf}:
\begin{enumerate}
\item  If  $\net\PCFam \rednet \rho$, then there is $\mu$ s.t. $\PCFam \PCFcbv \mu$ with  $\net\PCFam \not= \net \mu$. 
\item  If  $\net\PCFam \redd^k \rho$, then there is $\mu$ s.t.  $\PCFam \redd^* \mu$ and   $\net \PCFam \redd^m  \net \mu$, with $m\geq k$.
\end{enumerate}
\end{corollary}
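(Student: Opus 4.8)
The plan is to read Corollary~\ref{cor:ad} off from Lemma~\ref{lem:nf}(1) and Lemma~\ref{lem:nonstuttering}, the ``non-stuttering'' estimate $k\ge 1$ being the engine that turns a single $\PCFcbv$ reduction into genuine net progress. I treat the two items separately.

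For item~(1): the hypothesis $\net\PCFam\rednet\rho$ says $\net\PCFam$ is not a normal form, so by the contrapositive of Lemma~\ref{lem:nf}(1) the closure $\PCFam$ is not a normal form either, and there is a $\mu$ with $\PCFam\PCFcbv\mu$. Feeding this step into Lemma~\ref{lem:nonstuttering}(1) yields $\net\PCFam\redd^k\net\mu$ with $k\ge 1$, so it remains only to check $\net\PCFam\ne\net\mu$ for this $\mu$. First I observe that when the fired step is the test rule $\PCFcbv_{\mathrm{test}(i)}$ the reduct $\mu$ has two-point support, hence $\net\mu$ has support of size two and already differs from the point distribution $\{\net\PCFam^1\}$. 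For the remaining (deterministic) steps I argue that the translation $\net{(-)}$ maps the contracted redex to a net fragment that is irreversibly transformed — a fresh address is linked, a memory action is performed, or a surface cut is eliminated (cf.\ Fig.~\ref{fig:unitRed} and Fig.~\ref{probaRed}) — so no program net is reproduced along a reduction of length $k\ge 1$; hence $\net\mu\ne\net\PCFam$.

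For item~(2) I build an explicit $\PCFcbv$-reduction and count the induced net steps. First I set $\nu_0=\PCFam$ and let $\nu_{j+1}$ arise from $\nu_j$ by firing one $\PCFcbv$ step in every non-terminal closure of $\nu_j$, so that $\nu_j\redd\nu_{j+1}$, with equality exactly when $\nu_j$ is all-terminal. Putting $\mu:=\nu_k$ gives $\PCFam\redd^*\mu$, and Lemma~\ref{lem:nonstuttering}(2) gives $\net{\nu_j}\redd^*\net{\nu_{j+1}}$ for each $j$; chaining these produces $\net\PCFam\redd^m\net\mu$ for some total length $m$, and the remaining task is to secure $m\ge k$.

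This counting step is where I expect the real difficulty, since a single distribution-level round $\nu_j\redd\nu_{j+1}$ may advance different closures' nets by different numbers of steps, so the per-round net count cannot be read off uniformly. I resolve it by a dichotomy. If none of $\nu_0,\dots,\nu_{k-1}$ is all-terminal, then each of these $k$ rounds genuinely reduces some closure, whence (by the argument of item~(1)) $\net{\nu_j}\ne\net{\nu_{j+1}}$, so each factor $\net{\nu_j}\redd^*\net{\nu_{j+1}}$ costs at least one step and $m\ge k$. Otherwise some $\nu_T$ with $T\le k$ is all-terminal; then $\mu=\nu_k=\nu_T$ is all-terminal, so by Lemma~\ref{lem:nf}(1) $\net\mu$ is a normal form, and I pad the net reduction with idle steps $\net\mu\redd\net\mu$ up to any prescribed length $\ge k$. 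In both cases $\net\PCFam\redd^m\net\mu$ with $m\ge k$, finishing~(2). Throughout, confluence and uniqueness of normal forms for nets (Prop.~\ref{prop:netDiamond}, Th.~\ref{th:proba_term}) make the particular choices of reduction order immaterial; the two genuinely delicate points are the irreversibility used for $\net\PCFam\ne\net\mu$ in~(1) and the legitimacy of padding with normal-form stutters in~(2).
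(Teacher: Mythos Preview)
Your proof follows the same route as the paper's, which reads in full: ``(1.) Immediate consequence of Lemma~\ref{lem:nf} and \ref{lem:nonstuttering}. (2.) By induction.'' For item~(1) you correctly extract the existence of $\mu$ from Lemma~\ref{lem:nf}(1) and then invoke Lemma~\ref{lem:nonstuttering}(1) for the $k\geq 1$ bound; for item~(2) your explicit construction of $\nu_0,\ldots,\nu_k$ together with the terminal/non-terminal dichotomy is precisely the induction on $k$ unrolled, with the padding-by-idle-steps clause handling the base of the dichotomy.

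The one place where you supply substantially more than the paper is the justification of $\net\PCFam\ne\net\mu$ in item~(1) via ``irreversibility'', and its reuse at the distribution level in Case~A of item~(2). Your argument is correct in spirit but a little ad hoc: item~(1) as stated concerns a single closure, so invoking it to conclude $\net{\nu_j}\ne\net{\nu_{j+1}}$ for a genuine distribution $\nu_j$ is a small leap (in principle, distinct closures in $\supp{\nu_j}$ could have their translations rearranged by one round of reduction without changing the aggregate distribution). A tighter way to close this is to bypass the inequality altogether and read the length bound directly out of the proof of Lemma~\ref{lem:nonstuttering}(2): that lemma is obtained by assembling the per-closure reductions of Lemma~\ref{lem:nonstuttering}(1), each of which has length $\geq 1$, so the composite $\net{\nu_j}\redd^*\net{\nu_{j+1}}$ already has length $\geq 1$ whenever $\nu_j$ has a non-terminal element. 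The paper does not spell this out either, so this is not a defect relative to the paper --- just a point at which both treatments are terse.
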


\begin{proof}(1.) Immediate consequence of Lemma \ref{lem:nf} and \ref{lem:nonstuttering} (2.) By induction.
\end{proof}

We are now ready to prove Th.~\ref{th:adeqcbv}.

\begin{proof}[Proof of Th.~\ref{th:adeqcbv}]
Assume $\PCFam\Downarrow_{p_{\it term}}$ and $\net\PCFam \Downarrow_{p_{\it
    net}}$; we want to prove that $p_{\it term} = p_{\it net}$.

\noindent {$\mathbf{p_{term} \leq p_{net}}$.} It follows from the following. Assume $\PCFam\redd^*\mu$ with $\NF{\mu}=q$, then $\net\PCFam \redd^* \net \mu$ (by Lemma \ref{lem:nonstuttering}.2) and   $\NF{\net \mu}=q$ (by Lemma \ref{lem:nf}.2).

\noindent{$\mathbf{p_{term} \geq  p_{net}}$.}
We prove that if $\net\PCFam \redd^* \rho$ then it exists $\mu$ with  $\PCFam\redd^* \mu$ and   $ \NF{\mu} \geq \NF{\rho}$.
Assume $\net\PCFam \redd^k \rho$. By Corollary \ref{cor:ad}, $\PCFam\redd^* \mu$ and $\net\PCFam\redd^m \net \mu$, with $m \geq k$.   By Uniqueness of Normal Forms (Th. \ref{th:proba_term}.1) we have that  $\NF{\net \mu} \geq \NF{\rho}$. By Lemma \ref{lem:nf}, $\NF{\mu} = \NF{\net \mu}$, from which  we deduce the  statement.
\end{proof}

\end{document}